\newtheorem{theorem}{Theorem}
\newtheorem{lemma}{Lemma}
\renewcommand{\maketag@@@}[1]{\hbox{\m@th\normalsize\normalfont#1}}%
\begin{document}

\title{Covert Communication in Hybrid Microwave/mmWave A2G Systems with Transmission Mode Selection}

\author{Wenhao Zhang,~\IEEEmembership{Student Member,~IEEE,}
		Ji He,~\IEEEmembership{Member,~IEEE,}
		Yulong Shen,~\IEEEmembership{Member,~IEEE,}
		Xiaohong Jiang,~\IEEEmembership{Senior Member,~IEEE,}
 \thanks{W. Zhang and X. Jiang are with the School of System Information Science, Future University Hakodate, Hakodate, 041-8655 Japan (e-mail:hizhangwenhao@gmail.com; jiang@fun.ac.jp).}
 \thanks{J. He and Y. Shen are with the School of Computer Science and Technology, Xidian University, Xi'an, 710071 China (e-mail: garyhej1991@gmail.com;ylshen@mail.xidian.edu.cn).}
}


\maketitle

\begin{abstract}
This paper investigates the covert communication in an air-to-ground (A2G) system, where a UAV (Alice) can adopt the omnidirectional microwave (OM) or directional mmWave (DM) transmission mode to transmit covert data to a ground user (Bob) while suffering from the detection of an adversary (Willie). For both the OM and DM modes, we first conduct theoretical analysis to reveal the inherent relationship between the transmit rate/transmit power and basic covert performance metrics in terms of detection error probability (DEP), effective covert rate (ECR), and covert Shannon capacity (CSC). To facilitate the transmission mode selection at Alice, we then explore the optimization of transmit rate and transmit power for ECR/CSC maximization under the OM and DM modes, and further propose a hybrid OM/DM transmission mode which allows the UAV to adaptively select between the OM and DM modes to achieve the maximum ECR and CSC at a given location of UAV. Finally, extensive numerical results are provided to illustrate the covert performances of the concerned A2G system under different transmission modes, and demonstrate that the hybrid OM/DM transmission mode outperforms the pure OM or DM mode in terms of covert performance.
\end{abstract}

\begin{IEEEkeywords}
air to ground communication, UAV, covert communication, microwave, mmWave.
\end{IEEEkeywords}


\section{Introduction}

With the ever-increasing applications of wireless networks, an unprecedented amount of private and sensitive information is transferred over the wireless medium. However, the broadcast characteristic of wireless channels will lead to serious security risks. Although cryptography is the well-known security method, it has been demonstrated that even the most robust cryptography is at risk of being brutally cracked by quantum computing \cite{chen2016report}. As an alternative approach, physical layer security can achieve information-theoretic security by exploiting the inherent physical layer randomness of wireless channels \cite{zhou2013physical}. Recently, covert communication has been regarded as a new security paradigm to hide the communication process itself to achieve a stronger level of security \cite{yan2019low}.

Bash et al. first derived the information theory limitation called square root law (SRL) for covert communication \cite{bash2013limits}, pointing out that under the additive white Gaussian noise (AWGN) channel, Alice can only reliably and covertly send $\mathcal{O}(\sqrt{n})$ bits to Bob over $n$ channel uses. Subsequently, the fundamental limit of covert communication has been further studied under various wireless channels, including binary symmetric channels \cite{che2013reliable}, discrete memory-less channels \cite{wang2016fundamental}, and multiple input multiple output (MIMO) AWGN channels \cite{abdelaziz2017fundamental}. Following this line, extensive research works have been dedicated to the study of covert communication in various network scenarios. According to the movability of network nodes, these available works can be classified as studies for network scenarios with fixed nodes and studies for network scenarios with mobile nodes.

For the network scenarios with fixed nodes, the authors in \cite{lee2015achieving}  first proved that a nonzero covert rate is possible over the AWGN channels and Rayleigh channels regardless of the finite or infinite channel use. Based on the results, the optimal power adaptation schemes were designed in \cite{li2020optimal}. Considering the channel uncertainty, the work \cite{shahzad2020covert} reveals the fundamental difference in the design of covert transmission schemes between the case of quasi-static fading channel and the case of non-fading AWGN channel \cite{shahzad2017covert}. The jamming-assisted covert communication was also explored in different network scenarios, including the ones with full-duplex jamming receiver \cite{shahzad2018achieving}, the ones with Poisson point distributed jammer \cite{soltani2018covert}, the ones with randomly located warden \cite{zheng2019multi}, the ones with two-hop relaying \cite{bai2022covert}, and the ones with intelligent reflecting surface (IRS) technology \cite{LvWLDAC22,MamaghaniH22}. Notice that the millimeter-wave (mmWave) technology can bring high bandwidth, the covert communications with mmWave has been examined in some recent works. The authors in \cite{jamali2021covert} first considered a mmWave communication system with dual independent antenna arrays, and proposed a covert transmission scheme where one antenna array is used to form a beam towards Bob for data transmission and the other is used for jamming. The covert rate maximization in mmWave system was investigated in \cite{zhang2021joint} through the joint optimization of beam training duration, training power and transmission power. A multi-user beam training strategy was proposed in \cite{zhang2022multi} to maximize the effective covert throughput. Besides, the authors in \cite{wang2021covert} considered a full-duplex covert mmWave communications scenario, and explored the maximization of achievable covert rate by jointly optimizing the beamforming, transmit power, and jamming power.

For the mobile node network scenario, the impact of node mobility on the covert performance in the mobile Ad Hoc network was first studied in \cite{im2020mobility}. Considering the UAV-assisted covert communications, the authors in \cite{yang2021mode} discussed an adaptive switching scheme between half-duplex and full-duplex modes to enhance the covert capacity. Subsequently, for improving the covert performance of air-to-ground (A2G) systems, the authors in \cite{zhou2019joint} focused on the corresponding optimization of UAV’s trajectory and transmission power and the authors in \cite{zhou2021three} explored the corresponding optimization of UAV’S placement. Then, the joint optimization of transmission power and jamming power of full-duplex UAV was investigated in \cite{chen2021uav} to enhance the covert performance in the two-hop UAV relayed covert communication system. Regarding the joint optimization of trajectory and power for the covert communication with UAV jamming, different optimization algorithms have been developed, e.g., penalty successive convex approximation algorithm \cite{zhou2021uav}, model-driven generative adversarial network algorithm \cite{li2021md}, and geometric method \cite{rao2022optimal}. From the monitor’s point of view, the authors in \cite{hu2019optimal} developed a novel detection strategy which adopts multiple antennas with beam sweeping to confront the UAV covert communication. In \cite{wang2019secrecy}, the authors further discussed the trade-off between secrecy/covertness and efficiency of the multi-hop transmission with a UAV warden, and studied the related throughput optimization by carefully designing the parameters of the network, including the coding rates, transmit power, and required number of hops. It is expected that the mmWave and Terahertz communication can meet the demand of high data rate applications for forthcoming beyond-5G (B5G) wireless communications, the authors in \cite{zhang2020optimized} and \cite{mamaghani2022aerial} conducted some initial studies to investigate the covert performance of the UAV-aided mmWave wireless system and Terahertz wireless system, respectively.

The aforementioned works help us understand the covert communication with either microwave or mmWave technologies. Notice that these two technologies have their strength and weakness, so some recent work explored the dual-frequency antenna design to support both bands of communications \cite{xiang2017flexible,zhang2018dual,Li2018} as well the related issue of network performance analysis \cite{vuppala2016analysis,semiari2017joint}. These works indicate that network performance can be significantly enhanced by supporting both microwave and mmWave in the communication system. Thus, a natural question that arises is how to take the full advantages of microwave and mmWave to achieve a more efficient covert communication. To this end, we explore covert communication in an A2G wireless system where UAV is equipped with a dual-band antenna and proposed a hybrid omnidirectional microwave (OM) and directional mmWave (DM) transmission mode to improve covert performance of the system. The main contributions of this paper are summarized as follows.

\begin{itemize}
\item We first develop theoretical frameworks for covert performance analysis of an A2G system under both the OM and DM transmission modes, in terms of the detection error probability (DEP), the effective covert rate (ECR) and the covert Shannon capacity (CSC). Specially, the uncertainty of line-of-sight (LoS) and non-line-of-sight (NLoS) of the A2G channel due to the movement of UAV is fully considered in the analysis. These frameworks reveal the inherent relationship between the key system parameters (i.e., UAV’s location, transmission rate, transmission power) and the DEP, ECR and CSC.
\item We then explore the covert performance optimization of the concerned A2G wireless system under both the OM and DM transmission modes. Specifically, for a given position of UAV, we determine the optimal covert transmission power and transmission rate to achieve the maximal ECR and the maximal CSC while satisfying the DEP constraint. Based on the optimization results, we further propose a hybrid OM/DM transmission mode which allows the UAV to adaptively select between the OM and DM transmission modes at different locations to achieve the optimal covert transmission performance.
\item Finally, extensive numerical results are provided to illustrate the effects of the key system parameters on the overall covert performance and also to demonstrate that the proposed hybrid OM/DM transmission mode can achieve an enhanced covert performance than the pure OM or DM transmission mode in the considered A2G system.
\end{itemize}

The remainder of the paper is organized as follows. Section \ref{sec2} introduces the system model and preliminaries. Section \ref{sec3} presents the concerned transmission modes and detection strategy at Willie. The covert performance modeling of the OM and DM transmission modes is provided in \ref{TA_Performance} and Section \ref{MA_Performance}, respectively. Section \ref{sec_results} shows the covert performance optimization and the hybrid OM/DM transmission mode. The numerical results are provided in Section  \ref{sec_num_results}. Finally, we conclude in Section \ref{sec_conclusion}. The notations used in this paper are summarized in Table \ref{notations table}.

\renewcommand{\arraystretch}{1.4}
\begin{table}[th]
	\centering
	\caption{List of Notations}
	\label{notations table}
	\begin{tabular}{|m{0.13\textwidth}|m{0.31\textwidth}|}
		\hline
		\textbf{Notation} & \textbf{Description} \\
		\hline
		\hline
		$P_{a}$, $d_{ij}$, $h_{i}$ & Transmission power of Alice, Distance between nodes $i$ and $j$, height of node $i$ \\
		\hline
		$d_{aw}^{\min}$, $d_{aw}^{\max}$ & Minimum distance and maximum distance between Alice and Willie\\
		\hline
		$\mathbb{A}\in \{M,S\}$ & Main lobe and side lobe, respectively\\
		\hline
		$\mathbb{B}\in \{L,N\}$ & LoS channel and NLoS channel, respectively\\
		\hline
		$\mathbb{C}\in \{o,d\}$ & OM transmission mode and DM transmission mode, respectively\\
		\hline
		$G_{ij}, G_{i}^{M}, G_{i}^{S}$ & Total antenna gains of link $i\to j$, main lobe gain, side lobe gain\\
		\hline
		$P_{ij}^{L}, P_{ij}^{N}$ & Probability that the channel mode of link $i\to j$ is LoS link and NLoS link, respectively\\
		\hline
		$H_{ij}^{\mathbb{C},\mathbb{B}}, h_{ij}^{\mathbb{C},\mathbb{B}},L_{ij}^{\mathbb{C},\mathbb{B}}$ & Channel coefficient, small scaling fading coefficient and path loss of link i-j, respectively\\
		\hline
		$\alpha_{\mathbb{B}}, \beta _{\mathbb{B}}, S_{\mathbb{B}}, k_{\mathbb{B}}$& Path-loss exponent, path-loss coefficient, Nakagami fading parameter, Rician factor of $\mathbb{B}$, respectively\\
		\hline
		$\theta_{i}$, $\theta_{i}^{h}$, $\theta_{i}^{e/d}$, $\varphi_{i}$ & Elevation angle of UAV relative to node $i$, half-power beamwidth for the azimuth orientation in the horizontal direction of node $i$, half-power beamwidth for the elevation/depression angles of node $i$, azimuth orientation in the vertical direction of node $i$ relative to UAV, respectively\\
		\hline
		$\sigma _{b}^{2}$, $\sigma _{w}^{2}$, $\hat{\sigma} _{n}^{2}$ & Noise power of Bob, noise power of Willie, nominal noise power of Willie\\
		\hline
		$\epsilon$, $\gamma_{th}$, $R_{b}$ & Covertness constraint, outage probability threshold, target rate, respectively.\\
        \hline
	\end{tabular}
\end{table}


\section{System Model and Preliminaries} \label{sec2}

\subsection{System Model}

\begin{figure} [t]
    \centering
    \includegraphics[width=0.45\textwidth]{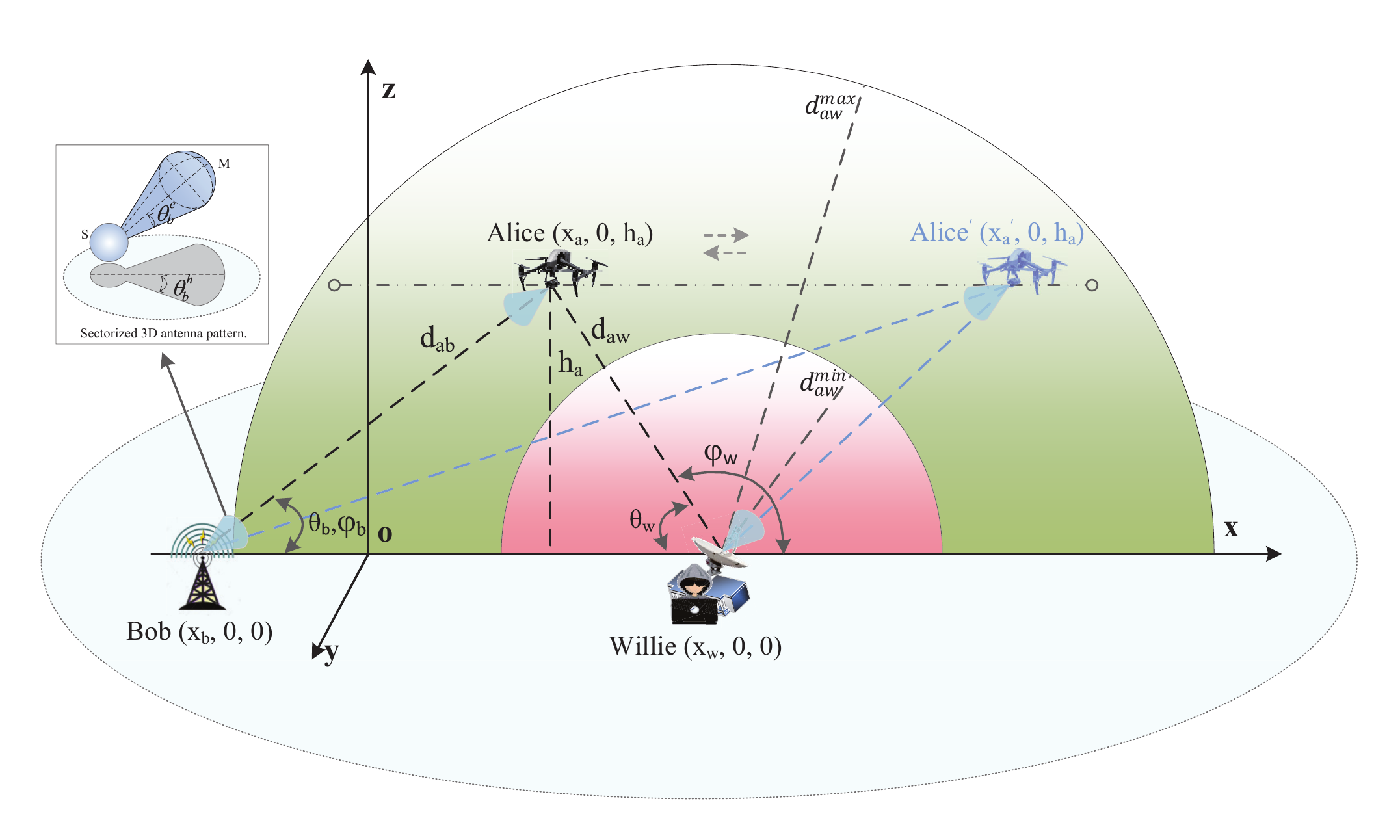}
    \caption{Illustration of the concerned A2G system model.}
    \label{fig:system_model}
    \vspace{-0.3cm}
\end{figure}

We consider an A2G covert communication scenario shown in Fig.~\ref{fig:system_model}, where an UAV (Alice, $a$) executes surveillance task (e.g., imaging) over a geographic area and urgently transmits critical information back to the intended user (Bob, $b$), but suffering from the detection of a warden (Willie, $w$). Alice (resp., Bob and Willie) is configured with the integrated antennas of the OM and DM as \cite{xiang2017flexible,zhang2018dual,Li2018}, such that it can send (resp., receive) both microwave signals and mmWave signals. Alice employs a fixed transmission power $P_a$ subject to maximum power constraint $P_{max}$. The UAV cyclically flies with a horizontal trajectory around the surveillance area at a fixed height $h_{a}$. We let $d_{ij}$ denote the distance between node $i$ and node $j$ for $i ,j \in\{a,b,w\}$. To prevent the UAV from being detected by vision or radar, there exists a minimum distance constraint $d_{aw}^{\min}$ between Alice and Willie. Besides, due to the finite resolution of the UAV's imaging equipment, a maximum distance constraint $d_{aw}^{\max}$ is essential.

\subsection{Antenna Gains}

Because UAV moves in 3D space, we adopt the 3D sectorized antenna pattern as \cite{venugopal2016device,zhu2018secrecy}, which is essential to model the links from air to ground. As shown in Fig.~\ref{fig:system_model}, $\theta_{i}^{h}$ is the half-power beamwidth for the azimuth orientation in the horizontal direction, $\theta_{i}^{e}$ (resp. $\theta_{i}^{d}$) is the half-power beamwidth for the elevation (resp. depression) angles in the ground or air node $i$. Considering the antenna gain $G_{i}^{M}$ within the half-power beamwidth (i.e., main lobe) and gain $G_{i}^{S}$ outside the half-power beamwidth (i.e., side lobe), the total directivity gain of DM for link $i\to j$ is determined as $G_{ij}=G_{i}^{\mathbb{A}}G_{j}^{\mathbb{A}}$, where $\mathbb{A}\in \{M,S\}$. To improve the signal quality at Bob, Alice dynamically adjusts the steering orientation of the antenna array toward Bob, such that she can maximize antenna array gain $G_{a}^{M}$. Thus, the directional antenna gain of Alice$\to$ Bob link is determined as $G_{ab}=G_{a}^{M}G_{b}^{\mathbb{A}}$. Note that in the process of Alice's movement, Willie may be located in the half-power beamwidth direction of the antenna array of Alice$\to$ Bob link (i.e., $\pi-\varphi_{w}-\varphi_{b} \leq \theta_{a}^{d}$) and the antenna array gain of Alice for Willie is the main lobe gain $G_{a}^{M}$, otherwise, the side lobe gain $G_{a}^{S}$. Thus, the antenna gain of Alice$\to$ Willie link is determined as
\begin{equation} \label{neq1_gaw}
\begin{aligned}
 G_{aw}=G_{w}^{\mathbb{A}} \times\begin{cases}
  G_{a}^{M},& \pi-\varphi_{w}-\varphi_{b} \leq \theta_{a}^{d},\\
 G_{a}^{S},& otherwise.
\end{cases}
\end{aligned}
\end{equation}
Considering an actual situation, we adopt a uniform planar square array (UPA) to model the sectorized pattern at node $j$ (here, $j\in\{b,w\}$). The directional antenna gain and the associated probability can be approximated as
\begin{equation} \label{neq1}
\begin{aligned}
 G_{j}^{\mathbb{A}}= \begin{cases}
 G_{j}^{M}=\mathcal{N}_j,& \!P_{j}^{M}\!=  \frac{\theta_{j}^{h}}{2\pi}\times  \frac{\theta_{j}^{e}}{\pi},\\
 G_{j}^{S}=\frac{\sqrt{\mathcal{N}_j}-\frac{\sqrt{3}}{2\pi}\mathcal{N}_j\sin{\left( \frac{3\pi}{2\sqrt{\mathcal{N}_j}}\right)}}{\sqrt{\mathcal{N}_j}-\frac{\sqrt{3}}{2\pi}\sin{\left( \frac{3\pi}{2\sqrt{\mathcal{N}_j}}\right)}} ,& P_{j}^{S}= 1- \frac{\theta_{j}^{h}}{2\pi}\times \! \frac{\theta_{j}^{e}}{\pi},
\end{cases}
\end{aligned}
\end{equation}
where $\mathcal{N}_j$ denotes the number of antenna elements at the node $j$, $P_{j}^{\mathbb{A}}$ denotes the associated probability.


\subsection{Propagation Model}

To characterize the A2G wireless channel feature, the channel from UAV to Bob (Willie) is modeled as the combination of NLoS and LoS channels. Specifically, we let $P_{aj}^{\mathbb{B}}$ ($\mathbb{B}\in \{L,N\}, j \in \{b, w\}$) denotes the probability the channel from the UAV to node $j$ is the LoS or NLoS, where $P_{aj}^{L}=(1+\sigma\exp{(-f[\theta_{j}-\sigma])})^{-1}$ and $P_{aj}^{N}\triangleq 1-P_{aj}^{L}$ as \cite{andrews2016modeling,al2014optimal}, where $\theta_{j}=\frac{180}{\pi}\arcsin(\frac{h_{a}}{d_{aj}})$ is the degree of the elevation angle for node $j$ relative to UAV, $d_{aj} =\sqrt{(x_{a}-x_{j})^2+h^2_a}$ is the distance between UAV and node $j$, $\sigma$ and $f$ are the S-curve parameters that depend on the communication environment. In this work, we consider both the large scale fading and small scale fading. Thus, the channel coefficient for link $a\to j$ is denoted as $H_{aj}^{\mathbb{C},\mathbb{B}}=h^{\mathbb{C},\mathbb{B}}_{aj}\sqrt{L_{aj}^{\mathbb{C},\mathbb{B}}}$, where $\mathbb{C} \in \{ o,m\}$ denotes adopting the OM or DM, $h^{\mathbb{C},\mathbb{B}}_{aj}$ and $L_{aj}^{{\mathbb{C},\mathbb{B}}}$ denote the channel fading coefficient and path loss, respectively. We have $L_{aj}^{\mathbb{C},\mathbb{B}}=\beta^{\mathbb{C}}_{\mathbb{B}}d_{aj}^{-\alpha^{\mathbb{C}}_{\mathbb{B}}}$, where $\beta^{\mathbb{C}}_{\mathbb{B}}$ and $\alpha^{\mathbb{C}}_{\mathbb{B}}$ are the constant coefficient and path-loss exponent, respectively.

\subsubsection{Small scaling fading of Microwave channel}
Due to the possible combination of LoS and multipath scatterers of NLoS, the microwave channels are modeled as Rician fading to characterize the propagation effect. Thus, the fading coefficient $h^{o,\mathbb{B}}_{ij}$ can be described as 
\begin{equation} \label{Eq:rician_h}
\begin{aligned}
h^{o,\mathbb{B}}_{ij}=\sqrt{\frac{k_{\mathbb{B}}}{k_{\mathbb{B}}+1}}\hat{h}_{ij}+\sqrt{\frac{1}{k_{\mathbb{B}}+1}}\tilde{h}_{ij},
\end{aligned}
\end{equation}
where $|\hat{h}_{ij}|=1$ and $\tilde{h}_{ij}\sim\mathcal{CN}(0,1)$ denote the components of LoS channel and NLoS Rayleigh fading, respectively, $k_{\mathbb{B}}$ denotes the Rician factor which is determined as \cite{shimamoto2006channel,you20193d}
\begin{equation} \label{Eq:rician_k}
\begin{aligned}
k_{\mathbb{B}}(\theta _{j})=\begin{cases} \eta_{1} \exp{(\eta_{2}\theta _{j})},&
 LoS, \\0 ,& NLoS, \end{cases}
\end{aligned}
\end{equation}
$\eta_{1}$ and $\eta_{2}$ are constant coefficients with $\eta_{1}=k_0$ and $\eta_{2}=\frac{2}{\pi} \ln{(\frac{k_\frac{\pi}{2}}{k_0})}$, $k_0=k_{\mathbb{B}}(0)$ and $k_\frac{\pi}{2}=k_{\mathbb{B}}(\frac{\pi}{2})$ are depending on the specific environment. Thus, the fading coefficient $|h^{o,\mathbb{B}}_{ij}|^{2}$ follows a non-central chi-squared distribution with two degrees of freedom and $\mathbb{E}[|h^{o,\mathbb{B}}_{ij}|^{2}]=1$, the probability density function (PDF) of $|h^{o,\mathbb{B}}_{ij}|^{2}$ is expressed as \cite{simon2005digital}
\begin{equation} \label{Eq:rician_pdf}
\begin{aligned}
f_{|h^{o,\mathbb{B}}_{ab}|^{2}}(x)\!=\!(k_{\mathbb{B}}\!+\!1)e^{-k_{\mathbb{B}}-(k_{\mathbb{B}}+1)x}I_{0}(2\sqrt{k_{\mathbb{B}}(k_{\mathbb{B}}\!+\!1)x}),
\end{aligned}
\end{equation}
where $I_{0}(\cdot)$ is the modified Bessel function of the first kind and zeroth order, and the corresponding CDF of $|h^{o,\mathbb{B}}_{ij}|^{2}$ can be expressed as \cite{simon2005digital,you20193d}
\begin{equation} \label{Eq:rician_cdf}
\begin{aligned}
F_{|h^{o,\mathbb{B}}_{ij}|^{2}}(x)=1\!-Q_{1}\!\!\left (\! \sqrt{2k_{\mathbb{B}}},\sqrt{2(k_{\mathbb{B}}+1)x} \right ),
\end{aligned}
\end{equation}
where $Q_{1}(a,b)$ is the standard Marcum-Q function which is
\begin{equation} \label{Eq:rician_q}
\begin{aligned}
Q_{1}(a,b)=\int_{b}^{\infty}xI_{0}(ax)\exp{\left(-\frac{x^{2}+a^{2}}{2}\right)}dx.
\end{aligned}
\end{equation}
$I_{0}(\cdot)$ is mentioned above.

\subsubsection{Small scaling fading of mmWave channel}
mmWave channels are characterized by the Nakagami-m fading with shape parameter $S_{\mathbb{B}} \geq 1/2$ and scale parameter $\Omega =\mathbb{E}[|h^{d,\mathbb{B}}_{ij}|^{2}]=1$ as \cite{bai2014coverage,andrews2016modeling,zhu2018secrecy}. Thus, the fading coefficient $|{h}^{d,\mathbb{B}}_{ij}|^{2}$ follows a normalized gamma distribution with shape and scale parameters of $S_{\mathbb{B}}$ and $1/S_{\mathbb{B}}$. Therefore, the PDF is given by \cite{simon2005digital,jamali2021covert}
\begin{equation} \label{eq_mmwave_pdf}
\begin{aligned}
f_{|h^{d,\mathcal{B}}_{ab}|^{2}}(x)=S_{\mathcal{B}}^{S_{\mathcal{B}}}x^{S_{\mathcal{B}}-1}e^{-S_{\mathcal{B}}x}/\Gamma(S_{\mathcal{B}}).
\end{aligned}
\end{equation}
According to \cite[Lemma 6]{bai2014coverage} and Alzer's lemma \cite{alzer1997some}, the CDF of a normalized gamma RV is tightly approximated by $[1-\exp{(-\xi _{\mathbb{B}}x)}]^{S_{\mathbb{B} }}$ where $\xi _{\mathbb{B}}=S_{\mathbb{B} }(S_{\mathbb{B} }!)^{-1/S_{\mathbb{B}}}$. By applying the binomial theorem assuming $S_{\mathbb{B} }$ is an integer \cite{bai2014coverage}, the CDF of $|h^{d,\mathbb{B}}_{ij}|^{2}$ is given by
\begin{equation} \label{eq10}
\begin{aligned}
F_{|h^{d,\mathbb{B}}_{ij}|^{2}}(x)=\sum_{r=0}^{S_{\mathbb{B} }}\binom{S_{\mathbb{B} }}{r}(-1)^{r}e^{-r\xi _{\mathbb{B}}x}.
\end{aligned}
\end{equation}

\subsection{Noise Uncertainty Model}\label{sec_noise_uncertainty}

In practice, the background noise is uncertain due to its complex composition (including thermal noise, quantization noise, imperfect filters, ambient wireless signals, etc.), dynamic environment and calibration error. Therefore, it is practical to use noise uncertainty to realize covert communication in mobile communication scenarios. In this work, we adopt the typical bounded noise uncertainty model \cite{shellhammer2006performance}, where the exact noise power of node $i$ $\sigma _{i}^{2}$ lies in a finite range around the nominal noise power $\hat{\sigma} _{n}^{2}$, then, the log-uniform distribution of $\sigma _{i}^{2}$ for the bounded uncertainty model is given by
\begin{align} \label{Eq:noise_un_pdf}
f_{\sigma _{i}^{2}}(x)=\begin{cases}
  \frac{1}{2\ln{(\rho)}x}, & \frac{1}{\rho}\hat{\sigma} _{n}^{2}\leq \sigma _{i}^{2}\leq \rho\hat{\sigma}_{n}^{2},\\
 0, & otherwise,
\end{cases}
\end{align}
where $\rho$ is the parameter that quantifies the level of the uncertainty.

\subsection{Definitions}\label{param_definitions}
Some basic definitions involved in this work are as follows.

\textbf{Detection Error Probability (DEP) $\bm{P_{ew}}$:} Let $H_{0}$ (the null hypothesis and $H_{1}$ (the alternative hypothesis) denote that Alice does not transmit and transmits covert information, respectively, let $D_{0}$ and $D_{1}$ respectively indicate the judgment of Willie that UAV is transmitting or not transmitting covert information. The DEP $P_{ew}$ is defined as the probability that Willie makes a wrong decision on whether or not UAV is transmitting covert messages, which is expressed as
\begin{equation}
P_{ew}=P_{FA}+P_{MD}\label{Eq:Pew},
\end{equation}
where $P_{FA}=\mathbb{P}(D_{1}|H_{0})$ is the false alarm probability that Willie is in favor of $H_{1}$ while $H_{0}$ is true, $P_{MD}=\mathbb{P}(D_{0}|H_{1})$ is the missed detection probability that Willie is in favor of $H_{0}$ while $H_{1}$ is true.

\textbf{Outage Probability $\bm{P_{out}}$:} The outage probability $\bm{P_{out}}$ is defined as the probability that the channel capacity $C_{ab}=W\log_2(1+\gamma_{ab})$ cannot support a given target transmission rate $R_{b}$, i.e., $P_{out}=\mathbb{P}(C_{ab}<R_{b})$ which can also be denoted as $P_{out}=\mathbb{P}(\gamma_{ab}<\gamma_{th})$ where $W$ is the bandwidth of $a\to b$ link, $\gamma_{ab}$ is the SNR, and $\gamma_{th}$ is the threshold of SNR which can be calculated as $\gamma_{th}=2^{\frac{R_{b}}{W}}-1$.

\textbf{Effective Covert Rate (ECR) $\bm{R_{ab}}$:} The effective covert rate $R_{ab}$ is defined the average \emph{successfully} transmitted amount of information, which can be denoted as $R_{ab}= R_{b} \times (1-P_{out})$, where $R_{b}$ is the target covert transmission rate and $P_{out}$ is the outage probability.

\textbf{Covert Shannon (Ergodic) Capacity (CSC) $\bm{C_{ab}}$:} The covert Shannon capacity is defined as the maximum value of the time average rate of messages delivered from transmitter to the destination, which is expressed as $C_{ab}=\mathbb{E}[W\log_2(1+\gamma_{ab})]$, where $W$ is the bandwidth and $\gamma_{ab}$ is the SNR.


\section{Transmission Mode and Detection Strategy} \label{sec3}

In this section, we first introduce two transmission modes of UAV (OM transmission mode and DM transmission mode), and then present the detection strategy at Willie.

\subsection{OM Transmission Mode}
When UAV adopts the OM transmission mode, the signals $y_{b}(k)$ ($k \in\{ 1, 2, \cdots, N\}$) received by Bob at $k$-th channel use is given by
\begin{equation} \label{2eq2_bob}
\begin{aligned}
y_{b}^{o}(k) =\sqrt{P_{a}}H_{ab}^{o,\mathbb{B}}x_{c}(k)+n_{b}(k),
\end{aligned}
\end{equation}
and the signals $y_{w}^{o}(k)$ received by Willie is given by
\begin{equation} \label{2eq2}
\begin{aligned}
y_{w}^{o}(k) =\sqrt{P_{a}}H_{aw}^{o,\mathbb{B}}x_{c}(k)+n_{w}(k).
\end{aligned}
\end{equation}
where $x_{c}$ is the desired signal, which following a zero-mean Gaussian distribution, i.e., $\mathbb{E}[|x_{c}(k)|^{2}]=1$. Besides, $n_{b}(k)$ and $n_{w}(k)$ are the received noise power at Bob and Willie, respectively, and the PDF is given by (\ref{Eq:noise_un_pdf}).

\subsection{DM Transmission Mode}
When UAV adopts the DM transmission mode, the signals $y_{b}^{d}(k)$ received by Bob at the $k$-th ($k \in \{1, 2, \cdots, N\}$) channel use is given by
\begin{equation} \label{eq2_bob}
\begin{aligned}
y_{b}^{d}(k) =\sqrt{P_{a}G_{ab}}H^{d,\mathbb{B}}_{ab}x_{c}(k)+n_{b}(k),
\end{aligned}
\end{equation}
and the signals $y_{w}^{d}(k)$ received by Willie is given by
\begin{equation} \label{eq2}
\begin{aligned}
y_{w}^{d}(k) =\sqrt{P_{a}G_{aw}}H^{d,\mathbb{B}}_{aw}x_{c}(k)+n_{w}(k),
\end{aligned}
\end{equation}
The desired signal $x_{c}$ and noise power obey the same distribution as the OM transmission mode.

\subsection{Detection Strategy at Willie}
Based on the observations over all time slots, Willie attempts to determine whether Alice conducts transmission or not. According to \cite{shahzad2017covert}, the optimal decision rule at Willie to minimize the detection error probability is determined as
\begin{equation} \label{neq2}
\begin{aligned}
T_{w}=\lim_{N\rightarrow \infty}\frac{P_{tol}}{N}\mathop{\gtrless}\limits_{H_0}^{H_1} \tau,
\end{aligned}
\end{equation}
\noindent where $P_{tol}=\textstyle\sum_{k=1}^{N}|y_{w}(k)|^{2}$ is the total power received by Willie in a given block, $\tau$ is the decision threshold, and $N$ is the total number of channel uses in a block. According to (\ref{2eq2}), when Alice adopts the OM transmission mode, the received signal $y_{w}^{o}(k)$ at Willie is given by
	\begin{align} \label{Eq:Tw_TA}
	T_{w}^{o}\!=\!\begin{cases}
	 \sigma _{w}^{2},\!\!\! &\text{$H_0$} \\
	 P_{\!a}L_{aw}^{o,\mathbb{B}}|h^{o,\mathbb{B}}_{aw}|^{2}\!+\sigma _{w}^{2}.\!\!\!&\text{$H_1$}
	\end{cases}
	\end{align}
Similarly, according to (\ref{eq2}), when Alice adopts the DM transmission mode, the received signal $y_{w}^{d}(n)$ received at Willie is given by
	\begin{align} \label{Eq:Tw}
	T_{w}^{d}\!=\!\!\begin{cases}
	  \sigma _{w}^{2},\!\! &\text{$H_0$} \\
	  P_{\!a}G_{\!aw}L_{aw}^{d,\mathbb{B}}|h^{d,\mathbb{B}}_{aw}|^{2}\!+\sigma_{w}^{2}. \!\!&\text{$H_1$}
	\end{cases}
	\end{align}

\section{Covert Performance Analysis under the OM Transmission Mode}\label{TA_Performance}

In this section, we derive the optimal detection threshold and the corresponding minimum DEP at Willie, as well as the expected minimum DEP from Alice’s perspective, lastly, the ECR and CSC under the OM transmission mode.

\subsection{Detection Error Probability}
To determine the DEP at Willie, we first analyze the false alarm probability and the missed detection probability. Based on (\ref{neq2}) and (\ref{Eq:Tw_TA}), the false alarm probability $P_{FA}^{o}$ is given by
\begin{align} \label{2eq5_NJ}
P_{FA}^{o} =\mathbb{P}(T_{w}^{o}>\tau |H_{0})=\mathbb{P}\left(\sigma _{w}^{2}\frac{\chi_{2N}^{2}}{N}>\tau\right),
\end{align}
where $\chi_{2N}^{2}$ denotes a random value that follows chi-squared distribution with $2N$ degrees of freedom. From the strong law of large numbers, we know that $\chi_{2N}^{2}/N$ converges to $1$. Based on Lebesgue's Dominated Convergence Theorem \cite{browder2012mathematical}, $P_{FA}^{o}$ can be derived as
\begin{align} \label{eq5_NJ}
P_{FA}^{o} & =\mathbb{P}(\sigma _{w}^{2}>\tau ) \notag \\
& =\begin{cases}
   1, & \tau < \frac{\hat{\sigma}_{n}^{2}}{\rho}, \\
  1-\frac{\ln(\rho\tau)-\ln(\hat{\sigma}_{n}^{2})}{2\ln(\rho)}, &  \frac{\hat{\sigma}_{n}^{2}}{\rho}\leq \tau \leq \rho\hat{\sigma}_{n}^{2}, \\
   0, & \tau > \rho\hat{\sigma}_{n}^{2}.
     \end{cases}
\end{align}
Similarly, the missed detection probability can be derived as
\begin{align} \label{eq6_NJ}
	P_{MD}^{o}\!&=\mathbb{P}(T_{w}^{o}>\tau |H_{1}) =\mathbb{P}(k_{a}^{o}+\sigma _{w}^{2}<\tau ) \notag\\
     &=\begin{cases}
				1, & \tau \!>\!k_{a}^{o}\!+\!\rho\hat{\sigma}_{n}^{2},\\
				\frac{\ln(\rho(\tau -k_{a}^{o}))-\ln(\hat{\sigma}_{n}^{2})}{2\ln(\rho)},& k_{a}^{o}\!+\!\frac{\hat{\sigma}_{n}^{2}}{\rho}\leq \tau \!\leq \! k_{a}^{o}\!+\!\rho\hat{\sigma}_{n}^{2},\\
				0 , & \tau\!<\! k_{a}^{o}\!+\!\frac{\hat{\sigma}_{n}^{2} }{\rho},
				\end{cases}
\end{align}
where $k_{a}^{o}=P_{a}L_{aw}^{o,\mathbb{B}}|{h}^{o,\mathbb{B}}_{aw}|^{2}$. According to (\ref{eq5_NJ}) and (\ref{eq6_NJ}), we can analyze the optimal detection threshold and the minimization DEP for Willie under OM transmission mode as follows.

\begin{lemma} \label{2th1_NJ}
When Alice adopts the OM transmission mode, the optimal threshold $\tau^{*}$ for Willie's detector is in the interval
	\begin{equation} \label{2eq3_NJ}
	\begin{aligned}
	\tau^{*}\in\begin{cases}
	 [\rho\hat{\sigma}_{n}^{2},k_{a}^{o}+\frac{\hat{\sigma}_{n}^{2}}{\rho}],& \rho\hat{\sigma}_{n}^{2}< k_{a}^{o}+\frac{\hat{\sigma}_{n}^{2}}{\rho},\\
	 k_{a}^{o}+\frac{\hat{\sigma}_{n}^{2}}{\rho} , & \rho\hat{\sigma}_{n}^{2}\geq k_{a}^{o}+\frac{\hat{\sigma}_{n}^{2}}{\rho},
	\end{cases}
	\end{aligned}
	\end{equation}
and the corresponding minimum DEP $P_{ew}^{*,o}$ is given as
	\begin{equation} \label{2eq4_NJ}
	\begin{aligned}
	P_{ew}^{*,o}=\begin{cases}
	0 , & \rho\hat{\sigma}_{n}^{2}< k_{a}^{o}+\frac{\hat{\sigma}_{n}^{2}}{\rho}, \\
	1-\frac{\ln(\rho k_{a}^{o}+\hat{\sigma}_{n}^{2})-\ln( \hat{\sigma}_{n}^{2})}{2\ln(\rho)} , & \rho\hat{\sigma}_{n}^{2} \geq k_{a}^{o}+\frac{\hat{\sigma}_{n}^{2}}{\rho},
	\end{cases}
	\end{aligned}
	\end{equation}
where $k_{a}^{o}=P_{a}L_{aw}^{o,\mathbb{B}}|{h}^{o,\mathbb{B}}_{aw}|^{2}$, $\rho$ and $\hat{\sigma}_{n}^{2}$ are the parameter that quantifies the size of the uncertainty and nominal noise power, respectively, defined in \ref{sec_noise_uncertainty}.
\end{lemma}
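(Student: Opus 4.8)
The plan is to minimize the total detection error probability $P_{ew}^{o}(\tau)=P_{FA}^{o}(\tau)+P_{MD}^{o}(\tau)$ directly over the decision threshold $\tau$, using the explicit piecewise forms already derived in (\ref{eq5_NJ}) and (\ref{eq6_NJ}). I would first record their two qualitative features: $P_{FA}^{o}$ is non-increasing in $\tau$, falling from $1$ to $0$ over $[\hat{\sigma}_{n}^{2}/\rho,\rho\hat{\sigma}_{n}^{2}]$, while $P_{MD}^{o}$ is non-decreasing in $\tau$, rising from $0$ to $1$ over $[k_{a}^{o}+\hat{\sigma}_{n}^{2}/\rho,k_{a}^{o}+\rho\hat{\sigma}_{n}^{2}]$. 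Since the missed-detection ramp is exactly the false-alarm ramp shifted right by $k_{a}^{o}\ge 0$, everything hinges on whether the two sloped intervals overlap, which is precisely the dichotomy between $\rho\hat{\sigma}_{n}^{2}<k_{a}^{o}+\hat{\sigma}_{n}^{2}/\rho$ and $\rho\hat{\sigma}_{n}^{2}\ge k_{a}^{o}+\hat{\sigma}_{n}^{2}/\rho$ appearing in the statement.

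In the non-overlapping case $\rho\hat{\sigma}_{n}^{2}<k_{a}^{o}+\hat{\sigma}_{n}^{2}/\rho$, I would observe that for every $\tau$ in the gap $[\rho\hat{\sigma}_{n}^{2},k_{a}^{o}+\hat{\sigma}_{n}^{2}/\rho]$ one has $P_{FA}^{o}=0$ (third branch of (\ref{eq5_NJ})) and $P_{MD}^{o}=0$ (third branch of (\ref{eq6_NJ})), so $P_{ew}^{o}=0$; at the two endpoints the sloped branches also vanish through the cancellations $\ln(\rho\cdot\rho\hat{\sigma}_{n}^{2})-\ln\hat{\sigma}_{n}^{2}=2\ln\rho$ and $\ln(\rho\cdot\hat{\sigma}_{n}^{2}/\rho)-\ln\hat{\sigma}_{n}^{2}=0$, so the whole closed interval attains $0$. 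As $P_{ew}^{o}\ge 0$ everywhere, this is the global minimum, which settles both claims of the first branch.

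The substantive case is the overlapping one, $\rho\hat{\sigma}_{n}^{2}\ge k_{a}^{o}+\hat{\sigma}_{n}^{2}/\rho$. Here I would verify the ordering $\hat{\sigma}_{n}^{2}/\rho\le k_{a}^{o}+\hat{\sigma}_{n}^{2}/\rho\le\rho\hat{\sigma}_{n}^{2}\le k_{a}^{o}+\rho\hat{\sigma}_{n}^{2}$ and evaluate $P_{ew}^{o}$ on each resulting subinterval. On the overlap $[k_{a}^{o}+\hat{\sigma}_{n}^{2}/\rho,\rho\hat{\sigma}_{n}^{2}]$ both ramps are active, and the decisive computation is that the constants and the $\ln\hat{\sigma}_{n}^{2}$ terms cancel, leaving $P_{ew}^{o}(\tau)=1+\frac{1}{2\ln\rho}\ln\!\left(1-\frac{k_{a}^{o}}{\tau}\right)$. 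Since $\rho>1$ makes $\ln\rho>0$ and the argument $1-k_{a}^{o}/\tau$ increases in $\tau$, $P_{ew}^{o}$ is strictly increasing on the overlap, whereas on the region immediately to its left, where only $P_{FA}^{o}$ is active, it is non-increasing. The global minimum therefore sits at the junction $\tau^{*}=k_{a}^{o}+\hat{\sigma}_{n}^{2}/\rho$, and substituting $1-k_{a}^{o}/\tau^{*}=\hat{\sigma}_{n}^{2}/(\rho k_{a}^{o}+\hat{\sigma}_{n}^{2})$ reproduces the claimed value of $P_{ew}^{*,o}$.

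The main obstacle is the bookkeeping in the overlapping case: one must confirm the breakpoint ordering under the case hypothesis, check monotonicity on every flanking region so that a junction value is genuinely global rather than merely local, and carry out the logarithmic cancellation cleanly. The monotonicity argument on the overlap --- in place of setting a derivative to zero --- is what pins $\tau^{*}$ to an endpoint rather than an interior stationary point, and is the conceptual crux of the proof.
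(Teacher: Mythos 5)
Your proposal is correct and follows essentially the same route as the paper: the same two-case split on whether the false-alarm and missed-detection ramps overlap, zero error on the gap in the non-overlapping case, and a monotonicity argument pinning $\tau^{*}$ to the left endpoint of the overlap in the other case. The only cosmetic difference is that you establish monotonicity on the overlap by rewriting the branch as $1+\tfrac{1}{2\ln\rho}\ln\bigl(1-k_{a}^{o}/\tau\bigr)$, whereas the paper computes the derivative $\tfrac{k_{a}^{o}}{2\tau(\tau-k_{a}^{o})\ln\rho}>0$ explicitly.
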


\begin{proof}
To find the optimal threshold, we consider the optimization problem ${\min}_{\tau} \ P_{ew}^{o}=P_{FA}^{o}+P_{MD}^{o}$. According to (\ref{eq5_NJ}) and (\ref{eq6_NJ}), the following two cases are considered.

\emph{Case I}: When $\rho\hat{\sigma}_{n}^{2}< k_{a}^{o}+\frac{\hat{\sigma}_{n}^{2}}{\rho}$, for any $\tau \in [\rho\hat{\sigma}_{n}^{2},k_{a}^{o}+\frac{\hat{\sigma}_{n}^{2}}{\rho}]$, we have $P_{ew}^{o}=P_{FA}^{o}+P_{MD}^{o}=0$.

\emph{Case II}: When $\rho\hat{\sigma}_{n}^{2}\geq k_{a}^{o}+\frac{\hat{\sigma}_{n}^{2}}{\rho}$, the DEP $P_{ew}^{o}=P_{FA}^{o}+P_{MD}^{o}$ can be written as
\setlength\arraycolsep{1pt}
\begin{footnotesize}
\begin{subnumcases}
{P_{ew}^{o}=\!}
1, &  $\tau<\frac{\hat{\sigma}_{n}^{2} }{\rho}$, \qquad\label{eq7a_NJ} \\
1\!- \!\frac{\ln(\rho\tau)\!-\!\ln(\hat{\sigma}_{n}^{2})}{2\ln(\rho)}, & $\frac{\sigma_{n}^ {2}}{\rho}\! \leq \!\tau < k_{a}^{o}\!+\!\frac{\hat{\sigma}_{n}^{2}}{\rho}$, \qquad\label{eq7b_NJ} \\
1\!-\!\frac{\ln(\tau)\!-\!\ln(\tau-k_{a}^{o})}{2\ln(\rho)},& $k_{a}^{o}\!+\!\frac{\hat{\sigma}_{n}^{2}}{\rho}\! \leq \!\tau \!\leq\!\rho\hat{\sigma}_{n}^{2}$, \qquad\label{eq7c_NJ} \\
\frac{\ln(\rho(\tau\! -\! k_{a}^{o}))\!-\!\ln(\hat{\sigma}_{n}^{2})}{2\ln(\rho)},& $\rho\hat{\sigma}_{n}^{2}\!< \!\tau \!\leq \! k_{a}^{o}\!+\!\rho\hat{\sigma}_{n}^{2}$, \qquad\label{eq7d_NJ} \\
1, & $\tau >k_{a}^{o}\!+\!\rho\hat{\sigma}_{n}^{2}$ . \qquad\label{eq7e_NJ}
\end{subnumcases}
\end{footnotesize}\normalsize
then we can see from (\ref{eq7b_NJ}), when $\frac{\sigma_{n}^ {2}}{\rho} \leq \tau < k_{a}^{o}+\frac{\hat{\sigma}_{n}^{2}}{\rho}$, $P_{ew}^{o}$ is a decreasing function of $\tau$. Thus, when $\tau = k_{a}^{o}+\frac{\hat{\sigma}_{n}^{2}}{\rho}$, $P_{ew}^{o}$ gets the minimum value which equals to (\ref{eq7c_NJ}). From (\ref{eq7d_NJ}), when $\rho\hat{\sigma}_{n}^{2}< \tau \leq k_{a}^{o}+\rho\hat{\sigma}_{n}^{2}$, $P_{ew}^{o}$ is a increasing function of $\tau$. Thus, when $\tau = \rho\hat{\sigma}_{n}^{2}$, $P_{ew}^{o}$ gets the minimum value which equals to (\ref{eq7c_NJ}). Overall, the DEP gets the minimum value $P_{ew}^{*,o}$ for Willie if and only if $\tau\in[k_{a}^{o}\!+\!\frac{\hat{\sigma}_{n}^{2}}{\rho},\rho\hat{\sigma}_{n}^{2}]$. After deriving the first-order derivative of $P_{ew}^{o}$ with respect to $\tau$, we have
\begin{align}\label{Eq:pew_derivative}
 \frac{\partial P_{ew}^{o}}{\partial \tau} =\frac{k_{a}^{o}}{2\tau(\tau-k_{a}^{o})\ln{(\rho)}} >0.
\end{align}
Obviously, $P_{ew}^{o}$ is monotonically increasing in the interval. Thus, the optimal threshold $\tau^{*}$ for Willie is $\tau= k_{a}^{o}+\frac{\hat{\sigma}_{n}^{2}}{\rho}$ and the corresponding minimum DEP $P_{ew}^{*,o}$ is
\begin{align}\label{Eq:minimum_pew}
P_{ew}^{*,o}=1-\frac{\ln(\rho k_{a}^{o}+\hat{\sigma}_{n}^{2})-\ln( \hat{\sigma}_{n}^{2})}{2\ln(\rho)}.
\end{align}
Based on Case I and Case II, we have (\ref{2eq3_NJ}) and (\ref{2eq4_NJ}).
\end{proof}

Since only the statistical channel static information of Alice $\to$ Willie, we use the expected measure of $P_{ew}^{*,o}$ to evaluate the covertness. From (\ref{2eq4_NJ}), we note that $\mathbb{E}[P_{ew}^{*,o}]$ is related to the numerical integration of $|h^{o,\mathbb{B}}_{aw}|^{2}$. To this end, we first present the following lemma.
\begin{lemma}\label{lemma:calu2}
If $h^{o,\mathbb{B}}_{aw}$ $(\mathbb{B}\in \{L,N\})$ is the channel fading coefficient of the microwave channel, we have
\begin{scriptsize}
\begin{align}\label{Eq:calu2}
\int_{0}^{a}\!{x}f_{|h^{ o,\mathbb{B}}_{ aw}|^{2}}({x})d{x}\!&=\frac{\gamma\left ( \frac{2}{\nu(\sqrt{2k_{\mathbb{B}}})}, \left[2a(k_{\mathbb{B}}\!+\!1)\right]^\frac{\nu(\sqrt{2k_{\mathbb{B}}})}{2}e^{\mu(\sqrt{2k_{\mathbb{B}}})} \right )}{(k_{\mathbb{B}}+\!1)\nu(\sqrt{2k_{\mathbb{B}}})e^{\frac{2\mu(\sqrt{2k_{\mathbb{B}}})}{\nu(\sqrt{2k_{\mathbb{B}}})}}} \notag \\
& \qquad-a\exp{\left(\!-\left[2a(k_{\mathbb{B}}\!+\!1)\right]^\frac{\nu(\sqrt{2k_{\mathbb{B}}})}{2}e^{\mu(\sqrt{2k_{\mathbb{B}}})}\right)},
\end{align}
\end{scriptsize}\normalsize
where $a\ge0$, $f_{|h^{ o,\mathbb{B}}_{ aw}|^{2}}$ is the PDF of ${|h^{ o,\mathbb{B}}_{ aw}|^{2}}$, $\gamma(\cdot,\cdot)$ is the lower incomplete gamma function, $\mu(x)$ and $\nu(x)$ are the polynomial expressions of $x$ as in \cite{mishra20162,bocus2013approximation}, which are respectively given by
\begin{scriptsize}
\begin{align}\label{Eq:s2_rician_q_u_NJ}
 \mu(x) \triangleq\! \begin{cases}
   -\ln2, & x=0 , \\
   -3.0888\! \times \!\!10^{-10}x^{6}\!+\!1.8362 \!\times \!\!10^{-7}\!x^{5} \\
   -3.7185\! \times \!\!10^{-5}x^{4} \!+\!3.4103 \!\times \!\!10^{-3}x^{3} \\
	 -0.1624\! \times \!\!x^{2} \!-\!1.4318x\!+\!0.7409, & 10 \leq x \leq 8000 ,
     \end{cases}
\end{align}
\end{scriptsize}\normalsize
\vspace*{-0.4cm}
\begin{scriptsize}
\begin{align}\label{Eq:s2_rician_q_v_NJ}
 \nu (x) \triangleq \begin{cases}
   2, & x=0 , \\
    5.1546 \!\times \!10^{-11}x^{6}\!-\!3.1961\!\times\!10^{-8}x^{5} \\
	 +6.3859 \!\times \!10^{-6}x^{4}\!-\!5.4159\!\times \!\!10^{\!-4}x^{3} \\
	 +1.9833 \!\times \!10^{-2}x^{2}\!+\!0.9044x\!+\!0.9439 , & 10 \leq x \leq 8000 .
     \end{cases}
\end{align}
\end{scriptsize}
\end{lemma}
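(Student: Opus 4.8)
The plan is to avoid integrating the Bessel-function PDF (\ref{Eq:rician_pdf}) directly against the weight $x$ — it has no elementary antiderivative — and instead to work from the closed-form CDF in (\ref{Eq:rician_cdf}). Writing the complementary CDF as $\overline{F}(x)\triangleq Q_{1}(\sqrt{2k_{\mathbb{B}}},\sqrt{2(k_{\mathbb{B}}+1)x})$, which is an antiderivative of $-f_{|h^{o,\mathbb{B}}_{aw}|^{2}}$, integration by parts with $u=x$ and $dv=f_{|h^{o,\mathbb{B}}_{aw}|^{2}}(x)\,dx$ gives
\[
\int_{0}^{a}x\,f_{|h^{o,\mathbb{B}}_{aw}|^{2}}(x)\,dx=\int_{0}^{a}\overline{F}(x)\,dx-a\,\overline{F}(a).
\]
Choosing the antiderivative $v=-\overline{F}$ (rather than $v=F$) is what makes the lower boundary term vanish, and the surviving boundary term $-a\,\overline{F}(a)$ will directly produce the second summand of (\ref{Eq:calu2}) once the Marcum-Q approximation is inserted.

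The second step is to replace the Marcum-Q function by the exponential approximation of \cite{mishra20162,bocus2013approximation}, $Q_{1}(\alpha,\beta)\approx\exp(-e^{\mu(\alpha)}\beta^{\nu(\alpha)})$, with $\mu$ and $\nu$ the polynomials of (\ref{Eq:s2_rician_q_u_NJ})--(\ref{Eq:s2_rician_q_v_NJ}). Setting $\alpha=\sqrt{2k_{\mathbb{B}}}$ and $\beta=\sqrt{2(k_{\mathbb{B}}+1)x}$ turns $\overline{F}$ into the stretched exponential $\overline{F}(x)\approx\exp\!\big(-e^{\mu}\,[2(k_{\mathbb{B}}+1)x]^{\nu/2}\big)$, where I abbreviate $\mu=\mu(\sqrt{2k_{\mathbb{B}}})$ and $\nu=\nu(\sqrt{2k_{\mathbb{B}}})$. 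Evaluated at $x=a$ this is exactly the exponential appearing in the second summand, so that term is settled.

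The third step evaluates $\int_{0}^{a}\overline{F}(x)\,dx$ in closed form. Writing the integrand as $\exp(-b\,x^{\nu/2})$ with $b=e^{\mu}[2(k_{\mathbb{B}}+1)]^{\nu/2}$ and substituting $s=b\,x^{\nu/2}$ converts the integral into $\tfrac{2}{\nu}b^{-2/\nu}\int_{0}^{ba^{\nu/2}}s^{2/\nu-1}e^{-s}\,ds$, which is the lower incomplete gamma function $\tfrac{2}{\nu}b^{-2/\nu}\,\gamma(2/\nu,\,ba^{\nu/2})$. Unwinding $b^{-2/\nu}=e^{-2\mu/\nu}/[2(k_{\mathbb{B}}+1)]$ and $ba^{\nu/2}=e^{\mu}[2a(k_{\mathbb{B}}+1)]^{\nu/2}$ reproduces the first summand of (\ref{Eq:calu2}). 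Adding the two contributions gives the claimed expression.

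I do not expect a genuine obstacle here: the argument is a standard integration by parts followed by a gamma-function substitution. The only place demanding care is the bookkeeping of the fractional exponents through the substitution — keeping $\nu/2$ and its reciprocal $2/\nu$ straight and correctly collapsing the prefactor $b^{-2/\nu}$ into the form $e^{-2\mu/\nu}/[2(k_{\mathbb{B}}+1)]$. It is also worth stating explicitly that (\ref{Eq:calu2}) is an approximation inherited from the Marcum-Q approximation, tight in the operating regime $10\le\sqrt{2k_{\mathbb{B}}}\le 8000$ over which $\mu$ and $\nu$ are fitted, rather than an exact identity.
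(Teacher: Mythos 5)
Your proof is correct and follows essentially the same route as the paper's Appendix A: integration by parts against the (complementary) CDF, insertion of the exponential Marcum-Q approximation, and reduction of the remaining integral to a lower incomplete gamma function (the paper invokes \cite[Eq.\ (3.381.8)]{gradshteyn2014table} where you perform the substitution $s=b\,x^{\nu/2}$ explicitly, and it uses $F$ rather than $\overline{F}$ so the two $a$ terms cancel at the end instead of never appearing). Your closing remark that the identity is really an approximation inherited from the Marcum-Q fit is accurate and worth keeping.
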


\begin{proof}
The detailed proof is presented in Appendix~\ref{nnap1}.
\end{proof}

Considering the uncertainty of the LoS/NLoS channels, the probability $P^{\mathbb{B}}_{aw}$ should be considered in the calculation of $\mathbb{E}[P_{ew}^{*,o}]$. According to Lemma~\ref{lemma:calu2}, we can derive $\mathbb{E}[P_{ew}^{*,o}]$ as the following theorem.

\begin{theorem} \label{2th2_NJ}
When Alice adopts the OM transmission mode, the expected value $\mathbb{E}[P_{ew}^{*,o}]$ from Alice's perspective is given as (\ref{2eqn1_NJ}),
\begin{figure*}[t]
\begin{small}
\begin{equation} \label{2eqn1_NJ}
\mathbb{E}[P_{ew}^{*,o}]=\smashoperator{\sum_{\mathbb{B}\in\{L,N\}}}P_{aw}^{\mathbb{B}} (1-\Theta_\mathbb{B}^{o}) \left\{1 - \frac{1}{2\ln(\rho)}\left\{\ln\!\left[\rho P_{a}L_{aw}^{o,\mathbb{B}} \left( \frac{\!\gamma \!\left ( \frac{2}{\nu(\sqrt{2k_{\mathbb{B}}})}, [2\varrho^{o}(k_{\mathbb{B}}\!+\!1)]^\frac{\nu(\sqrt{2k_{\mathbb{B}}})}{2}e^{\mu(\sqrt{2k_{\mathbb{B}}})} \right )}{(k_{\mathbb{B}}\!+\!1)\nu(\sqrt{2k_{\mathbb{B}}})e^{\frac{2\mu(\sqrt{2k_{\mathbb{B}}})}{\nu(\sqrt{2k_{\mathbb{B}}})}}} -\varrho^{o} \Theta_\mathbb{B}^{o}\!\right ) \!+\!\hat{\sigma}_{n}^{2} \right]\!-\ln( \hat{\sigma}_{n}^{2}) \!\right\} \right\}.
\end{equation}
\end{small}
\rule[1cm]{\textwidth}{0.04em}
\vspace*{-0.6cm}
\end{figure*}
where $\varrho^{o}=\frac{(\rho^{2}-1)\hat{\sigma}_{n}^{2}}{\rho P_{a}L_{aw}^{o,\mathbb{B}}}$ and $\Theta_\mathbb{B}^o$ is defined as
\begin{equation} \label{eqn22}
\Theta_\mathbb{B}^{o}= \exp{\left(-\left[2\varrho^{o}(k_{\mathbb{B}}+1)\right]^\frac{\nu(\sqrt{2k_{\mathbb{B}}})}{2}e^{\mu(\sqrt{2k_{\mathbb{B}}})}\right)}.
\end{equation}
\end{theorem}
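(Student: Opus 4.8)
The plan is to take the expectation of the closed-form minimum DEP from Lemma~\ref{2th1_NJ} over both the randomness of the fading power $|h^{o,\mathbb{B}}_{aw}|^2$ and the LoS/NLoS uncertainty. First I would condition on the channel state $\mathbb{B}\in\{L,N\}$, which occurs with probability $P_{aw}^{\mathbb{B}}$, and write $\mathbb{E}[P_{ew}^{*,o}]=\sum_{\mathbb{B}\in\{L,N\}}P_{aw}^{\mathbb{B}}\,\mathbb{E}\big[P_{ew}^{*,o}\mid\mathbb{B}\big]$. The crucial observation is that the two branches of (\ref{2eq4_NJ}) are governed by a single threshold on $|h^{o,\mathbb{B}}_{aw}|^2$: rearranging the nonzero-case condition $\rho\hat{\sigma}_{n}^{2}\ge k_{a}^{o}+\hat{\sigma}_{n}^{2}/\rho$ with $k_{a}^{o}=P_{a}L_{aw}^{o,\mathbb{B}}|h^{o,\mathbb{B}}_{aw}|^{2}$ gives exactly $|h^{o,\mathbb{B}}_{aw}|^{2}\le\varrho^{o}$, where $\varrho^{o}=(\rho^{2}-1)\hat{\sigma}_{n}^{2}/(\rho P_{a}L_{aw}^{o,\mathbb{B}})$. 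Hence the DEP vanishes for $|h^{o,\mathbb{B}}_{aw}|^{2}>\varrho^{o}$, and each conditional expectation collapses to a single integral over $[0,\varrho^{o}]$ against the Rician PDF (\ref{Eq:rician_pdf}).

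Next I would split this integral into its two additive contributions. The constant ``$1$'' term integrates to the CDF $F_{|h^{o,\mathbb{B}}_{aw}|^{2}}(\varrho^{o})=1-Q_{1}(\sqrt{2k_{\mathbb{B}}},\sqrt{2(k_{\mathbb{B}}+1)\varrho^{o}})$ given by (\ref{Eq:rician_cdf}); applying the polynomial approximation $Q_{1}(a,b)\approx\exp(-e^{\mu(a)}b^{\nu(a)})$ of \cite{mishra20162,bocus2013approximation} with $a=\sqrt{2k_{\mathbb{B}}}$ and $b=\sqrt{2(k_{\mathbb{B}}+1)\varrho^{o}}$ identifies this probability with $1-\Theta_\mathbb{B}^{o}$, which is precisely the prefactor in (\ref{2eqn1_NJ}) and matches the definition (\ref{eqn22}).

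The remaining contribution is $\tfrac{1}{2\ln\rho}\int_{0}^{\varrho^{o}}[\ln(\rho P_{a}L_{aw}^{o,\mathbb{B}}x+\hat{\sigma}_{n}^{2})-\ln\hat{\sigma}_{n}^{2}]\,f_{|h^{o,\mathbb{B}}_{aw}|^{2}}(x)\,dx$, and this is where the main obstacle lies: the logarithm of an affine function of $x$ admits no closed-form integral against the non-central chi-squared density. I would resolve this by a Jensen-type interchange, pulling the logarithm outside the integral so that only the truncated first moment $\int_{0}^{\varrho^{o}}x\,f_{|h^{o,\mathbb{B}}_{aw}|^{2}}(x)\,dx$ remains inside its argument. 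That truncated moment is supplied in closed form by Lemma~\ref{lemma:calu2}, whose right-hand side is exactly the quantity $\gamma(\cdot)/[(k_{\mathbb{B}}+1)\nu e^{2\mu/\nu}]-\varrho^{o}\Theta_\mathbb{B}^{o}$ that appears inside the logarithm of (\ref{2eqn1_NJ}). Substituting this moment and recombining with the $1-\Theta_\mathbb{B}^{o}$ prefactor produces the bracketed expression, and summing over $\mathbb{B}\in\{L,N\}$ yields (\ref{2eqn1_NJ}). The two non-exact steps to state carefully are the validity of the $Q_{1}$ approximation over the relevant argument range and the accuracy of the logarithm interchange, as these are the only places where equality is replaced by approximation.
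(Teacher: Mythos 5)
Your proposal follows essentially the same route as the paper: condition on the LoS/NLoS state, reduce the nonzero branch of Lemma~\ref{2th1_NJ} to the event $|h^{o,\mathbb{B}}_{aw}|^{2}\le\varrho^{o}$, evaluate that probability via the exponential-type Marcum-$Q$ approximation to get the $(1-\Theta_{\mathbb{B}}^{o})$ prefactor, and handle the logarithmic term by moving the expectation inside the logarithm so that only the truncated first moment of Lemma~\ref{lemma:calu2} is needed. You also correctly flag the two approximation steps (the $Q_{1}$ fit and the log--expectation interchange), which are exactly the points where the paper's derivation in (\ref{2eq8_NJ})--(\ref{2eq11_NJ}) replaces equality by approximation.
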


\begin{proof} For convenience, let's write $k_{1}^{o}=\rho\hat{\sigma}_{n}^{2}$ and $k_2^{o}=k_{a}^{o}+\frac{\hat{\sigma}_{n}^{2}}{\rho}$. According to (\ref{2eq4_NJ}) in Lemma~\ref{2th1_NJ}, we have
\begin{align} \label{2eq8_NJ}
\mathbb{E}&[P_{ew}^{*,o}] \!=\!\mathbb{E}_{k_{1}^{o}< k_{2}^{o}}[P_{ew}^{*,o}]\mathbb{P}(k_{1}^{o}<k_{2}^{o})\!+\!\mathbb{E}_{k_{1}^{o}\geq k_{2}^{o}}[P_{ew}^{*,o}]\mathbb{P}(k_{1}^{o}\!\geq\! k_{2}^{o}) \notag \\
&\!=\!\!\mathbb{P}(k_{1}^{o}\!\geq\! k_{2}^{o}\!)\!\!\left(\!\!1\!\!-\!\!\frac{\ln(\rho P_{a}L_{aw}^{o,\mathbb{B}}\mathbb{E}_{k_{1}^{o}\geq k_{2}^{o}}\!\!\left[|{h}^{o,\mathbb{B}}_{aw}|^{2}\right]\!\!+\!\hat{\sigma}_{n}^{2})\!-\!\ln(\hat{\sigma}_{n}^{2})}{2\ln(\rho)}\!\right)\!,
\end{align}
where $\mathbb{P}(k_{1}^{o}\geq k_{2}^{o})$ can be derived as
\begin{align} \label{2eq9_NJ}
\mathbb{P}(k_{1}^{o}\geq k_{2}^{o})&=\mathbb{P}\left(\rho\hat{\sigma}_{n}^{2} \geq P_{a}L_{aw}^{o,\mathbb{B}}|{h}^{o,\mathbb{B}}_{ aw}|^{2}+\frac{\hat{\sigma}_{n}^{2}}{\rho}\right)\notag\\
&=\mathbb{P}\left(|{h}^{o,\mathbb{B}}_{aw}|^{2}\leq \varrho^{o} \right) \notag\\
&=1-Q_{1}\left( \sqrt{2k_{\mathbb{B}}},\sqrt{2(k_{\mathbb{B}}+1)\varrho^{o}} \right).
\end{align}

To provide some deep analytical insights on the average value of $P_{ew}^{*,o}$, a tight exponential-type approximation for the standard Marcum-Q function $Q_{1}(\cdot,\cdot)$ is adopted, which is expressed by
\begin{align} \label{Eq:s2_rician_q_approx}
Q_{1}(x,y)\approx \exp{(-e^{\mu(x)}y^{\nu(x)})},
\end{align}
where $\mu(x)$ and $\nu(y)$ are conditionally defined for $0\leq x \ll 1$ and $10\leq x \leq 8000$, which is good range for Rice factors $k_{\mathbb{B}}$ in the concerned system. In this work, we define $\mu(x)$ and $\nu(x)$ as (\ref{Eq:s2_rician_q_u_NJ}) and (\ref{Eq:s2_rician_q_v_NJ}), respectively. The quality and reliability of approximation has been testified where the root mean square error (RMSE) of approximation is less than $0.005$ \cite{bocus2013approximation}. Thus, we can obtain an approximation of $Q_{1}(\cdot,\cdot)$ in (\ref{2eq9_NJ}) as
\begin{align}\label{2eq81_NJ}
\mathbb{P}&(k_{1}^{o}\!\geq\! k_{2}^{o})=1-\exp{\left(-e^{\mu(\sqrt{2k_{\mathbb{B}}})}\left[2\varrho^{o} (k_{\mathbb{B}}+1)\right]^\frac{\nu(\sqrt{2k_{\mathbb{B}}})}{2}\right)}.
\end{align}

For the expectation term in (\ref{2eq8_NJ}), we have
\begin{align}\label{2eq11_NJ}
\mathbb{E}&_{k_{1}^{o}\geq k_{2}^{o}} \left[ |{h}^{o,\mathbb{B}}_{aw}|^{2} \right]=\mathbb{E}\left [ |{h}^{o,\mathbb{B}}_{aw}|^{2}\Bigg||{h}^{o,\mathbb{B}}_{aw}|^{2}\leq \varrho^{o} \right ] \notag \\
&=\int_{0}^{\varrho }{x}f_{|{h}^{o,\mathbb{B}}_{aw}|^{2}}({x})d{x} \notag \\
&\overset{(a)}{=}\frac{\gamma\left( \frac{2}{\nu(\sqrt{2k_{\mathbb{B}}})}, \left[2\varrho^{o}(k_{\mathbb{B}}\!+\!1)\right]^\frac{\nu(\sqrt{2k_{\mathbb{B}}})}{2}e^{\mu(\sqrt{2k_{\mathbb{B}}})} \right)}{\nu(\sqrt{2k_{\mathbb{B}}})(k_{\mathbb{B}}+1)e^{\frac{2\mu(\sqrt{2k_{\mathbb{B}}})}{\nu(\sqrt{2k_{\mathbb{B}}})}}} \notag \\
& \qquad-\varrho^{o}\exp{\left(-[2\varrho^{o}(k_{\mathbb{B}}+1)]^\frac{\nu(\sqrt{2k_{\mathbb{B}}})}{2}e^{\mu(\sqrt{2k_{\mathbb{B}}})}\right)},
\end{align}
where step (a) is due to Lemma~\ref{lemma:calu2}. Finally, submitting (\ref{2eq81_NJ}) and (\ref{2eq11_NJ}) into (\ref{2eq8_NJ}), we can get $\mathbb{E}[P_{ew}^{*,o}]$ as (\ref{2eqn1_NJ}).
\end{proof}

\subsection{Effective Covert Rate and Covert Shannon Capacity}\label{oa_ecrcsc}

According to the definition mentioned in Section~\ref{param_definitions}, the ECR $R_{ab}^{o}$ can be given as the following theorem.

\begin{theorem}\label{th3_s2_out_NJ}
When UAV adopts the OM transmission mode in the concerned A2G system, the ECR $R_{ab}^{o}$ of the system is determined by
\begin{align}\label{Eq:s2_out_NJ}
R_{ab}^{o}\!=\!R_{b}\!\times\! \left\{\! 1\!-\smashoperator{\sum_{\mathbb{B}\in\{L,N\}}}P_{ab}^{\mathbb{B}}\!
 \left[ 1\! - \!\mathcal{F}_\mathrm{Ei}^{o}\!(\rho\hat{\sigma}_{n}^{2}) \!+ \!\mathcal{F}_\mathrm{Ei}^{o} \!\left( \! \frac{\hat{\sigma}_{n}^{2}}{\rho} \right)\! \right]\!\right\},
\end{align}
where $R_{b}$ is the target rate, $\mathcal{F}_{\mathrm{Ei}}^{o}(\cdot)$ is defined as following
\begin{small}
\begin{align}\label{Eq:s1_fp}
\mathcal{F}_{\mathrm{Ei}}^{o}(x) \!= \!\frac{1}{\nu(\sqrt{2k_{\mathbb{B}}})\ln{\rho}}\mathrm{Ei}\!\!\left[ \!-e^{\mu(\sqrt{2k_{\mathbb{B}}})} \! \!\left (\!\frac{2(k_{\mathbb{B}}\!+\!1)\gamma_{th}x}{P_aL_{ab}^{o,\mathbb{B}}}\!\right)^{\!\!\!\frac{\nu(\sqrt{2k_{\mathbb{B}}})}{2}} \!\right],
\end{align}
\end{small}\normalsize
$\mu(x)$ and $\nu(x)$ are defined as (\ref{Eq:s2_rician_q_u_NJ}) and (\ref{Eq:s2_rician_q_v_NJ}) in Theorem \ref{2th2_NJ}, respectively, and $\mathrm{Ei}(\cdot)$ is the exponential integral function defined as \cite[Eq. (8.211.1)]{gradshteyn2014table}.
\end{theorem}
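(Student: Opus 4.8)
The plan is to start directly from the definition $R_{ab}^{o}=R_{b}(1-P_{out}^{o})$ given in Section~\ref{param_definitions}, so that the whole task reduces to evaluating the outage probability $P_{out}^{o}=\mathbb{P}(\gamma_{ab}^{o}<\gamma_{th})$ under the OM mode. Reading the received signal at Bob from (\ref{2eq2_bob}), using $|H_{ab}^{o,\mathbb{B}}|^{2}=|h_{ab}^{o,\mathbb{B}}|^{2}L_{ab}^{o,\mathbb{B}}$ together with $\mathbb{E}[|x_{c}|^{2}]=1$, the instantaneous SNR is $\gamma_{ab}^{o}=P_{a}L_{ab}^{o,\mathbb{B}}|h_{ab}^{o,\mathbb{B}}|^{2}/\sigma_{b}^{2}$. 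Since the outage event mixes three independent sources of randomness, I would condition sequentially: first on the LoS/NLoS state $\mathbb{B}$ (occurring with probability $P_{ab}^{\mathbb{B}}$), then on the noise power $\sigma_{b}^{2}$ (log-uniform on $[\hat{\sigma}_{n}^{2}/\rho,\rho\hat{\sigma}_{n}^{2}]$ by (\ref{Eq:noise_un_pdf})), leaving the fading $|h_{ab}^{o,\mathbb{B}}|^{2}$ as the inner variable. For fixed $\mathbb{B}$ and $\sigma_{b}^{2}=s$, the conditional outage is just the CDF of $|h_{ab}^{o,\mathbb{B}}|^{2}$ evaluated at $\gamma_{th}s/(P_{a}L_{ab}^{o,\mathbb{B}})$, which is available in closed form through (\ref{Eq:rician_cdf}).

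The next step is to remove the Marcum-$Q$ function, which is otherwise intractable against the outer noise integral. I would invoke the same exponential-type approximation $Q_{1}(x,y)\approx\exp(-e^{\mu(x)}y^{\nu(x)})$ used in (\ref{Eq:s2_rician_q_approx}) of Theorem~\ref{2th2_NJ}, turning the conditional outage into $1-\exp(-e^{\mu(\sqrt{2k_{\mathbb{B}}})}[2(k_{\mathbb{B}}+1)\gamma_{th}s/(P_{a}L_{ab}^{o,\mathbb{B}})]^{\nu(\sqrt{2k_{\mathbb{B}}})/2})$. Averaging over $s$ against the log-uniform density $1/(2\ln(\rho)s)$ then splits into two pieces: the constant $1$ integrates to exactly $1$, since $\int_{\hat{\sigma}_{n}^{2}/\rho}^{\rho\hat{\sigma}_{n}^{2}}ds/(2\ln(\rho)s)=1$, while the exponential piece is the crux of the computation.

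The main obstacle is this remaining integral $\int \exp(-Cs^{\nu(\sqrt{2k_{\mathbb{B}}})/2})\,ds/s$, with $C=e^{\mu(\sqrt{2k_{\mathbb{B}}})}[2(k_{\mathbb{B}}+1)\gamma_{th}/(P_{a}L_{ab}^{o,\mathbb{B}})]^{\nu(\sqrt{2k_{\mathbb{B}}})/2}$. I would resolve it with the substitution $t=Cs^{\nu(\sqrt{2k_{\mathbb{B}}})/2}$, under which $ds/s=(2/\nu(\sqrt{2k_{\mathbb{B}}}))\,dt/t$, so the integral collapses to $(2/\nu(\sqrt{2k_{\mathbb{B}}}))\int e^{-t}/t\,dt=(2/\nu(\sqrt{2k_{\mathbb{B}}}))\,\mathrm{Ei}(-t)$, recognising the exponential integral of \cite[Eq.~(8.211.1)]{gradshteyn2014table}. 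Evaluating this antiderivative at the two noise endpoints $s=\rho\hat{\sigma}_{n}^{2}$ and $s=\hat{\sigma}_{n}^{2}/\rho$ produces exactly $-\mathcal{F}_{\mathrm{Ei}}^{o}(\rho\hat{\sigma}_{n}^{2})+\mathcal{F}_{\mathrm{Ei}}^{o}(\hat{\sigma}_{n}^{2}/\rho)$ with $\mathcal{F}_{\mathrm{Ei}}^{o}$ as in (\ref{Eq:s1_fp}). Adding the constant part gives the conditional outage $1-\mathcal{F}_{\mathrm{Ei}}^{o}(\rho\hat{\sigma}_{n}^{2})+\mathcal{F}_{\mathrm{Ei}}^{o}(\hat{\sigma}_{n}^{2}/\rho)$; weighting by $P_{ab}^{\mathbb{B}}$, summing over $\mathbb{B}\in\{L,N\}$, and inserting into $R_{ab}^{o}=R_{b}(1-P_{out}^{o})$ yields (\ref{Eq:s2_out_NJ}). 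The only point requiring care is tracking the sign and the limits of $\mathrm{Ei}(\cdot)$ so that the upper endpoint $\rho\hat{\sigma}_{n}^{2}$ carries the minus sign; beyond that, the derivation is a routine chain of substitutions.
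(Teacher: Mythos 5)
Your proposal is correct and follows essentially the same route as the paper's proof: decompose over the LoS/NLoS state, reduce the conditional outage to the Rician CDF, replace the Marcum-$Q$ function by the exponential-type approximation of (\ref{Eq:s2_rician_q_approx}), and average over the log-uniform noise to obtain the $\mathrm{Ei}$ terms. The only cosmetic difference is that you derive the antiderivative $\int e^{-Cs^{\nu/2}}s^{-1}ds$ by explicit substitution, whereas the paper cites the tabulated identity $\int e^{ax^{n}}x^{-1}dx=\mathrm{Ei}(ax^{n})/n$; the signs and endpoint assignments in your final expression match (\ref{Eq:s2_out_NJ}).
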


\begin{proof}
To derive the ECR, we first need to determined the outage probability at Bob $P^{o}_{out}$. Considering the uncertainty of the LoS/NLoS channel, the outage probability at Bob $P^{o}_{out}$ under the OM transmission mode can be determined by
\begin{equation}\label{Eq:s2_proof_out11_NJ}
P_{out}^{o}=\,\,\smashoperator{\sum_{\mathbb{B}\in\{L,N\}}}P_{ab}^{\mathbb{B}}\times\mathbb{P}\left(\gamma_{ab}^{o}<\gamma_{th}\right).
\end{equation}
 According to (\ref{2eq2_bob}), the SNR at Bob is determined by $\gamma _{ab}^{o}={P_a L_{ab}^{o,\mathbb{B}}|h^{o,\mathbb{B}}_{ab}|^{2}}/{\sigma _{b}^{2}}$.
Thus, the outage probability at Bob $P^{o}_{out}$ is given by
\begin{footnotesize}
\begin{align}\label{Eq:s2_proof_out_TANJ}
&\mathbb{P}\left(\gamma_{ab}^{o}<\gamma_{th}\right)
=\mathbb{P}\left(|{h}^{o,\mathbb{B}}_{ab}|^{2}<\frac{\sigma _{b}^{2}\gamma_{th}}{P_aL_{ab}^{o,\mathbb{B}}}\right) \notag \\
&\overset{(b)}{=} 1-\mathbb{E}_{\sigma _{b}^{2}} \left [\exp{\left(-e^{\mu(\sqrt{2k_{\mathbb{B}}})} \left ( \frac{2(k_{\mathbb{B}}+1)\sigma _{b}^{2}\gamma_{th}}{P_aL_{ab}^{T,\mathbb{B}}} \right)^{\!\!\frac{\nu(\sqrt{2k_{\mathbb{B}}})}{2}}\right)} \right ] \notag \\
&=1- \int_{\frac{\hat{\sigma}_{n}^{2}}{\rho}}^{\rho\hat{\sigma}_{n}^{2}}\exp{\left(-e^{\mu(\sqrt{2k_{\mathbb{B}}})}\! \left( \!\frac{2(k_{\mathbb{B}}+1)\gamma_{th}x}{P_aL_{ab}^{o,\mathbb{B}}}\! \right)^{\!\!\frac{\nu(\sqrt{2k_{\mathbb{B}}})}{2}}\right)} \frac{1}{2x\ln{\rho} }dx \notag \\
&\overset{(c)}{=} 1\!-\!\frac{1}{\nu(\sqrt{2k_{\mathbb{B}}})\ln{\rho}}\!\left[\mathrm{Ei}\!\left(\!\!-e^{\mu(\sqrt{2k_{\mathbb{B}}})} \!\left (\! \frac{2(k_{\mathbb{B}}\!+\!1)\gamma_{th}x}{P_aL_{ab}^{o,\mathbb{B}}} \!\right)^{\!\!\!\frac{\nu(\sqrt{2k_{\mathbb{B}}})}{2}}\! \right)\!\right]\!\Bigg|_{\frac{\hat{\sigma}_{n}^{2}}{\rho}}^{\rho\hat{\sigma}_{n}^{2}} \notag \\
&=1 -\frac{1}{\nu(\sqrt{2k_{\mathbb{B}}})\ln{\rho}} \mathrm{Ei}\left(-e^{\mu(\sqrt{2k_{\mathbb{B}}})} \left ( \frac{2(k_{\mathbb{B}}+1)\gamma_{th}\rho\hat{\sigma}_{n}^{2}}{P_aL_{ab}^{o,\mathbb{B}}} \right)^{\!\!\frac{\nu(\sqrt{2k_{\mathbb{B}}})}{2}} \right)  \notag\\
& \qquad+\frac{1}{\nu(\sqrt{2k_{\mathbb{B}}})\ln{\rho}} \mathrm{Ei}\left(-e^{\mu(\sqrt{2k_{\mathbb{B}}})} \left ( \frac{2(k_{\mathbb{B}}+1)\gamma_{th}\hat{\sigma}_{n}^{2}}{P_aL_{ab}^{o,\mathbb{B}}\rho} \right)^{\!\!\frac{\nu(\sqrt{2k_{\mathbb{B}}})}{2}} \right),
\end{align}
\end{footnotesize}\normalsize
where step (b) is similar to (\ref{2eq81_NJ}), and step (c) is due to $\int e^{ax^{n}}x^{-1}dy=\mathrm{Ei}(ax^{n})/n$ as \cite[Eq. (2.325.7)]{gradshteyn2014table}. Then, submitting (\ref{Eq:s2_proof_out_TANJ}) into (\ref{Eq:s2_proof_out11_NJ}), we can obtain (\ref{Eq:s2_out_NJ}).
\end{proof}

Next, we derive the covert Shannon capacity $C_{ab}^{o}$ of the concerned system under the OM transmission mode.
\begin{theorem} \label{th_s2_covert_rate_TANJ}
When UAV adopts the OM transmission mode in the concerned A2G system, the CSC $C_{ab}^{o}$ of the system is determined by
\begin{small}
\begin{align}\label{s2_rab_TANJ}
C_{ab}^{o}=\,\,\smashoperator{\sum_{\mathbb{B}\in\{L,N\}}}\,\,\,
\frac{W^{o}P_{ab}^{\mathbb{B}}}{2\ln2\times\ln\rho}\left [ \mathcal{F}_{\mathbf{Li}}^{o}\left(\frac{1}{\rho\hat{\sigma}_{n}^{2}}\right)-\mathcal{F}_{\mathbf{Li}}^{o}\left (\frac{\rho}{\hat{\sigma}_{n}^{2}}\right) \right],
\end{align}
\end{small}\normalsize
where $W^{o}$ is the bandwidth of microwave, $\mathcal{F}_{\mathbf{Li}}^{o}(\cdot)$ is given as
\begin{align}\label{Eq:s1_fr}
\mathcal{F}_{\mathbf{Li}}^{o}(x)= \int_{0}^{\infty}\mathbf{Li}_{2} (-P_aL_{ab}^{o,\mathbb{B}}xy)f_{|h^{o,\mathbb{B}}_{ab}|^{2}}(y)dy,
\end{align}
$\mathbf{Li}_{s}(\cdot)$ is the polylogarithm of order s, and $f_{|h^{o,\mathbb{B}}_{ab}|^{2}}(y)$ is refer to (\ref{Eq:rician_pdf}).
\end{theorem}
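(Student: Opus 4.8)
The plan is to evaluate the defining expectation $C_{ab}^{o}=\mathbb{E}[W^{o}\log_2(1+\gamma_{ab}^{o})]$ directly, handling the three sources of randomness --- the LoS/NLoS channel state, the small-scale fading $|h^{o,\mathbb{B}}_{ab}|^{2}$, and the noise uncertainty $\sigma_{b}^{2}$ --- in sequence. First I would condition on the channel state and apply the law of total expectation to write $C_{ab}^{o}=\sum_{\mathbb{B}\in\{L,N\}}P_{ab}^{\mathbb{B}}\,\mathbb{E}[W^{o}\log_2(1+\gamma_{ab}^{o})\mid\mathbb{B}]$, and convert the base-two logarithm into a natural logarithm via $\log_2(\cdot)=\ln(\cdot)/\ln 2$. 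Recalling from the proof of Theorem~\ref{th3_s2_out_NJ} that $\gamma_{ab}^{o}=P_aL_{ab}^{o,\mathbb{B}}|h^{o,\mathbb{B}}_{ab}|^{2}/\sigma_{b}^{2}$, each conditional expectation becomes a double integral over the fading PDF (\ref{Eq:rician_pdf}) and the log-uniform noise PDF (\ref{Eq:noise_un_pdf}).

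The key step is to carry out the noise integration first. For a fixed fading realization $y=|h^{o,\mathbb{B}}_{ab}|^{2}$, I would integrate $\ln(1+P_aL_{ab}^{o,\mathbb{B}}y/\sigma_{b}^{2})$ against the density $1/(2\ln\rho\,\sigma_{b}^{2})$ over $[\hat{\sigma}_{n}^{2}/\rho,\rho\hat{\sigma}_{n}^{2}]$. Substituting $u=1/\sigma_{b}^{2}$ gives $\mathrm{d}\sigma_{b}^{2}/\sigma_{b}^{2}=-\mathrm{d}u/u$, which reduces the integral to $\frac{1}{2\ln\rho}\int\frac{\ln(1+P_aL_{ab}^{o,\mathbb{B}}yu)}{u}\,\mathrm{d}u$ taken between the reciprocal limits $u=1/(\rho\hat{\sigma}_{n}^{2})$ and $u=\rho/\hat{\sigma}_{n}^{2}$. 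The crucial identity is $\int\frac{\ln(1+wu)}{u}\,\mathrm{d}u=-\mathbf{Li}_{2}(-wu)$, which follows from $\frac{\mathrm{d}}{\mathrm{d}z}\mathbf{Li}_{2}(z)=-\ln(1-z)/z$; evaluating at the two endpoints produces exactly the difference $\mathbf{Li}_{2}\!\big(-P_aL_{ab}^{o,\mathbb{B}}y/(\rho\hat{\sigma}_{n}^{2})\big)-\mathbf{Li}_{2}\!\big(-P_aL_{ab}^{o,\mathbb{B}}y\rho/\hat{\sigma}_{n}^{2}\big)$.

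It then remains to take the expectation over the fading variable $y$. Since the surviving integrand is the dilogarithm evaluated at $-P_aL_{ab}^{o,\mathbb{B}}xy$ with $x\in\{1/(\rho\hat{\sigma}_{n}^{2}),\,\rho/\hat{\sigma}_{n}^{2}\}$, integrating it against $f_{|h^{o,\mathbb{B}}_{ab}|^{2}}(y)$ is precisely the definition of $\mathcal{F}_{\mathbf{Li}}^{o}(x)$ in (\ref{Eq:s1_fr}). Collecting the prefactor $W^{o}/(2\ln 2\,\ln\rho)$ and reinstating the weighted sum over $\mathbb{B}$ yields (\ref{s2_rab_TANJ}).

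I expect the main obstacle to be the noise-integration step: recognizing that the reciprocal substitution $u=1/\sigma_{b}^{2}$ casts the integral into the canonical $\ln(1+wu)/u$ form whose antiderivative is $-\mathbf{Li}_{2}(-wu)$, and tracking how the limits map under the substitution so that the sign and ordering of the two $\mathbf{Li}_{2}$ terms come out correctly (the orientation reversal of the limits and the sign in $\mathrm{d}\sigma_{b}^{2}/\sigma_{b}^{2}=-\mathrm{d}u/u$ must cancel). By contrast, the fading integral is deliberately left unevaluated inside $\mathcal{F}_{\mathbf{Li}}^{o}(\cdot)$, since the Rician PDF (\ref{Eq:rician_pdf}) does not combine with $\mathbf{Li}_{2}$ into a clean closed form, so no further simplification is attempted there.
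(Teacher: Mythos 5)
Your proposal is correct and follows essentially the same route as the paper: decompose over the LoS/NLoS states, integrate the log-uniform noise density via the reciprocal substitution $u=1/\sigma_b^2$, identify the antiderivative of $\ln(1+wu)/u$ with $-\mathbf{Li}_2(-wu)$ (the paper reaches the same antiderivative through the tabulated Lerch-function form \cite[Eq.\ (2.728.2)]{gradshteyn2014table} and $\mathbf{Li}_s(z)=z\mathbf{\Phi}(z,s,1)$), and leave the fading average as the integral defining $\mathcal{F}_{\mathbf{Li}}^{o}$. Your sign and limit bookkeeping also matches the stated result, so no gaps.
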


\begin{proof}
Considering the possibility of the LoS/NLoS channel, the CSC $C_{ab}^{o}$ of the A2G system is determined by
\begin{equation}\label{s2_rab_proof1_NJ}
C_{ab}^{o}=\,\,\smashoperator{\sum_{\mathbb{B}\in\{L,N\}}}P_{ab}^{\mathbb{B}}\times\,C_{ab}^{o}.
\end{equation}
According to the SNR at Bob, $C_{ab}^{o}$ can be derived as
\begin{footnotesize}
\begin{align}\label{s2_rab_proof_NJ}
C_{ab}^{o}\!&= W^{o}\mathbb{E}_{|h^{o,\mathbb{B}}_{ab}|^{2},\sigma _{b}^{2}}\left [ \log_{2}{\left(1+\frac{P_a L_{ab}^{o,\mathbb{B}}|h^{o,\mathbb{B}}_{ab}|^{2}}{\sigma _{b}^{2}}\right)} \right ] \notag \\
&=W^{o}\mathbb{E}_{|h^{o,\mathbb{B}}_{ab}|^{2}}\left [\frac{1}{2\ln\rho}\int_{\frac{\hat{\sigma}_{n}^{2}}{\rho}}^{\rho\hat{\sigma}_{n}^{2}} \log_{2}\left(1+\frac{P_aL_{ab}^{o,\mathbb{B}}|h^{o,\mathbb{B}}_{ab}|^{2}}{x}\right)\frac{1}{x}dx \right ] \notag \\
&=W^{o}\mathbb{E}_{|h^{o,\mathbb{B}}_{ab}|^{2}}\left [\frac{1}{2\ln2\ln\rho}\int_{\frac{1}{\rho\hat{\sigma}_{n}^{2}}}^{\frac{\rho}{\hat{\sigma}_{n}^{2}}}\!\ln\left(1+P_aL_{ab}^{o,\mathbb{B}}|h^{o,\mathbb{B}}_{ab}|^{2} y\right)\frac{1}{y}dy\right ] \notag \\
&\overset{(c)}{=}W^{o}\mathbb{E}_{|h^{o,\mathbb{B}}_{ab}|^{2}}\!\left[\frac{P_aL_{ab}^{o,\mathbb{B}}|h^{o,\mathbb{B}}_{ab}|^{2}y}{2\ln2\ln\rho}[\Phi (-P_aL_{ab}^{o,\mathbb{B}}|h^{o,\mathbb{B}}_{ab}|^{2}y,2,1)]\Big|_{\frac{1}{\rho\hat{\sigma}_{n}^{2}}}^{\frac{\rho}{\hat{\sigma}_{n}^{2}}}\right ] \notag \\
&\overset{(d)}{=}W^{o}\mathbb{E}_{|h^{o,\mathbb{B}}_{ab}|^{2}}\left[\frac{-1}{2\ln2\ln\rho}\left[\mathbf{Li}_{2} (-P_aL_{ab}^{o,\mathbb{B}}|h^{o,\mathbb{B}}_{ab}|^{2}y)\right]\Big|_{\frac{1}{\rho\hat{\sigma}_{n}^{2}}}^{\frac{\rho}{\hat{\sigma}_{n}^{2}}}\right ] \notag \\
&=\frac{W^{o}}{2\ln2\ln\rho}\int_{0}^{\infty }\mathbf{Li}_{2} \left(\frac{-P_aL_{ab}^{o,\mathbb{B}}x}{\rho \hat{\sigma}_{n}^{2}}\right)f_{|h^{o,\mathbb{B}}_{ab}|^{2}}(x) dx \notag \\
&\,\,\,\, -\frac{W^{o}}{2\ln2\ln\rho}\int_{0}^{\infty }\mathbf{Li}_{2} \left(\frac{-\rho P_aL_{ab}^{o,\mathbb{B}}x}{\hat{\sigma}_{n}^{2}}\right)  f_{|h^{o,\mathbb{B}}_{ab}|^{2}}(x) dx,
\end{align}
\end{footnotesize}\normalsize
where step (c) is according to \cite[Eq. (2.728.2)]{gradshteyn2014table}, $\mathbf{\Phi}(\cdot,\cdot,\cdot)$ is Lerch function defined as \cite[Eq. (9.550)]{gradshteyn2014table}, and step (d) is due to $\mathbf{Li}_{s}(z)=z\mathbf{\Phi}(z,s,1)$. Submitting (\ref{s2_rab_proof_NJ}) into (\ref{s2_rab_proof1_NJ}), we can obtain (\ref{s2_rab_TANJ}).
\end{proof}


\section{Covert Performance Analysis under the DM Transmission Mode}\label{MA_Performance}

Similar to Section~\ref{TA_Performance}, this section investigates the optimal detection threshold and the corresponding minimum DEP at Willie, as well as the expected minimum DEP from Alice’s perspective, lastly, the ECR and CSC under the DM transmission mode.

\subsection{Detection Error Probability}
According to (\ref{neq2}) and (\ref{Eq:Tw}), when Alice does not transmit information, Willie only received the background noise. Thus, $P^{d}_{FA}$ is same as (\ref{eq5_NJ}). While Alice transmits information, by observing the equation deducing in (\ref{eq6_NJ}), we can obtain $P^{d}_{MD}$ by replacing the parameter $k_a^{o}$ in (\ref{eq6_NJ}) with $k_a^{d}=P_{a}G_{aw}L_{aw}^{d,\mathbb{B}}|{h}^{d,\mathbb{B}}_{aw}|^{2}$. Based on $P_{FA}^{d}$ and $P_{MD}^{d}$, we can further derive the optimal detection threshold and the minimum detection error probability at Willie as the following lemma.
\begin{lemma} \label{th1_NJ}
When Alice adopts the DM transmission mode, the optimal detection threshold $\tau^{*}$ for Willie's detector is in the interval
	\begin{equation} \label{eq3_NJ}
	\begin{aligned}
	\tau^{*}\in\begin{cases}
	 [\rho\hat{\sigma}_{n}^{2},k_{a}^{d}+\frac{\hat{\sigma}_{n}^{2}}{\rho}],& \rho\hat{\sigma}_{n}^{2}< k_{a}^{d}+\frac{\hat{\sigma}_{n}^{2}}{\rho},\\
	 k_{a}^{d}+\frac{\hat{\sigma}_{n}^{2}}{\rho},& \rho\hat{\sigma}_{n}^{2}\geq k_{a}^{d}+\frac{\hat{\sigma}_{n}^{2}}{\rho},
	\end{cases}
	\end{aligned}
	\end{equation}
and the corresponding minimum DEP $P_{ew}^{*,d}$ is given as
	\begin{equation} \label{eq4_NJ}
	\begin{aligned}
	P_{ew}^{*,d}=\begin{cases}
	0 , & \rho\hat{\sigma}_{n}^{2}< k_{a}^{d}+\frac{\hat{\sigma}_{n}^{2}}{\rho}, \\
	1-\frac{\ln(\rho k^{d}_{a}+\hat{\sigma}_{n}^{2})-\ln( \hat{\sigma}_{n}^{2})}{2\ln\rho} , & \rho\hat{\sigma}_{n}^{2} \geq k^{d}_{a}+\frac{\hat{\sigma}_{n}^{2}}{\rho},
	\end{cases}
	\end{aligned}
	\end{equation}
where $k^{d}_{a}=P_{a}G_{aw}L_{aw}^{d,\mathbb{B}}|{h}^{d,\mathbb{B}}_{aw}|^{2}$, $\rho$ and $\hat{\sigma}_{n}^{2}$ are the parameter that quantifies the size of the uncertainty and nominal noise power, respectively, which are defined in Section \ref{sec_noise_uncertainty}.
\end{lemma}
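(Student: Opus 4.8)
The plan is to mirror the proof of Lemma~\ref{2th1_NJ} under a single scalar substitution, since the DM detection statistic $T_w^d$ in (\ref{Eq:Tw}) differs from the OM statistic $T_w^o$ in (\ref{Eq:Tw_TA}) only by replacing the received signal power $k_a^o$ with $k_a^d = P_a G_{aw} L_{aw}^{d,\mathbb{B}}|h_{aw}^{d,\mathbb{B}}|^2$. As noted just before the statement, under $H_0$ Willie observes only noise, so the false-alarm probability $P_{FA}^d$ coincides with (\ref{eq5_NJ}); under $H_1$, substituting $k_a^o \to k_a^d$ into (\ref{eq6_NJ}) gives the missed-detection probability $P_{MD}^d$. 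First I would form the objective $\min_\tau P_{ew}^d = P_{FA}^d + P_{MD}^d$ and split the analysis into the same two regimes used in the OM case.

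In \emph{Case I}, when $\rho\hat{\sigma}_n^2 < k_a^d + \hat{\sigma}_n^2/\rho$, the zero-regions of $P_{FA}^d$ and $P_{MD}^d$ overlap, so every $\tau \in [\rho\hat{\sigma}_n^2, k_a^d + \hat{\sigma}_n^2/\rho]$ yields $P_{FA}^d = P_{MD}^d = 0$ and hence $P_{ew}^d = 0$; this fixes both the interval in (\ref{eq3_NJ}) and the value $0$ in (\ref{eq4_NJ}). In \emph{Case II}, when $\rho\hat{\sigma}_n^2 \geq k_a^d + \hat{\sigma}_n^2/\rho$, I would write $P_{ew}^d$ as the five-piece expression of (\ref{eq7a_NJ})--(\ref{eq7e_NJ}) with $k_a^o$ replaced by $k_a^d$, note that $P_{ew}^d$ is decreasing on the branch below $k_a^d + \hat{\sigma}_n^2/\rho$ and increasing on the branch above $\rho\hat{\sigma}_n^2$, so that any minimizer must lie in the middle branch $[k_a^d + \hat{\sigma}_n^2/\rho, \rho\hat{\sigma}_n^2]$, and then differentiate to obtain $\partial P_{ew}^d/\partial\tau = k_a^d / [2\tau(\tau - k_a^d)\ln\rho] > 0$, exactly paralleling (\ref{Eq:pew_derivative}). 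Strict monotone increase on this middle branch drives the optimizer to its left endpoint $\tau^* = k_a^d + \hat{\sigma}_n^2/\rho$, and evaluating $P_{ew}^d$ there gives the stated minimum DEP $P_{ew}^{*,d}$.

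Because every structural feature of the OM argument is preserved under the substitution $k_a^o \to k_a^d$, I do not expect a genuine obstacle here. The only point deserving care is confirming that the directional gain $G_{aw}$ from (\ref{neq1_gaw}) is absorbed cleanly into $k_a^d$ without shifting any piecewise boundary; this holds because $G_{aw}$ enters solely as a positive multiplicative constant multiplying $L_{aw}^{d,\mathbb{B}}|h_{aw}^{d,\mathbb{B}}|^2$, leaving both the comparison of $\rho\hat{\sigma}_n^2$ against $k_a^d + \hat{\sigma}_n^2/\rho$ and the sign of the derivative unchanged. Assembling Case~I and Case~II then yields (\ref{eq3_NJ}) and (\ref{eq4_NJ}).
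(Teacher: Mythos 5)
Your proposal is correct and matches the paper's approach exactly: the paper itself omits this proof, stating only that it is similar to Lemma~\ref{2th1_NJ}, and your argument is precisely that OM proof carried over under the substitution $k_a^o \to k_a^d$, with the correct observation that $G_{aw}$ enters only as a positive multiplicative constant and so alters nothing structurally.
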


\begin{proof}
The proof is similar to Lemma \ref{2th1_NJ}, we omit it here.
\end{proof}

Similar to Theorem~\ref{2th2_NJ}, Alice and Bob only rely on the expected measure of $P_{ew}^{*,d}$ to evaluate the covertness under the DM mode. Note that $\mathbb{E}[P_{ew}^{*,d}]$ is related with the numerical integration of $|h^{m,\mathbb{B}}_{aw}|^{2}$. Thus before deriving $\mathbb{E}[P_{ew}^{*,d}]$, we first present the following lemma.
\begin{lemma}\label{lemma:calu}
If $h^{m,\mathbb{B}}_{aw}$ $(\mathbb{B}\in \{L,N\})$ is the channel fading coefficient of the mmWave channel, we have
\begin{align}\label{Eq:calu}
\int_{0}^{a}\!\!\!{x}f_{|h^{ m,\mathbb{B}}_{ aw}|^{2}}({x})d{x}\!=\!\!\sum_{r=1}^{S_{\mathbb{B} }}\!\binom{S_{\mathbb{B} }}{r}(-1)^{r}\!\!\left[ae^{-r\xi _{\mathbb{B}}a}-\frac{1\!-\!e^{-r\xi _{\mathbb{B}}a}}{r\xi _{\mathbb{B}}}\right],
\end{align}
where $f_{|h^{ m,\mathbb{B}}_{ aw}|^{2}}$ is the PDF of ${|h^{ m,\mathbb{B}}_{ aw}|^{2}}$.
\end{lemma}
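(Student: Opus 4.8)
The plan is to reduce the integral to the already-derived closed-form CDF of the mmWave fading power in (\ref{eq10}), rather than working with the gamma density in (\ref{eq_mmwave_pdf}) directly. The decisive observation is that the Alzer-lemma approximation has turned the CDF $F_{|h^{d,\mathbb{B}}_{aw}|^{2}}$ into a \emph{finite} sum of exponentials $\sum_{r=0}^{S_{\mathbb{B}}}\binom{S_{\mathbb{B}}}{r}(-1)^{r}e^{-r\xi_{\mathbb{B}}x}$, which is elementary to integrate against $x$; the raw gamma density would be far less convenient to handle.

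First I would apply integration by parts to $\int_{0}^{a} x\, f_{|h^{m,\mathbb{B}}_{aw}|^{2}}(x)\,dx$, taking $u=x$ and $dv = f(x)\,dx$ so that $v$ is exactly the CDF $F$. This yields $a F(a) - \int_{0}^{a} F(x)\,dx$, with no boundary contribution at $x=0$. Substituting (\ref{eq10}) for $F$ then produces two finite sums of exponential terms, one from the boundary term $aF(a)$ and one from the integral $\int_0^a F(x)\,dx$.

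Next I would treat the $r=0$ term separately: in $aF(a)$ it contributes $a$, and in $\int_0^a F(x)\,dx$ it contributes $\int_0^a 1\,dx = a$, so the two cancel exactly, which is precisely why the final sum starts at $r=1$. For each $r\geq 1$ the elementary integral $\int_0^a e^{-r\xi_{\mathbb{B}} x}\,dx = (1 - e^{-r\xi_{\mathbb{B}} a})/(r\xi_{\mathbb{B}})$ comes from $\int_0^a F(x)\,dx$, while the boundary term supplies $a\,e^{-r\xi_{\mathbb{B}} a}$. Collecting the common factor $\binom{S_{\mathbb{B}}}{r}(-1)^{r}$ leaves the bracketed expression $[\,a e^{-r\xi_{\mathbb{B}} a} - (1-e^{-r\xi_{\mathbb{B}} a})/(r\xi_{\mathbb{B}})\,]$ and hence (\ref{Eq:calu}).

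There is no genuine analytical obstacle here; the content is entirely bookkeeping once the CDF form is invoked. The only points demanding care are the cancellation of the $r=0$ terms (so that the stated summation correctly begins at $r=1$) and keeping the superscript notation consistent, since the lemma writes $m$ for the mmWave/DM fading coefficient whereas (\ref{eq10})–(\ref{eq_mmwave_pdf}) write $d$. As an independent cross-check, one may instead differentiate (\ref{eq10}) to recover the density $\sum_{r=1}^{S_{\mathbb{B}}}\binom{S_{\mathbb{B}}}{r}(-1)^{r+1} r\xi_{\mathbb{B}}\, e^{-r\xi_{\mathbb{B}} x}$ (the $r=0$ term vanishing) and then evaluate each $\int_0^a x\, e^{-r\xi_{\mathbb{B}} x}\,dx$ by parts; this route leads to the identical expression.
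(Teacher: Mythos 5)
Your proposal is correct and follows essentially the same route as the paper's own proof in Appendix~\ref{ap1}: integration by parts reducing the integral to $aF(a)-\int_0^a F(x)\,dx$, substitution of the finite exponential-sum CDF from (\ref{eq10}), and cancellation of the $r=0$ contributions so the sum starts at $r=1$. No gaps.
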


\begin{proof}
The detailed proof is presented in Appendix~\ref{ap1}.
\end{proof}

Note that the probability of channel model uncertainty as well as antenna gain uncertainty needs to be considered simultaneously. Thus, when UAV adopts the DM transmission mode, $\mathbb{E}[P_{ew}^{*,d}]$ can be given as follows.
\begin{theorem} \label{th2}
When Alice adopts the DM transmission mode, the expected value $\mathbb{E}[P_{ew}^{*,d}]$ from Alice's perspective is given as (\ref{eqn1_NJ}),
\begin{figure*}[t]
\begin{equation} \label{eqn1_NJ}
\mathbb{E}[P_{ew}^{*,d}\!=\,\smashoperator{\sum_{\mathbb{A}\in\{M,S\}}}P_{w}^{\mathbb{A}}\;\smashoperator{\sum_{\mathbb{B}\in\{L,N\}}}P_{\!aw}^{\mathbb{B}}(1\!+\!\Theta^{d}_\mathbb{B})\!\left \{\!1\!\!-\!\frac{1}{2\ln\rho}\!\left\{\!\ln \!\left [\rho P_{a}G_{a}^{\mathbb{A}}G_{w}^{\mathbb{A}}L_{aw}^{m,\mathbb{B}}\!\!\left(\!\varrho^{d}\Theta^{d}_\mathbb{B}\!- \!\sum_{r=1}^{S_{\mathbb{B} }}\binom{\!S_{\mathbb{B}}}{r}(-1)^{r}\frac{1\!-\!e^{-r\xi _{\mathbb{B}}\varrho^{d}}}{r\xi _{\mathbb{B}}}\!\right)\!\!+\!\hat{\sigma}_{n}^{2} \right]\!\!-\!\ln(\hat{\sigma}_{n}^{2})\! \right\}\!\right\}.
\end{equation}
\rule[1cm]{\textwidth}{0.04em}
\vspace*{-0.6cm}
\end{figure*}
where $G_{a}^{\mathbb{A}}$ is refer to (\ref{neq1_gaw}), $\varrho^{d}=\frac{(\rho^{2}-1)\hat{\sigma}_{n}^{2}}{\rho P_{a}G_{a}^{\mathbb{A}}G_{w}^{\mathbb{A}}L_{aw}^{m,\mathbb{B}}}$, and $\Theta^{d}_\mathbb{B}$ is defined as
\begin{equation} \label{eqn2}
\Theta^{d}_\mathbb{B}=\sum\nolimits_{r=1}^{S_{\mathbb{B} }}\binom{S_{\mathbb{B} }}{r}(-1)^{r} e^{-r\xi _{\mathbb{B}}\varrho^{d}}.
\end{equation}
\end{theorem}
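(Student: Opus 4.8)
The plan is to follow the same route as the OM derivation in Theorem~\ref{2th2_NJ}, swapping the Rician statistics of $|h^{o,\mathbb{B}}_{aw}|^{2}$ for the Nakagami-$m$ statistics of $|h^{d,\mathbb{B}}_{aw}|^{2}$ and inserting an extra average over the antenna-gain uncertainty. By Lemma~\ref{th1_NJ} the minimum DEP $P_{ew}^{*,d}$ is $0$ on the event $\rho\hat{\sigma}_{n}^{2}<k_{a}^{d}+\frac{\hat{\sigma}_{n}^{2}}{\rho}$ and equals $1-\frac{\ln(\rho k_{a}^{d}+\hat{\sigma}_{n}^{2})-\ln\hat{\sigma}_{n}^{2}}{2\ln\rho}$ otherwise, with $k_{a}^{d}=P_{a}G_{aw}L_{aw}^{d,\mathbb{B}}|h^{d,\mathbb{B}}_{aw}|^{2}$. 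Since the gain $G_{aw}=G_{a}^{\mathbb{A}}G_{w}^{\mathbb{A}}$ is random through the main-/side-lobe event (probabilities $P_{w}^{\mathbb{A}}$ from~(\ref{neq1}) together with~(\ref{neq1_gaw})) and the link is LoS/NLoS with probability $P_{aw}^{\mathbb{B}}$, I would first condition on the pair $(\mathbb{A},\mathbb{B})$ and write $\mathbb{E}[P_{ew}^{*,d}]=\sum_{\mathbb{A}\in\{M,S\}}P_{w}^{\mathbb{A}}\sum_{\mathbb{B}\in\{L,N\}}P_{aw}^{\mathbb{B}}\,\mathbb{E}_{\mathbb{A},\mathbb{B}}[P_{ew}^{*,d}]$, which already exposes the double sum in~(\ref{eqn1_NJ}).

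For each fixed $(\mathbb{A},\mathbb{B})$ I would split the inner expectation exactly as in~(\ref{2eq8_NJ}). Writing $k_{1}^{d}=\rho\hat{\sigma}_{n}^{2}$ and $k_{2}^{d}=k_{a}^{d}+\frac{\hat{\sigma}_{n}^{2}}{\rho}$, the contribution of the event $k_{1}^{d}<k_{2}^{d}$ is zero, so only $\{k_{1}^{d}\geq k_{2}^{d}\}$ survives. This event is equivalent to $\{|h^{d,\mathbb{B}}_{aw}|^{2}\leq\varrho^{d}\}$ with $\varrho^{d}=\frac{(\rho^{2}-1)\hat{\sigma}_{n}^{2}}{\rho P_{a}G_{a}^{\mathbb{A}}G_{w}^{\mathbb{A}}L_{aw}^{m,\mathbb{B}}}$, and its probability is the Nakagami CDF~(\ref{eq10}) at $\varrho^{d}$; separating the $r=0$ term gives $F_{|h^{d,\mathbb{B}}_{aw}|^{2}}(\varrho^{d})=1+\sum_{r=1}^{S_{\mathbb{B}}}\binom{S_{\mathbb{B}}}{r}(-1)^{r}e^{-r\xi_{\mathbb{B}}\varrho^{d}}=1+\Theta^{d}_{\mathbb{B}}$, i.e.\ the leading factor of~(\ref{eqn1_NJ}).

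It then remains to evaluate the truncated first moment $\int_{0}^{\varrho^{d}}x\,f_{|h^{d,\mathbb{B}}_{aw}|^{2}}(x)\,dx$ that appears inside the logarithm. This is exactly Lemma~\ref{lemma:calu} with $a=\varrho^{d}$, and it returns $\varrho^{d}\Theta^{d}_{\mathbb{B}}-\sum_{r=1}^{S_{\mathbb{B}}}\binom{S_{\mathbb{B}}}{r}(-1)^{r}\frac{1-e^{-r\xi_{\mathbb{B}}\varrho^{d}}}{r\xi_{\mathbb{B}}}$, precisely the bracketed expression multiplying $\rho P_{a}G_{a}^{\mathbb{A}}G_{w}^{\mathbb{A}}L_{aw}^{m,\mathbb{B}}$ in~(\ref{eqn1_NJ}). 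Substituting the probability $1+\Theta^{d}_{\mathbb{B}}$ and this moment into the surviving branch of $P_{ew}^{*,d}$, and reattaching the $\sum_{\mathbb{A}}P_{w}^{\mathbb{A}}\sum_{\mathbb{B}}P_{aw}^{\mathbb{B}}$ weights, reproduces~(\ref{eqn1_NJ}).

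The genuinely mechanical parts (the CDF value and the moment integral) are already packaged in~(\ref{eq10}) and Lemma~\ref{lemma:calu}, so the one delicate step is the same as in Theorem~\ref{2th2_NJ}: moving the expectation through the logarithm, i.e.\ treating the truncated moment as the effective value of $|h^{d,\mathbb{B}}_{aw}|^{2}$ inside $\ln(\rho P_{a}G_{aw}L_{aw}^{d,\mathbb{B}}(\cdot)+\hat{\sigma}_{n}^{2})$. I expect this to be the main obstacle to stating the result cleanly, and I would justify it identically to the OM case. A secondary point requiring care is the bookkeeping of the two-level uncertainty: because both $\varrho^{d}$ and $\Theta^{d}_{\mathbb{B}}$ depend on $G_{a}^{\mathbb{A}}G_{w}^{\mathbb{A}}$, the antenna-gain average must be taken \emph{outside} the nonlinear $P_{ew}^{*,d}$ map and kept under the $\sum_{\mathbb{A}}P_{w}^{\mathbb{A}}$ sum, exactly as written in~(\ref{eqn1_NJ}).
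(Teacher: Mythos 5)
Your proposal matches the paper's own proof essentially step for step: the same split on $\{k_1^d\geq k_2^d\}$ versus its complement, the same identification of $\mathbb{P}(|h^{m,\mathbb{B}}_{aw}|^2\leq\varrho^d)=1+\Theta^d_{\mathbb{B}}$ from the CDF in (\ref{eq10}), the same use of Lemma~\ref{lemma:calu} at $a=\varrho^d$ for the truncated moment, and the same outer averaging over $(\mathbb{A},\mathbb{B})$. You even correctly flag the one step the paper itself leaves implicit, namely placing the conditional moment inside the logarithm exactly as in the OM case.
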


\begin{proof}
For convenience, let's denote $k_1^{d}=\rho\hat{\sigma}_{n}^{2}$ and $k_2^{d}=k^{d}_{a}+\frac{\hat{\sigma}_{n}^{2}}{\rho}$. According to (\ref{eq4_NJ}) in Lemma~\ref{th1_NJ}, we have
\begin{footnotesize}
\begin{align}  \label{eq8_NJ}
\mathbb{E}[&P_{\!ew}^{*,d}]\!=\!\mathbb{E}_{k_{1}^{d}< k_{2}^{d}}[P_{ew}^{*,d}]\mathbb{P}(k_{1}^{d}\!<\!k_{2}^{d})\! +\!\mathbb{E}_{k_{1}^{d}\!\geq \! k_{2}^{d}}[P_{ew}^{*,d}]\mathbb{P}(k_{1}^{d}\!\geq \!k_{2}^{d}) \notag \\
&=\mathbb{P}(k_{1}^{d}\geq k_{2}^{d})\notag\\
&\times\left( \!\!1\!\!-\!\frac{\! \ln\!\left( \!\rho P_{a}G_{a}^{\mathbb{A}}G_{w}^{\mathbb{A}}L_{aw}^{m,\mathbb{B}}\mathbb{E}_{k_{1}^{d}\geq k^{d}_{2}}\!\!\left[|{h}^{m,\mathbb{B}}_{aw}|^{2}\right]\!\!+\!\hat{\sigma}_{n}^{2}\!\right)\!\!-\!\ln(\!\hat{\sigma}_{n}^{2}\!)}{2\ln(\rho)}\!\!\right)\!\!,
\end{align}
\end{footnotesize}\normalsize
where the term $\mathbb{P}(k^{d}_{1}\geq k^{d}_{2})$, we can get
\begin{align} \label{eq9_NJ}
\mathbb{P}&(k^{d}_{1}\!\geq \!k^{d}_{2})\!=\mathbb{P}\left(\rho\hat{\sigma}_{n}^{2} \geq P_{a}G_{a}^{\mathbb{A}}G_{w}^{\mathbb{A}}L_{aw}^{m,\mathbb{B}}|{h}^{m,\mathbb{B}}_{ aw}|^{2}+\frac{\hat{\sigma}_{n}^{2}}{\rho}\right)\notag\\
&\!=\mathbb{P}\left (|{h}^{m,\mathbb{B}}_{aw}|^{2}\!\leq  \!\varrho^{d}\right) \!=\! \sum_{r=0}^{S_{\mathbb{B} }}\binom{S_{\mathbb{B} }}{r}(-1)^{r} e^{-r\xi _{\mathbb{B}}\varrho^{d}}.
\end{align}

For the expectation term in (\ref{eq8_NJ}), we have
\begin{align} \label{eq11_NJ}
\mathbb{E}&_{k_{1}^{d}\geq k^{d}_{2}} \left [ |{h}^{m,\mathbb{B}}_{aw}|^{2} \right]=\mathbb{E}\left [ |{h}^{m,\mathbb{B}}_{aw}|^{2}\Bigg||{h}^{m,\mathbb{B}}_{aw}|^{2}\leq \varrho^{d}\right ] \notag \\
&=\int_{0}^{ \varrho^{d}}{x}f_{|{h}^{m,\mathbb{B}}_{aw}|^{2}}({x})d{x} \notag \\
&\overset{(a)}=\sum_{r=1}^{S_{\mathbb{B} }}\binom{S_{\mathbb{B}}}{r}(-1)^{r}\left[ \varrho^{d}e^{-r\xi _{\mathbb{B}} \varrho^{d}}-\frac{1-e^{-r\xi _{\mathbb{B}} \varrho^{d}}}{r\xi _{\mathbb{B}}}\right],
\end{align}
where step (a) is due to Lemma~\ref{lemma:calu}. Finally, substituting (\ref{eq9_NJ}), (\ref{eq11_NJ}) into (\ref{eq8_NJ}), we can obtain $\mathbb{E}[P_{ew}^{*,d}]$ as (\ref{eqn1_NJ}).
\end{proof}

\subsection{Effective Covert Rate and Covert Shannon Capacity}

Similarly as in Section~\ref{oa_ecrcsc}, the ECR $R^{d}_{ab}$ of the system under the DM transmission mode can be given as the following theorem.
\begin{theorem}\label{le3_s1_out_NJ}
When UAV adopts the DM transmission mode, the ECR $R_{ab}^{d}$ of the concerned A2G system is given by
\begin{equation}
\label{Eq:s1_out_NJ}
R^{d}_{ab}\!=\!R_{b}\!\times\!\left\{\! 1\!-\!\smashoperator{\sum_{\mathbb{A}\in\{\! M,S\!\}}}{P}^{\mathbb{A}}_{b}\;\smashoperator{\sum_{\mathbb{B}\in\{\!L,N\!\}}}P_{ab}^{\mathbb{B}}\left[1\!+\! \mathcal{F}^{d}_{\mathrm{Ei}}(\!\rho\hat{\sigma}_{n}^{2}\!)\!-\!\mathcal{F}^{d}_{\mathrm{Ei}}\!\left(\!\frac{\hat{\sigma}_{n}^{2}}{\rho}\!\right)\!\right]\!\right\},
\end{equation}
where $R_{b}$ is the target covert rate, and $\mathcal{F}^{d}_{\mathrm{Ei}}(\cdot)$ in (\ref{Eq:s1_out_NJ}) is defined as following
\begin{align}\label{Eq:s2_fp}
 \mathcal{F}^{d}_{\mathrm{Ei}}(x)\!=\! \sum_{r=1}^{S_{\mathbb{B}}}\binom{S_{\mathbb{B} }}{r}(-1)^{r}\frac{1}{2\ln{\rho}} \mathrm{Ei}\left({\frac{-r\xi _{\mathbb{B}}\gamma_{th}x}{P_aG_{a}^{M}G_{b}^{\mathbb{A}}L_{ab}^{m,\mathbb{B}}}}\right).
\end{align}
\end{theorem}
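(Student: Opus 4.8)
The plan is to follow the same route as the proof of Theorem~\ref{th3_s2_out_NJ}. By the ECR definition $R_{ab}^{d}=R_{b}(1-P_{out}^{d})$ from Section~\ref{param_definitions}, it suffices to evaluate the outage probability $P_{out}^{d}=\mathbb{P}(\gamma_{ab}^{d}<\gamma_{th})$ at Bob. The new feature relative to the OM case is that two sources of uncertainty now act together: the LoS/NLoS state of the $a\to b$ channel and the lobe in which Bob's antenna points. I would therefore first condition on both and write, in parallel with (\ref{Eq:s2_proof_out11_NJ}), the double average $P_{out}^{d}=\sum_{\mathbb{A}\in\{M,S\}}P_{b}^{\mathbb{A}}\sum_{\mathbb{B}\in\{L,N\}}P_{ab}^{\mathbb{B}}\,\mathbb{P}(\gamma_{ab}^{d}<\gamma_{th})$. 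From (\ref{eq2_bob}) the SNR is $\gamma_{ab}^{d}=P_{a}G_{ab}L_{ab}^{d,\mathbb{B}}|h_{ab}^{d,\mathbb{B}}|^{2}/\sigma_{b}^{2}$ with $G_{ab}=G_{a}^{M}G_{b}^{\mathbb{A}}$, since Alice always steers her main lobe toward Bob, which is exactly why the argument of $\mathrm{Ei}$ in (\ref{Eq:s2_fp}) carries $G_{a}^{M}G_{b}^{\mathbb{A}}$.

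Next I would reduce the inner probability to a fading-CDF evaluation, $\mathbb{P}(\gamma_{ab}^{d}<\gamma_{th})=\mathbb{P}(|h_{ab}^{d,\mathbb{B}}|^{2}<\sigma_{b}^{2}\gamma_{th}/(P_{a}G_{a}^{M}G_{b}^{\mathbb{A}}L_{ab}^{d,\mathbb{B}}))$, and substitute the Nakagami CDF (\ref{eq10}). The $r=0$ term of that binomial sum is the constant $1$, and each remaining $r\ge 1$ term carries a factor $\exp(-r\xi_{\mathbb{B}}\,\sigma_{b}^{2}\gamma_{th}/(P_{a}G_{a}^{M}G_{b}^{\mathbb{A}}L_{ab}^{d,\mathbb{B}}))$. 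This is the step where the Nakagami route is actually simpler than the Rician one of Theorem~\ref{th3_s2_out_NJ}: no Marcum-Q approximation is needed, because the CDF is already a finite exponential sum.

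The crux is then to average over the log-uniform noise power $\sigma_{b}^{2}$ with PDF (\ref{Eq:noise_un_pdf}). The $r=0$ term integrates to $1$, while each $r\ge 1$ term gives an integral $\int_{\hat{\sigma}_{n}^{2}/\rho}^{\rho\hat{\sigma}_{n}^{2}}e^{-c_{r}x}\frac{1}{2x\ln\rho}\,dx$ with $c_{r}=r\xi_{\mathbb{B}}\gamma_{th}/(P_{a}G_{a}^{M}G_{b}^{\mathbb{A}}L_{ab}^{d,\mathbb{B}})$, which is dispatched by the same exponential-integral identity \cite[Eq.~(2.325.7)]{gradshteyn2014table} used in step~(c) of (\ref{Eq:s2_proof_out_TANJ}), yielding $\tfrac{1}{2\ln\rho}[\mathrm{Ei}(-c_{r}\rho\hat{\sigma}_{n}^{2})-\mathrm{Ei}(-c_{r}\hat{\sigma}_{n}^{2}/\rho)]$. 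Summing over $r$ with the $\binom{S_{\mathbb{B}}}{r}(-1)^{r}$ weights reproduces exactly $\mathcal{F}^{d}_{\mathrm{Ei}}(\rho\hat{\sigma}_{n}^{2})-\mathcal{F}^{d}_{\mathrm{Ei}}(\hat{\sigma}_{n}^{2}/\rho)$ as defined in (\ref{Eq:s2_fp}), so that $\mathbb{P}(\gamma_{ab}^{d}<\gamma_{th})=1+\mathcal{F}^{d}_{\mathrm{Ei}}(\rho\hat{\sigma}_{n}^{2})-\mathcal{F}^{d}_{\mathrm{Ei}}(\hat{\sigma}_{n}^{2}/\rho)$. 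Substituting this into $P_{out}^{d}$ and then into $R_{ab}^{d}=R_{b}(1-P_{out}^{d})$ produces (\ref{Eq:s1_out_NJ}).

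No new analytic machinery beyond that of Theorem~\ref{th3_s2_out_NJ} is required; the only real obstacle is bookkeeping. I expect the delicate points to be: keeping the double sum over $\mathbb{A}$ and $\mathbb{B}$ consistent, cleanly splitting off the $r=0$ term so that the leading constant $1$ is isolated, and tracking the signs so that the result is $1+\mathcal{F}^{d}_{\mathrm{Ei}}(\rho\hat{\sigma}_{n}^{2})-\mathcal{F}^{d}_{\mathrm{Ei}}(\hat{\sigma}_{n}^{2}/\rho)$. Note the contrast with the OM case, where the leading $1$ comes from $F=1-Q_{1}(\cdot,\cdot)$ and the difference appears as $-\mathcal{F}^{o}_{\mathrm{Ei}}(\rho\hat{\sigma}_{n}^{2})+\mathcal{F}^{o}_{\mathrm{Ei}}(\hat{\sigma}_{n}^{2}/\rho)$; here the $(-1)^{r}$ already lives inside $\mathcal{F}^{d}_{\mathrm{Ei}}$ and the constant $1$ comes from the $r=0$ term of the binomial CDF, which flips the apparent sign pattern.
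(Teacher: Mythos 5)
Your proposal is correct and follows essentially the same route as the paper's proof: the double conditioning on antenna lobe $\mathbb{A}$ and channel state $\mathbb{B}$, the reduction to the Nakagami binomial CDF (\ref{eq10}) with the $r=0$ term isolated as the constant $1$, and the log-uniform noise average via the exponential-integral identity are exactly the steps in (\ref{Eq:s1_proof_out11_NJ})--(\ref{Eq:s2_proof_out_NJ}). Your remark on why the sign pattern flips relative to the OM case is also accurate.
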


\begin{proof}
Similarly, to analyze the ECR, we need to determine the outage probability $P^{d}_{out}$ at Bob. Due to the uncertainties of the LoS/NLoS channel and antenna gains, the outage probability at Bob $P^{d}_{out}$ under the DM transmission mode can be determined by
\begin{equation}\label{Eq:s1_proof_out11_NJ}
P^{d}_{out}\!=\!{\sum\nolimits_{\mathbb{A}\in\{M,S\}}}{P}^{\mathbb{A}}_{b}{\sum\nolimits_{\mathbb{B}\in\{L,N\}}}P_{ab}^{\mathbb{B}}\times \mathbb{P}\left(\gamma^{d}_{ab}<\gamma_{th}\right).
\end{equation}
According to (\ref{eq2_bob}), the SNR at Bob is given as $\gamma^{d} _{ab}={P_a G_{ab}L_{ab}^{d,\mathbb{B}}|h^{d,\mathbb{B}}_{ab}|^{2}}/{\sigma _{b}^{2}}$.
Thus, the outage probability at Bob $P^{d}_{out}$ is given by
\begin{align}\label{Eq:s2_proof_out_NJ}
\mathbb{P}&\left( \gamma^{d}_{ab}<\gamma_{th}\right)
=\mathbb{P}\left(\frac{P_aG_{a}^{M}G_{b}^{\mathbb{A}}L_{ab}^{m,\mathbb{B}}|{h}^{m,\mathbb{B}}_{ab}|^{2} }{\sigma _{b}^{2}}<\gamma_{th}\right) \notag \\
& =\mathbb{P}\left(|{h}^{m,\mathbb{B}}_{ab}|^{2}<\frac{\sigma _{b}^{2}\gamma_{th}}{P_aG_{a}^{M}G_{b}^{\mathbb{A}}L_{ab}^{m,\mathbb{B}}}\right)   \notag \\
&=\sum\nolimits_{r=0}^{S_{\mathbb{B} }}\binom{S_{\mathbb{B} }}{r}(-1)^{r}\mathbb{E}_{\sigma _{b}^{2}} \left [ \exp \left({\frac{-r\xi _{\mathbb{B}}\gamma_{th}\sigma _{b}^{2}}{P_aG_{a}^{M}G_{b}^{\mathbb{A}}L_{ab}^{m,\mathbb{B}}}}\right)\right ]\notag \\
&=1\!+\!\sum_{r=1}^{S_{\mathbb{B} }}\!\binom{S_{\mathbb{B} }}{r}\!(-1)^{r}\!\!\int_{\frac{\hat{\sigma}_{n}^{2}}{\rho}}^{\rho\hat{\sigma}_{n}^{2}}\!\!\exp \!\left(\!{\frac{-r\xi _{\mathbb{B}}\gamma_{th}x}{P_aG_{a}^{M}G_{b}^{\mathbb{A}}L_{ab}^{m,\mathbb{B}}}}\!\right)\!\frac{1}{2x\ln{\rho} }dx \notag\\
&\overset{(a)}{=}1+\sum\nolimits_{r=1}^{S_{\mathbb{B} }}\binom{S_{\mathbb{B} }}{r}(-1)^{r}\frac{1}{2\ln{\rho} }\mathrm{Ei}\left({\frac{-r\xi _{\mathbb{B}}\gamma_{th}x}{P_aG_{a}^{M}G_{b}^{\mathbb{A}}L_{ab}^{m,\mathbb{B}}}}\right)\Bigg|_{\frac{\hat{\sigma}_{n}^{2}}{\rho}}^{\rho\hat{\sigma}_{n}^{2}} \notag\\
&=1+\sum\nolimits_{r=1}^{S_{\mathbb{B} }}\binom{S_{\mathbb{B} }}{r}(-1)^{r}\!\frac{1}{2\ln{\rho} }\mathrm{Ei}\left({\frac{-r\xi _{\mathbb{B}}\gamma_{th}\rho\hat{\sigma}_{n}^{2}}{P_aG_{a}^{M}G_{b}^{\mathbb{A}}L_{ab}^{m,\mathbb{B}}}}\right)  \notag \\
&  -\sum\nolimits_{r=1}^{S_{\mathbb{B} }}\!\binom{S_{\mathbb{B} }}{r}(-1)^{r}\frac{1}{2\ln{\rho} }\mathrm{Ei}\left({\frac{-r\xi _{\mathbb{B}}\gamma_{th}\hat{\sigma}_{n}^{2}}{P_aG_{a}^{M}G_{b}^{\mathbb{A}}L_{ab}^{m,\mathbb{B}}\rho}}\right),
\end{align}
where step (a) is due to $\int e^{ax}x^{-1}dy=\mathrm{Ei}(ax)$ \cite[Eq. (2.325.1)]{gradshteyn2014table}. Then, submitting (\ref{Eq:s2_proof_out_NJ}) into (\ref{Eq:s1_proof_out11_NJ}), we obtain (\ref{Eq:s1_out_NJ}).
\end{proof}

Next, we derive the covert Shannon capacity $C^{d}_{ab}$ of the considered system under the DM transmission mode.
\begin{theorem} \label{th_s1_covert_rate_NJ}
In the considered A2G system, the CSC $C^{d}_{ab}$ of the system under the DM transmission mode is determined by
\begin{equation}
\label{s1_rab_NJ}
C^{d}_{ab}\!=\,\,\smashoperator{\sum_{\mathbb{A}\in\{M,S\}}}P^{\mathbb{A}}_{b}\,\,\smashoperator{\sum_{\mathbb{B}\in\{L,N\}}}\! P_{ab}^{\mathbb{B}} \frac{W^{d}}{2\ln2\!\times\!\ln\rho}\!\left [\! \mathcal{F}^{d}_{\mathbf{Li}}\!\left (\!\frac{1}{\rho\hat{\sigma}_{n}^{2}}\!\right)\!-\! \mathcal{F}^{d}_{\mathbf{Li}}\!\left (\!\frac{\rho}{\hat{\sigma}_{n}^{2}}\!\right)\!\right ],
\end{equation}
where $W^{d}$ is the bandwidth of mmWave, and $\mathcal{F}^{d}_{\mathbf{Li}}(\cdot)$ in (\ref{s1_rab_NJ}) is defined as following
\begin{align}\label{Eq:s2_fr}
\mathcal{F}^{d}_{\mathbf{Li}}(x)= \int_{0}^{\infty}\mathbf{Li}_{2} (-P_aG_{a}^{M}G_{b}^{\mathbb{A}}L_{ab}^{m,\mathbb{B}}xy)f_{|{h}^{m,\mathbb{B}}_{ab}|^{2}}(y)dy,
\end{align}
$\mathbf{Li}_{2}(\cdot)$ is refer to Theorem \ref{th_s2_covert_rate_TANJ}, and $f_{|h^{ m,\mathcal{B}}_{ab}|^{2}}(y)$ is refer to (\ref{eq_mmwave_pdf}).
\end{theorem}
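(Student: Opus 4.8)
The plan is to mirror the derivation of the OM covert Shannon capacity in Theorem~\ref{th_s2_covert_rate_TANJ}, since the only structural differences under the DM mode are the presence of the directivity gains $G_a^{M}G_b^{\mathbb{A}}$ inside the SNR, the mmWave (normalized gamma) fading law in place of the Rician one, and the additional randomness of the main/side lobe at Bob. First I would invoke the law of total expectation to decompose the CSC over both the channel-type uncertainty and the antenna-lobe uncertainty, writing $C_{ab}^{d}=\sum_{\mathbb{A}\in\{M,S\}}P_b^{\mathbb{A}}\sum_{\mathbb{B}\in\{L,N\}}P_{ab}^{\mathbb{B}}\,\mathbb{E}[\,W^{d}\log_2(1+\gamma_{ab}^{d})\,]$, where by (\ref{eq2_bob}) the conditional SNR is $\gamma_{ab}^{d}=P_aG_a^{M}G_b^{\mathbb{A}}L_{ab}^{m,\mathbb{B}}|h_{ab}^{m,\mathbb{B}}|^{2}/\sigma_b^{2}$ (Alice always steers her main lobe $G_a^{M}$ toward Bob).

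For a fixed $(\mathbb{A},\mathbb{B})$ pair, I would write the per-state capacity as a double expectation over the fading power $|h_{ab}^{m,\mathbb{B}}|^{2}$ and the log-uniform noise $\sigma_b^{2}$ with PDF (\ref{Eq:noise_un_pdf}), and carry out the noise integral first. Substituting the density $\frac{1}{2x\ln\rho}$ over $[\hat\sigma_n^{2}/\rho,\rho\hat\sigma_n^{2}]$ and then changing variable $y=1/x$ converts the inner integral to the canonical form $\frac{1}{2\ln2\,\ln\rho}\int \ln(1+P_aG_a^{M}G_b^{\mathbb{A}}L_{ab}^{m,\mathbb{B}}|h_{ab}^{m,\mathbb{B}}|^{2}y)\,y^{-1}\,dy$ over $[1/(\rho\hat\sigma_n^{2}),\rho/\hat\sigma_n^{2}]$. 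This is exactly the integral solved in steps (c)--(d) of Theorem~\ref{th_s2_covert_rate_TANJ}: its antiderivative is the Lerch transcendent via \cite[Eq.~(2.728.2)]{gradshteyn2014table}, and the identity $\mathbf{Li}_{2}(z)=z\mathbf{\Phi}(z,2,1)$ collapses it to $-\mathbf{Li}_{2}(-P_aG_a^{M}G_b^{\mathbb{A}}L_{ab}^{m,\mathbb{B}}|h_{ab}^{m,\mathbb{B}}|^{2}y)$ evaluated at the two endpoints.

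Finally I would interchange the remaining expectation over $|h_{ab}^{m,\mathbb{B}}|^{2}$ with the endpoint evaluation; because the mmWave fading power follows the normalized gamma PDF (\ref{eq_mmwave_pdf}) rather than the Rician one, the two resulting endpoint terms are precisely $\mathcal{F}^{d}_{\mathbf{Li}}(1/(\rho\hat\sigma_n^{2}))$ and $\mathcal{F}^{d}_{\mathbf{Li}}(\rho/\hat\sigma_n^{2})$ as defined in (\ref{Eq:s2_fr}). Collecting the $\frac{W^{d}}{2\ln2\,\ln\rho}$ prefactor and reinstating the outer $\sum_{\mathbb{A}}\sum_{\mathbb{B}}$ weights then yields (\ref{s1_rab_NJ}).

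I do not expect a genuine obstacle, since the dilogarithm reduction of the noise integral is distribution-agnostic and was already established for the OM case; the only care required is bookkeeping — keeping the product gain $G_a^{M}G_b^{\mathbb{A}}$ inside the argument of $\mathbf{Li}_{2}$ and attaching the correct lobe probability $P_b^{\mathbb{A}}$ to each term — and leaving $\mathcal{F}^{d}_{\mathbf{Li}}$ as an integral against the gamma PDF, which does not admit a simpler closed form.
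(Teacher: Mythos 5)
Your proposal is correct and follows exactly the route the paper intends: the paper omits this proof, stating only that it is ``similar to Theorem~\ref{th_s2_covert_rate_TANJ},'' and your reconstruction --- total expectation over the lobe and LoS/NLoS states, the log-uniform noise integral reduced to the dilogarithm via the Lerch identity, and the residual gamma-fading expectation left as $\mathcal{F}^{d}_{\mathbf{Li}}$ --- is precisely that argument with the DM-specific gain $G_a^{M}G_b^{\mathbb{A}}$ and PDF (\ref{eq_mmwave_pdf}) substituted in.
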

\begin{proof}
The proof is similar to Theorem \ref{th_s2_covert_rate_TANJ}, we omit here.
\end{proof}


\section{Performance Optimization and Mode Selection}\label{sec_results}
In this section, we establish the optimization problems of the ECR and CSC maximization under the OM and DM transmission modes, and further propose a hybrid OM/DM transmission mode.

\subsection{Maximum Effective Covert Rate} \label{oecr}
From Theorem \ref{th3_s2_out_NJ} and \ref{le3_s1_out_NJ}, we note that the ECR $R_{ab}$ is related with the transmission power $P_{a}$ and target rate $R_b$. Besides, a large $P_{a}$ and $R_b$ results in a small $\mathbb{E}[P_{ew}^{*,\mathbb{C}}]$ and a large $P_{out}$, respectively. Thus, for a certain position $U(x,y,z)$, the UAV intends to find its optimal transmission power $P_{a}$ and optimal target rate $R_b$ to maximize $R_{ab}$ with the DEP constraint. The corresponding optimal problems under the OM and DM transmission modes can be respectively formulated as the follows
\begin{subequations} \label{Eq:Max_Coverate_om}
\begin{align}
\bar{R}_{a,b}^{*,o}(x_a,y_a,h_a)&=\mathop{\max}\limits_{P_{a},R_{b}} R_{ab}^{o}(x_a,y_a,h_a), \\
\qquad s.t.\, \,\, &\mathbb{E}[P_{ew}^{*,\mathbb{C}}] \geq 1-\epsilon \label{Eq:con_dep_r_om} \\
& P_{a} \le P_{max} \label{Eq:con_pa_r_om}
\end{align}
\end{subequations}
and
\begin{subequations} \label{Eq:Max_Coverate_dm}
\begin{align}
\bar{R}_{a,b}^{*,d}(x_a,y_a,h_a)&=\mathop{\max}\limits_{P_{a},R_{b}} R_{ab}^{d}(x_a,y_a,h_a), \\
\qquad s.t.\, \,\, &\mathbb{E}[P_{ew}^{*,\mathbb{C}}] \geq 1-\epsilon \label{Eq:con_dep_r_dm} \\
& P_{a} \le P_{max} \label{Eq:con_pa_r_dm}
\end{align}
\end{subequations}
where $\bar{R}^{*,o}_{ab}(x,y,h_a)$ (resp. $\bar{R}^{*,d}_{ab}(x,y,h_a)$) denotes the maximum $R_{ab}^{o}(x,y,h_a)$ (resp. $R_{ab}^{d}(x,y,h_a)$), which characterizes the maximum of average \emph{successfully} transmitted amount of information subject to a covertness requirement $\epsilon$. Although we cannot obtain a closed-form result of optimal transmission power and optimal target rate due to transcendental functions, we can obtain the optimal solutions by numerical search methods.

\subsection{Maximum Covert Shannon Capacity}\label{ocsc}

According to Theorem \ref{th_s2_covert_rate_TANJ} and \ref{th_s1_covert_rate_NJ}, we know that a larger $P_{a}$ will lead to a larger CSC $C_{ab}$. But, according to the detection strategy of Willie, a large $P_{a}$ also lead to a lower DEP $\mathbb{E}[P_{ew}^{*,\mathbb{C}}]$. Therefore, UAV hopes to maximize the CSC of the A2G system by optimizing the transmission power under the OM and DM transmission modes, which can be respectively formulated as
\begin{subequations} \label{Eq:Max_capcacity_om}
\begin{align}
\bar{C}_{a,b}^{*,o}(x_a,y_a,h_a)&=\mathop{\max}\limits_{P_{a}} C_{ab}^{o}(x_a,y_a,h_a), \\
\qquad s.t.\, \,\, &\mathbb{E}[P_{ew}^{*,\mathbb{C}}] \geq 1-\epsilon  \label{Eq:con_dep_c_om}  \\
& P_{a} \le P_{max} \label{Eq:con_pa_c_om}
\end{align}
\end{subequations}
and
\begin{subequations} \label{Eq:Max_capcacity_dm}
\begin{align}
\bar{C}_{a,b}^{*,d}(x_a,y_a,h_a)&=\mathop{\max}\limits_{P_{a}} C_{ab}^{d}(x_a,y_a,h_a), \\
\qquad s.t.\, \,\, &\mathbb{E}[P_{ew}^{*,\mathbb{C}}] \geq 1-\epsilon  \label{Eq:con_dep_c_dm}  \\
& P_{a} \le P_{max} \label{Eq:con_pa_c_dm}
\end{align}
\end{subequations}
where $\bar{C}^{*,o}_{ab}(x_a,y_a,h_a)$ (resp. $\bar{C}^{*,d}_{ab}(x_a,y_a,h_a)$) denote the maximum $C_{ab}^{o}(x_a,y_a,h_a)$ (resp. $C_{ab}^{d}(x_a,y_a,h_a)$), which characterizes the \emph{upper bound} of the time average rate of messages delivered from transmitter to the destination subject to a covertness requirement $\epsilon$. Similarly, we can solve these optimization problems through numerical search methods.

\subsection{Optimal Transmission Mode for Covert Communication}

Although mmWave has a larger bandwidth than low-frequency microwave, it also have the disadvantage of faster attenuation. Thus, with the dynamic change of the distance from UAV to Bob and Willie, respectively, the hybrid OM/DM transmission mode of the UAV would be superior to the pure OM or DM transmission mode in terms of covert performance. To confirm our idea, we propose a hybrid OM/DM transmission mode which allows UAV to adaptively switch between the OM and DM transmission modes based on (\ref{Eq:Max_Coverate_om})-(\ref{Eq:Max_capcacity_dm}). Specifically, for a given position ($x_a,y_a,h_a$) of UAV, we first calculate the distance $d_{ab}$ from UAV to Bob and the distance $d_{aw}$ from UAV to Willie. Then, we submit them into (\ref{Eq:Max_Coverate_om}) and (\ref{Eq:Max_Coverate_dm}) (resp. (\ref{Eq:Max_capcacity_om}) and (\ref{Eq:Max_capcacity_om})) and solve the optimal problems. We select the optimal transmission mode $I_{ECR}$ (resp. $I_{CSC}$) by comparing $\bar{R}_{ab}^{*,o}(x_a,y_a,h_a)$ and $\bar{R}_{ab}^{*,d}(x_a,y_a,h_a)$ (resp. $\bar{C}_{a,b}^{*,o}(x_a,y_a,h_a)$ and $\bar{C}_{a,b}^{*,d}(x,y,z)$), where $I_{ECR}$ and $I_{CSC}$ are the indicators of the optimal selection mode for maximizing the ECR and CSC, which are respectively denoted as
	\begin{equation} \label{eq:I_ecr}
	\begin{aligned}
	I_{ECR}=\begin{cases}
	 \text{OM mode},& \bar{R}_{ab}^{*,o}(x_a,y_a,h_a)\ge \bar{R}_{ab}^{*,d}(x_a,y_a,h_a),\\
	 \text{DM mode},& otherwise,
	\end{cases}
	\end{aligned}
	\end{equation}
and
	\begin{equation} \label{eq:I_csc}
	\begin{aligned}
	I_{CSC}=\begin{cases}
	 \text{OM mode},& \bar{C}_{ab}^{*,o}(x_a,y_a,h_a)\ge \bar{C}_{ab}^{*,d}(x_a,y_a,h_a),\\
	 \text{DM mode},& otherwise.	
 \end{cases}
	\end{aligned}
	\end{equation}
 Overall, the hybrid OM/DM transmission mode selection algorithm for optimal covert communication can be described as Algorithm~\ref{Alg:Selection_Max_CR}.
\begin{algorithm}\label{Alg:Selection_Max_CR}
    \caption{Hybrid OM/DM Transmission Mode Selection Algorithm}
    \SetAlgoLined
    \SetKwInOut{Input}{input}
    \SetKwInOut{Output}{output}
    \DontPrintSemicolon
    \Input{UAV's location $U(x_a,y_a,h_a)$, Bob's location $U(x_b,y_b,h_b)$, Willie's location $U(x_w,y_w,h_w)$.}
    \Output{The mode selection indicator $I_{ECR}$ or $I_{CSC}$.}
       Calculate the distances $d_{ab}$ and $d_{aw}$, respectively.\\
        \uIf {UAV intends to maximize the ECR}
        {Submit $d_{ab}$ and $d_{aw}$ into (\ref{Eq:Max_Coverate_om}) and (\ref{Eq:Max_Coverate_dm}) and solve the optimal problems;\\
         Obtain $I_{ECR}$ by comparing $\bar{R}_{ab}^{*,o}(x_a,y_a,h_a)$ with $\bar{R}_{ab}^{*,d}(x_a,y_a,h_a)$ according to (\ref{eq:I_ecr}).}
        \ElseIf{UAV intends to maximize the CSC}
        {Submit $d_{ab}$ and $d_{aw}$ into (\ref{Eq:Max_capcacity_om}) and (\ref{Eq:Max_capcacity_dm}) and solve the optimal problems;\\
         Obtain $I_{CSC}$ by comparing $\bar{C}_{ab}^{*,o}(x_a,y_a,h_a)$ with $\bar{C}_{ab}^{*,d}(x_a,y_a,h_a)$ according to(\ref{eq:I_csc}).}
         Return $I_{ECR}$ or $I_{CSC}$.
\end{algorithm}


\section{Numerical Results} \label{sec_num_results}

This section provides extensive numerical results to illustrate the system performance under both the OM mode and DM mode, such that optimal model for covert communication in the concerned hybrid microwave/mmWave A2G systems can be obtained.

In the following, we considered the A2G system where Bob is located at $(-500, 0, 0)$, Willie is located at $(1000, 0, 0)$, UAV flies at a fixed altitude of $h_{a}=500$m, the minimum and maximum safe distance limits for UAV flight are $d_{aw}^{min}=300$m $d_{aw}^{max}=1500$m, respectively. The OM transmission mode of Alice transmits at a typical frequency $2.5$GHZ with $40$-MHz bandwidth as \cite{akdeniz2014millimeter}, and the DM transmission mode transmits at a typical mmWave frequency $73$GHz and with $100$-MHz bandwidth as \cite{liu2017millimeter,zhang2020optimized}. We summarized the considered parameters in Table \ref{num_parameters}, unless explicitly mentioned.

\renewcommand{\arraystretch}{1.4}
\begin{table}[t]
	\centering
	\caption{Network Parameter Settings}
	\label{num_parameters}
	\begin{tabular}{|m{0.30\textwidth}|m{0.13\textwidth}|}
		\hline
\textbf{Network Parameters} & \textbf{Values} \\
		\hline
\hline
		UPA antenna elements ($\mathcal{N}_a$, $\mathcal{N}_b$, $\mathcal{N}_w$) & (6, 18, 18)\\
		\hline
		S-curve parameters ($\sigma$, $f$) & (4.88, 0.429) \\
		\hline
		OM channel path loss coefficients ($\beta_{L}^{o}$, $\beta_{N}^{o}$) & ($10^{-6}$, $10^{-7}$)\\
		\hline
		OM channel path loss exponents ($\alpha_{L}^{o}$, $\alpha_{N}^{o}$) &  (1.64, 2.71) \\
		\hline
		DM channel path loss coefficients ($\beta_{L}^{d}$, $\beta_{N}^{d}$) & ($10^{-6.11}$, $10^{-7.18}$)\\
		\hline
		DM channel path loss exponents ($\alpha_{L}^{d}$, $\alpha_{N}^{d}$) &  (2, 3) \\
		\hline
		Rician factor ($k_{0}$, $k_{\pi/2}$) & (5, 15) dBm \\
		\hline
		Nakagami-m fading shape parameters ($S_L$, $S_N$) & (3, 2) \\
		\hline
		Nominal noise power $\hat{\sigma} _{n}^{2}$ & -80 dBm\\
		\hline
		Noise uncertainty level $\rho$ & 2 dB\\
		\hline
		Target rate $R_b$ & 1 Mbits/s/Hz\\
		\hline
		Covertness requirement $\epsilon$ & 0.2\\
		\hline
	\end{tabular}
\end{table}

\subsection{Analysis of Expected Minimum DEP}

\begin{figure}[t]
    \centering
    \begin{minipage}{0.48\linewidth}
        \centering
        \includegraphics[width=\linewidth]{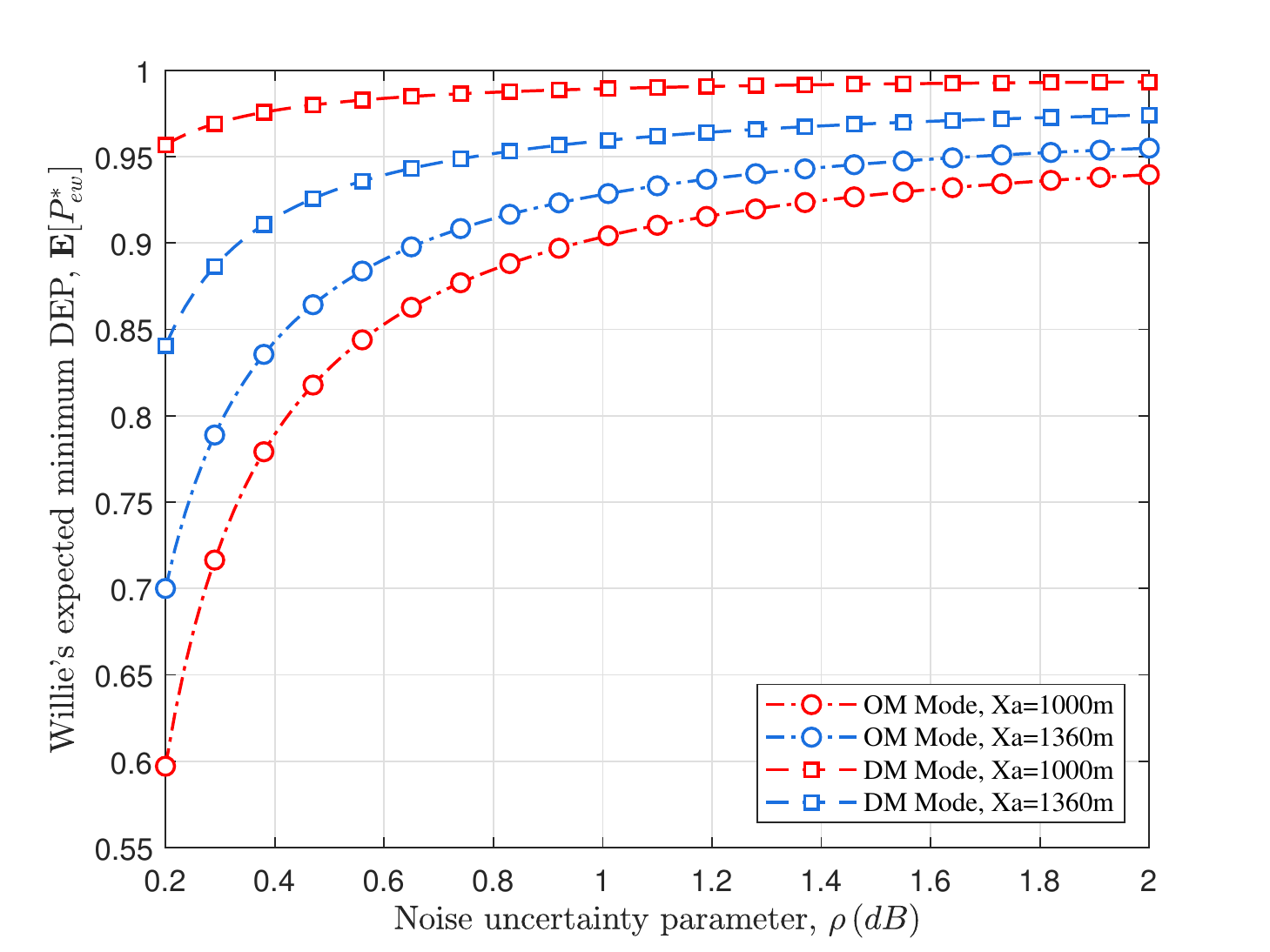}
        \vspace{-0.3cm}
        \caption{The expected minimum DEP $\mathbb{E}[P_{ew}^{*}]$ vs. noise uncertainty $\rho$.} 
    \label{fig4_1}
    \end{minipage}\hfill
    \begin{minipage}{0.48\linewidth}
        \centering
        \includegraphics[width=\linewidth]{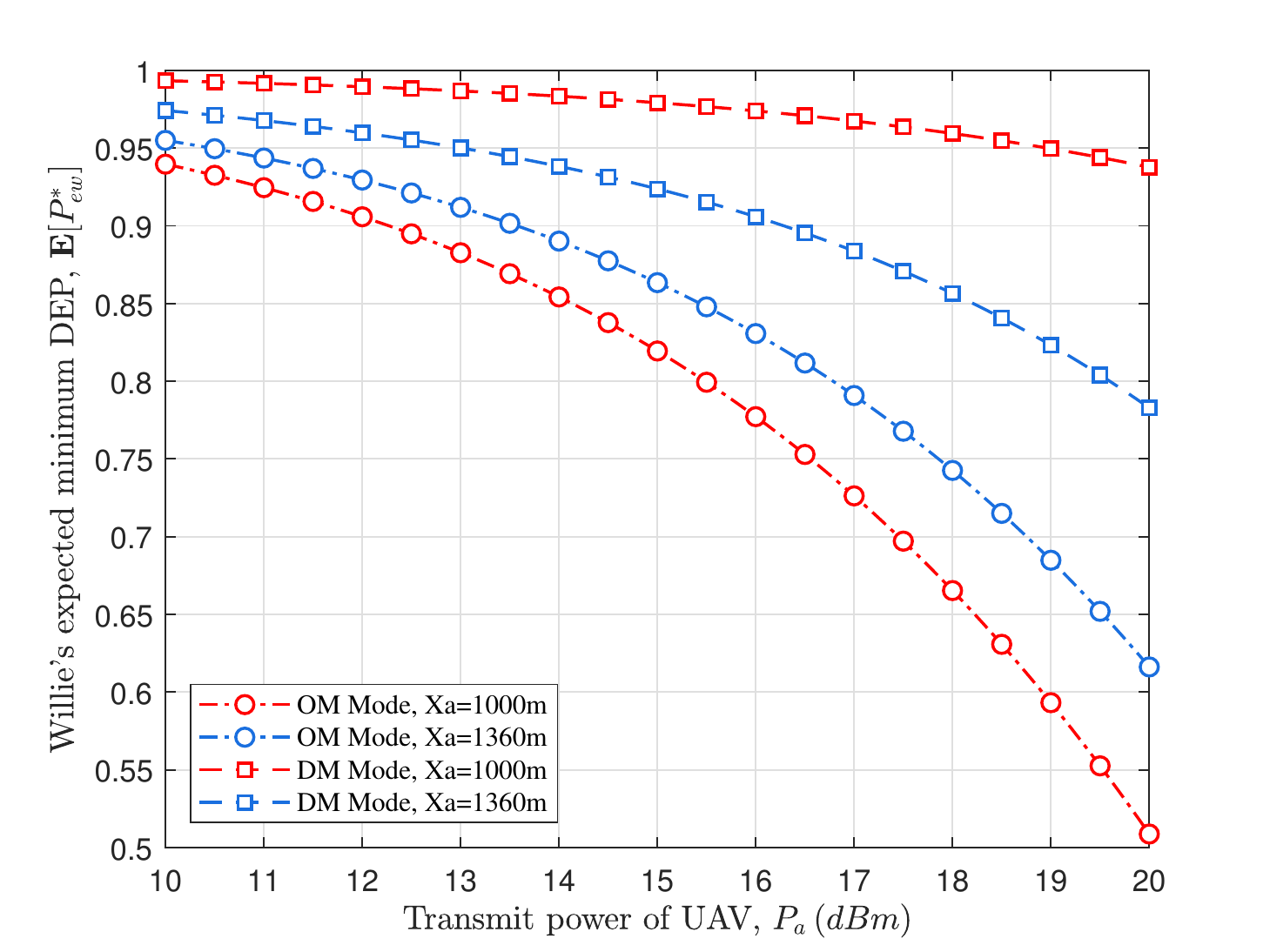}
        \vspace{-0.3cm}
        \caption{The expected minimum DEP $\mathbb{E}[P_{ew}^{*}]$ vs. transmission power $P_a$.} 
    \label{fig4_2}
    \end{minipage}
     \vspace{-0.5cm}
\end{figure}

\begin{figure}[t]
    \centering
    \begin{minipage}{0.48\linewidth}
        \centering
        \includegraphics[width=\linewidth]{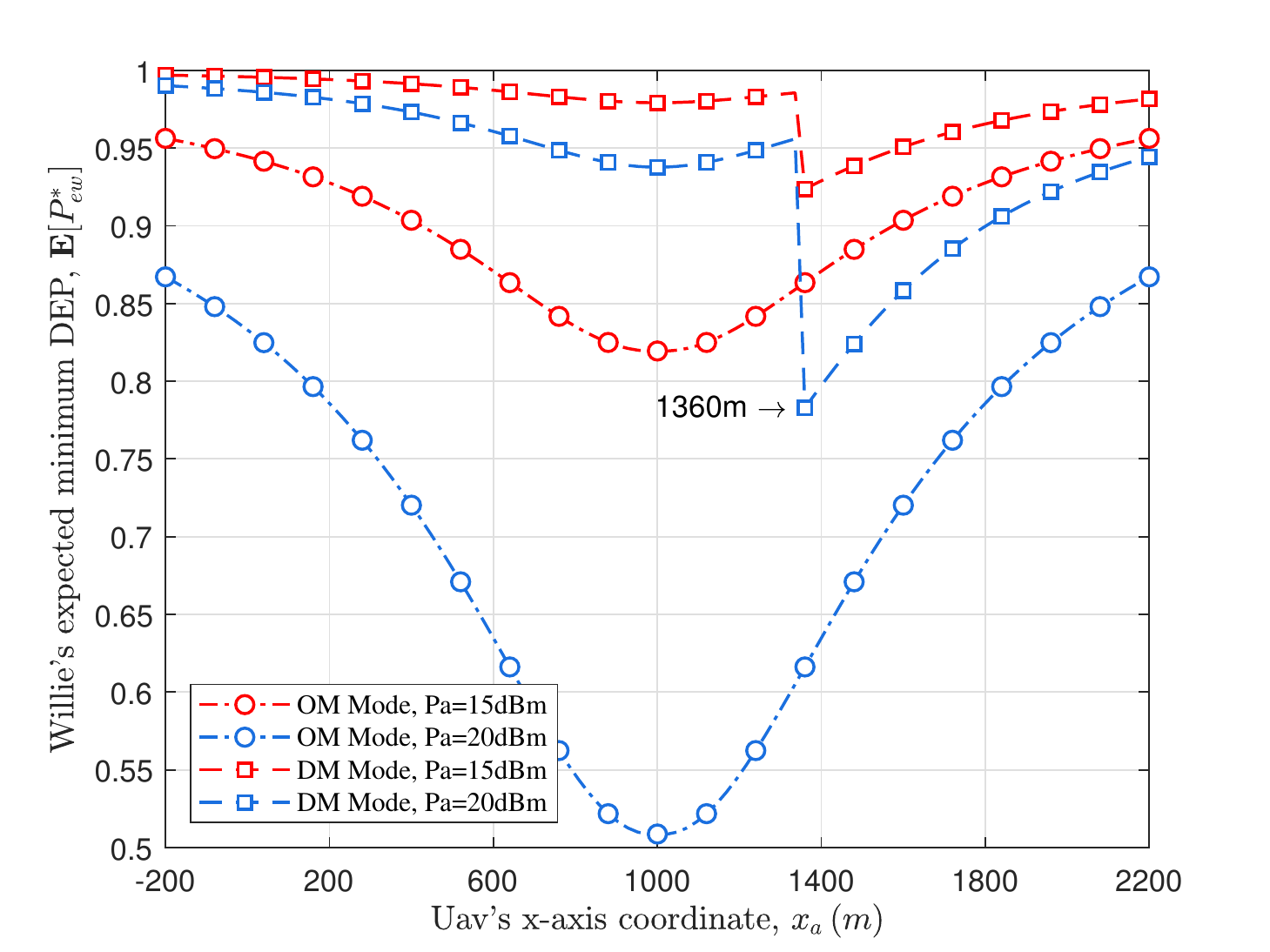}
        \vspace{-0.3cm}
       \caption{The expected minimum DEP $\mathbb{E}[P_{ew}^{*}]$ vs. horizontal position $x_a$.} 
    \label{fig4_3}
    \end{minipage}\hfill
    \begin{minipage}{0.48\linewidth}
        \centering
        \includegraphics[width=\linewidth]{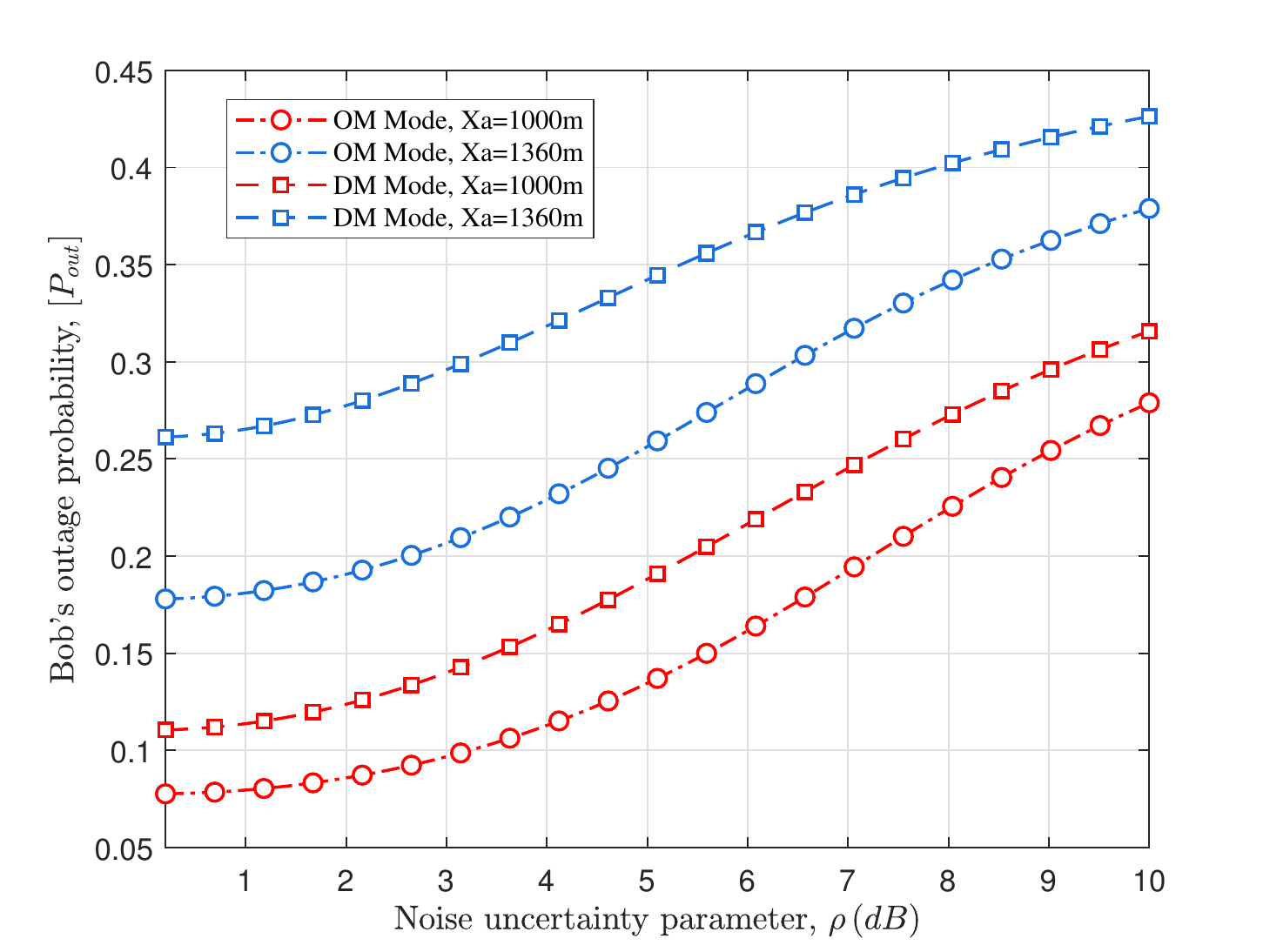}
        \vspace{-0.3cm}
        \caption{Outage probability $P_{out}$ vs. noise uncertainty parameter $\rho$.} 
      \label{fig5_1}
    \end{minipage}
     \vspace{-0.5cm}
\end{figure}

We first explore the impact of parameter $\rho$ which is used to quantify the level of the noise uncertainty on the expected minimum DEP $\mathbb{E}[P_{ew}^{*}]$ at Willie. As depicted in Fig.~\ref{fig4_1}, $\mathbb{E}[P_{ew}^{*}]$ keeps increasing as $\rho$ increases. This is because the greater the level of noise uncertainty, the more conducive to hide the message transmission of Alice. We also can observe that, for a given position of Alice, i.e., $x_a=1000m$ and $x_a=1360$m, the DM transmission mode is superior to the OM transmission mode in terms of covert performance. Due to the fast fading characteristic of mmWave, the probability of misjudgment under the DM mode is higher than that in the OM mode. From Fig.~\ref{fig4_1}, we can further observe that, for the DM transmission mode, $\mathbb{E}[P_{ew}^{*}]$ of Willie is higher when $x_a=1000m$ than that one when $x_a=1360m$, which is opposite of the OM transmission mode. The reason behind the phenomena can be explained that, Alice always steers her main lobe of the DM antenna to Bob, when $x_a=1360m$, Willie is also located in the boresight direction of the main lobe, such that he can obtain more power from Alice, thereby reducing $\mathbb{E}[P_{ew}^{*}]$.

We then investigate the impact of the transmission power of Alice $P_a$ on the expected minimum DEP $\mathbb{E}[P_{ew}^{*}]$ with $\rho=2$dB and $x_a=\{1000m, 1360m\}$. As shown in Fig.~\ref{fig4_2}, it can be observed that $\mathbb{E}[P_{ew}^{*}]$ is monotonically decreasing with respect to $P_a$. That is because as the transmission power increases, it is not conducive to hiding the information in the background noise. Also, as $P_a$ increases, the expected minimum DEP performance in OM transmission mode degrades faster than in DM transmission mode. Because the mmWave attenuates faster than the microwave, and the received power at Willie remains lower under the same transmission power.

To explore the impact of the position of Alice $x_a$ on the expected minimum DEP $\mathbb{E}[P_{ew}^{*}]$, we summarize in Fig.~\ref{fig4_3} to show how $\mathbb{E}[P_{ew}^{*}]$ varies with $x_a$ with $P_a=\{15\text{dBm}, 20\text{dBm}\}$. From Fig.~\ref{fig4_3}, we can see that $\mathbb{E}[P_{ew}^{*}]$ first decreases as $x_a$ increases from $-200$m to $1000$m, and then increases after $1000$m under the OM transmission mode. It is due to the fact that Alice just moves right above Willie, the distance between them is the closest, and Willie can detect the transmission behavior of Alice more accurately. For the DM transmission mode, we can see that when $x_a=1360m$, $\mathbb{E}[P_{ew}^{*}]$ decreases sharply to the minimal value and then gradually increase as $x_a$ increases. This interesting phenomenon can be explained as when $x_a=1360$m, Willie is just in the half-power beam width direction of the antenna array of link Alice $\to$ Bob and he can obtain more power benefiting from the main lobe antenna gain of Alice. But as $x_a$ increases, the distance between Willie and Alice will increase and the received power at Willie will decrease causing an increasing $\mathbb{E}[P_{ew}^{*}]$.

\subsection{Analysis of Outage Probability}

\begin{figure}[t]
    \centering
    \begin{minipage}{0.48\linewidth}
        \centering
        \includegraphics[width=\linewidth]{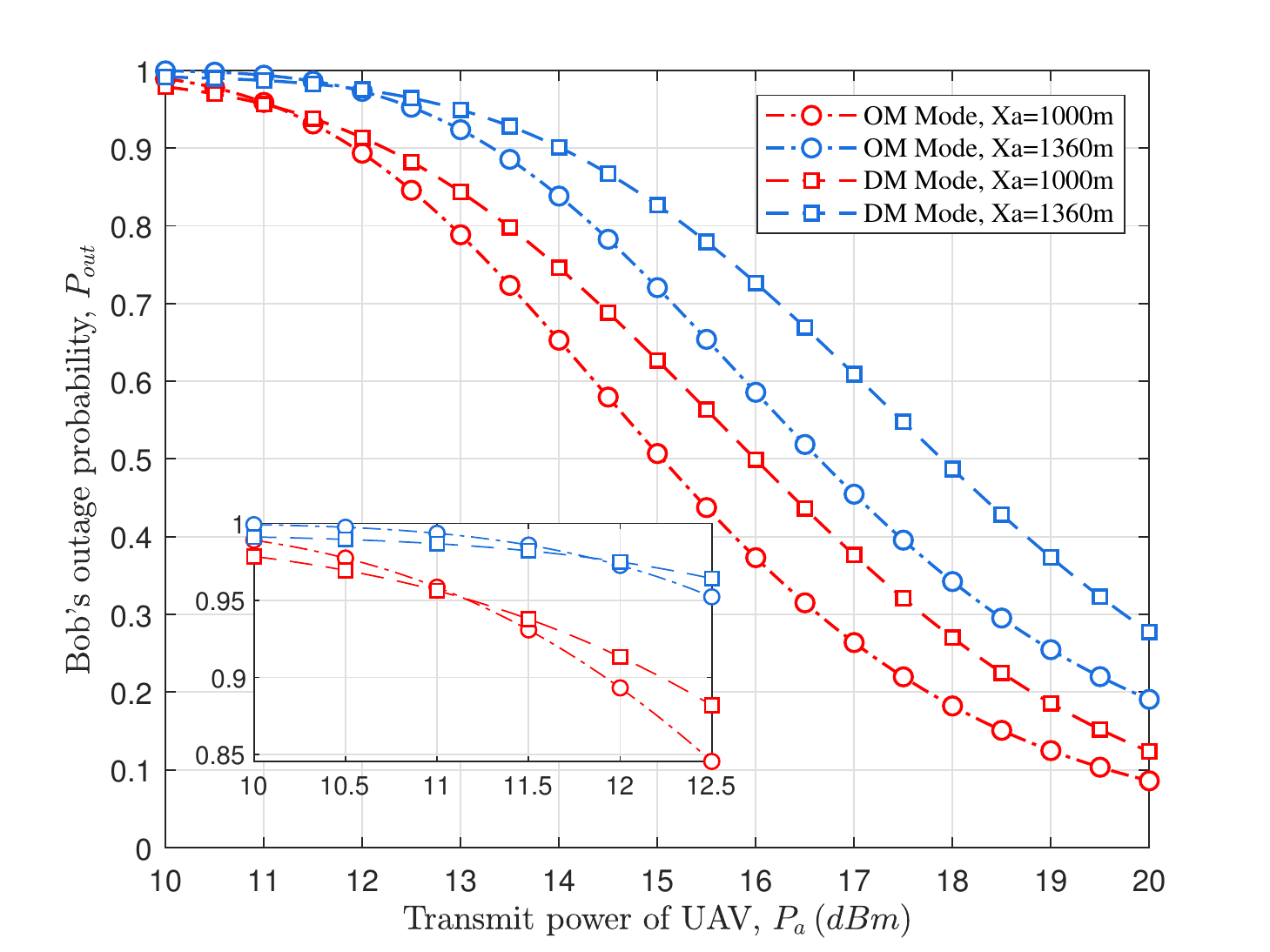}
        \vspace{-0.3cm}
        \caption{Outage probability $P_{out}$ vs. transmission power $P_a$.} 
    \label{fig5_2}
    \end{minipage}\hfill
    \begin{minipage}{0.48\linewidth}
        \centering
        \includegraphics[width=\linewidth]{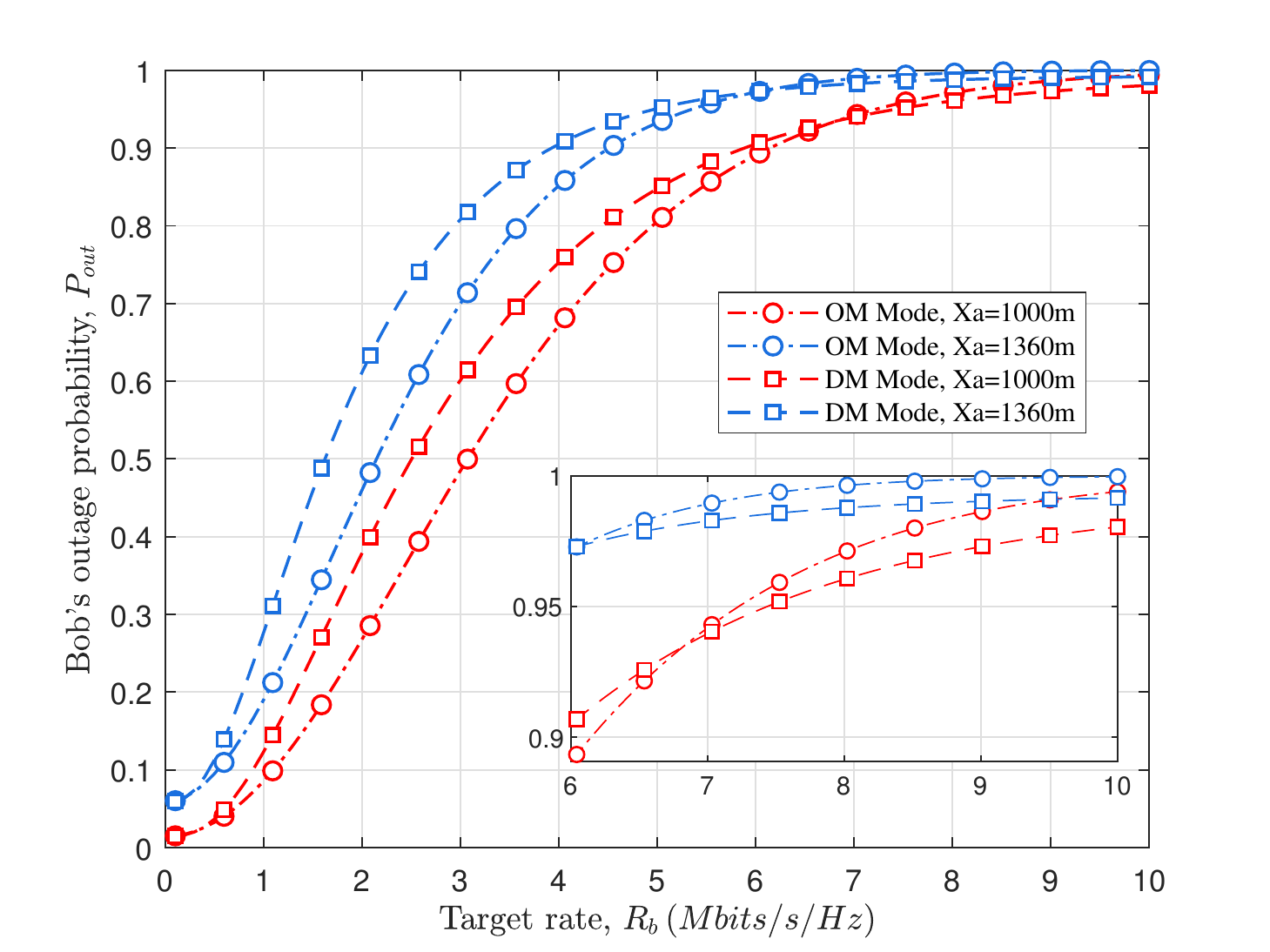}
        \vspace{-0.3cm}
        \caption{Outage probability $P_{out}$ vs. target rate $R_b$. }
    \label{fig5_3}
    \end{minipage}
    \vspace{-0.5cm}
\end{figure}

We plot Fig.~\ref{fig5_1} to explore the impact of noise uncertainty parameter $\rho$ on the outage probability $P_{out}$ under both the OM and DM transmission modes. From Fig.~\ref{fig5_1}, we can see that as $\rho$ increases, $P_{out}$ gradually increases. A larger $\rho$ leads to a lower SNR at Bob such that the outage probability is higher. Note that we also can find that under a given $x_a$ and $\rho$, the OM transmission mode outperforms the DM transmission mode in terms of the outage probability performance. This phenomenon can be attributed to the fast fading of mmWave resulting in a small SNR at Bob.

Fig.~\ref{fig5_2} is used to study the impact of transmission power $P_{a}$ on the outage probability $P_{out}$. We can see from Fig.~\ref{fig5_2} that the outage probability $P_{out}$ monotonically decreases with respect to $P_a$. Obviously, a larger $P_a$ will lead to a higher SNR resulting in a lower outage probability. A further careful observation of Fig.~\ref{fig5_2} indicates that when Alice transmits with a small $P_a$, $P_{out}$ under the OM transmission mode is higher than the one under the DM transmission mode, but once $P_a$ exceeds a certain value, $P_{out}$ under the OM transmission mode is lower than the one under the DM transmission mode. Note that when Alice transmits with a lower power, mmWave directional antenna can offset a tiny part of channel fading. Thus, the outage probability under the DM transmission mode is lower than that one under the OM transmission mode. But, the channel fading of mmWave is much faster than the microwave, such that when Alice transmits with a big power, the received power of Bob under the DM transmission mode than the one under the OM transmission mode, resulting in a higher outage probability.

To further study the changing trend of the outage probability $P_{out}$ varying with the target rate $R_b$ under both transmission modes, we draw Fig.~\ref{fig5_3}. We can observe that as the target rate $R_b$ increases, the outage probability $P_{out}$ also increases. It can be easily explained based on the definition of the outage probability. Besides, it also implies that there is a trade-off relationship between $P_{out}$ and $R_b$ which validates the rationality of optimizing $R_b$ in this work.

\subsection{Analysis of ECR and CSC}

\begin{figure} [tb]
    \centering
		\subfigure[ECR $\bar{R}^{*}_{ab}$]{
		\includegraphics[width=0.22\textwidth]{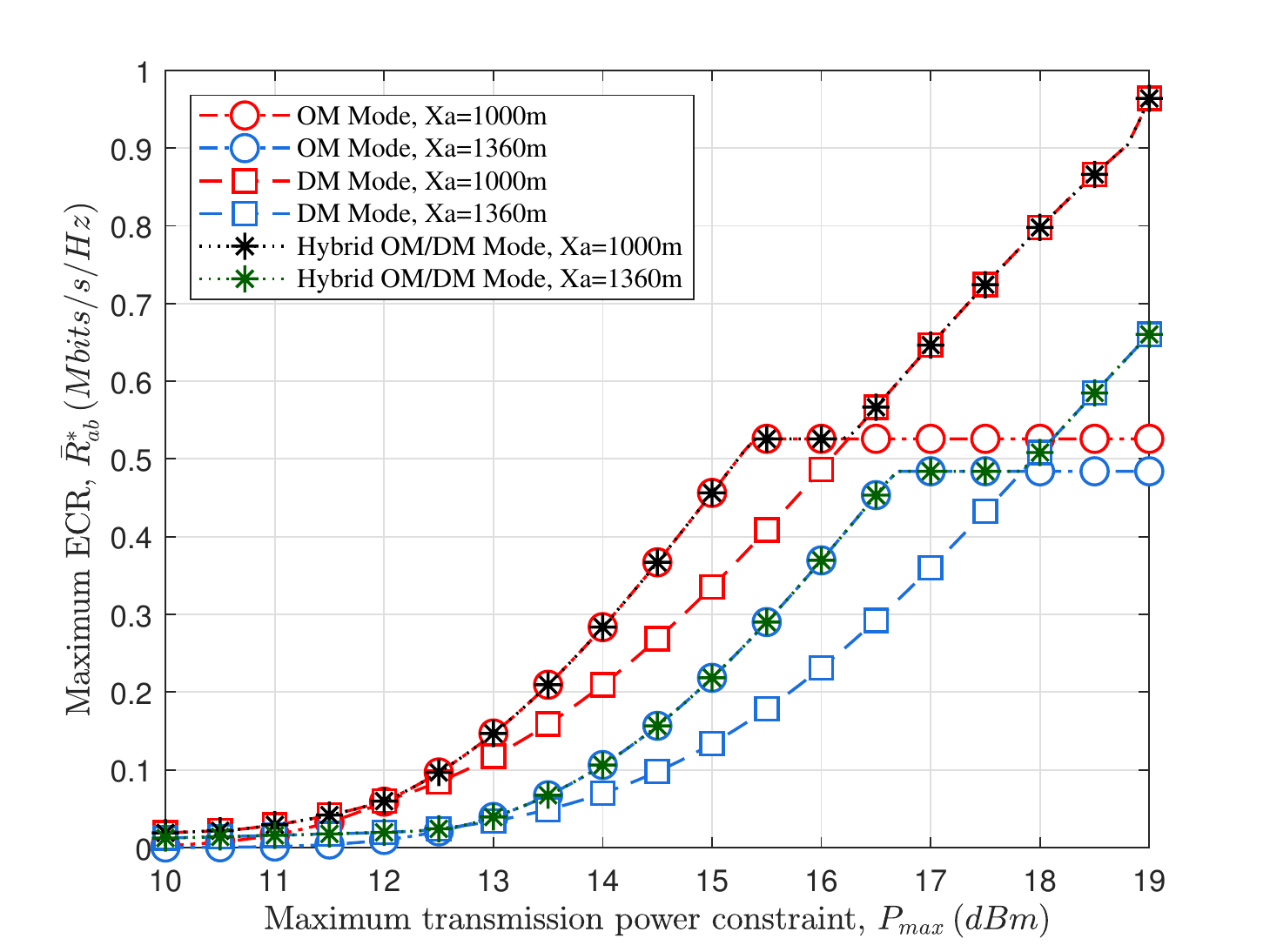} \label{fig61}
		}
		\subfigure[CSC $\bar{C}^{*}_{ab}$]{
		\includegraphics[width=0.22\textwidth]{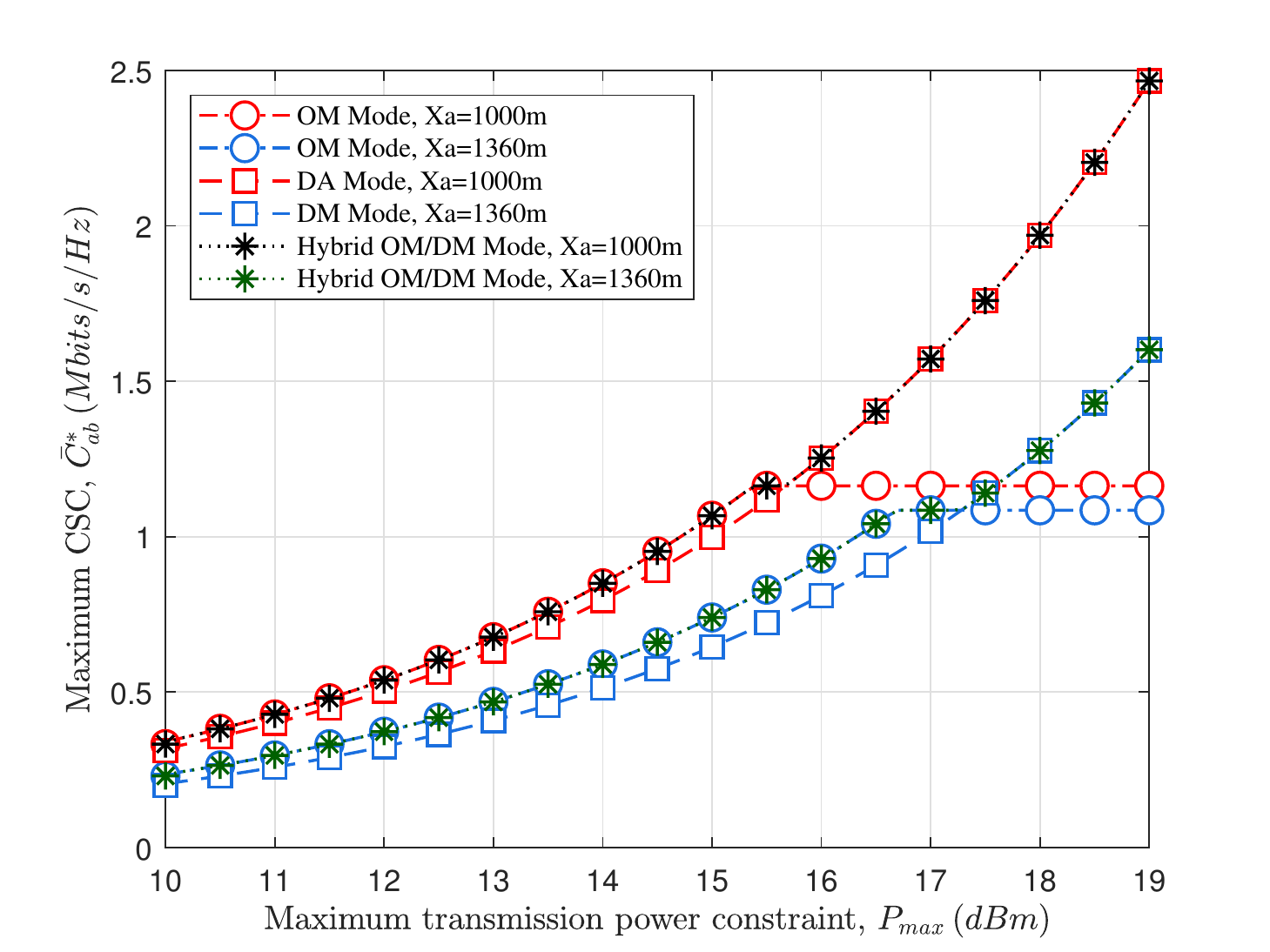} \label{fig62} 
		}
\vspace{-0.3cm}
    \caption{Covert performance vs. maximum transmission power constraint $P_{max}$ with $x_a \in\{1000m, 1360m\}$. }
    \label{fig6}
        \vspace{-0.5cm}
\end{figure}

To explore the impact of the maximum power constraint $P_{max}$ on the maximum ECR $\bar{R}_{ab}^{*}$ and CSC $\bar{C}_{ab}^{*}$ under both two modes, we summarize in Fig.~\ref{fig6} how they vary with $P_{max}$ with $x_a=1000$m and $x_a=1360$m, respectively. From Fig.~\ref{fig61}, we can see that as $P_{max}$ increases, $\bar{R}_{ab}^{*}$ gradually increases and then tends to a constant under the OM transmission mode. It can be explained as that a larger $P_{max}$ can allow Alice uses more power to transmit information resulting in a smaller $P_{out}$. But the covert performance constraint limits that $P_a$ cannot increase all the time. Thus, when $P_{max}$ is enough large, $\bar{R}_{ab}^{*}$ reaches its maximum value and keeps constant. For the DM transmission mode, due to the fast fading of mmWave, even given a large $P_{max}$, the received power at Willie is still small and the DEP is still high. Thus, Alice can still increase $P_a$ to improve the effective covert rate. The corresponding analysis can be applied to illustrate the phenomenon of the covert capacity $\bar{C}_{ab}^{*}$ in Fig.~\ref{fig62}, we omit here.

\begin{figure}[tb]
    \centering
		\subfigure[ECR $\bar{R}^{*}_{ab}$]{
		\includegraphics[width=0.225\textwidth]{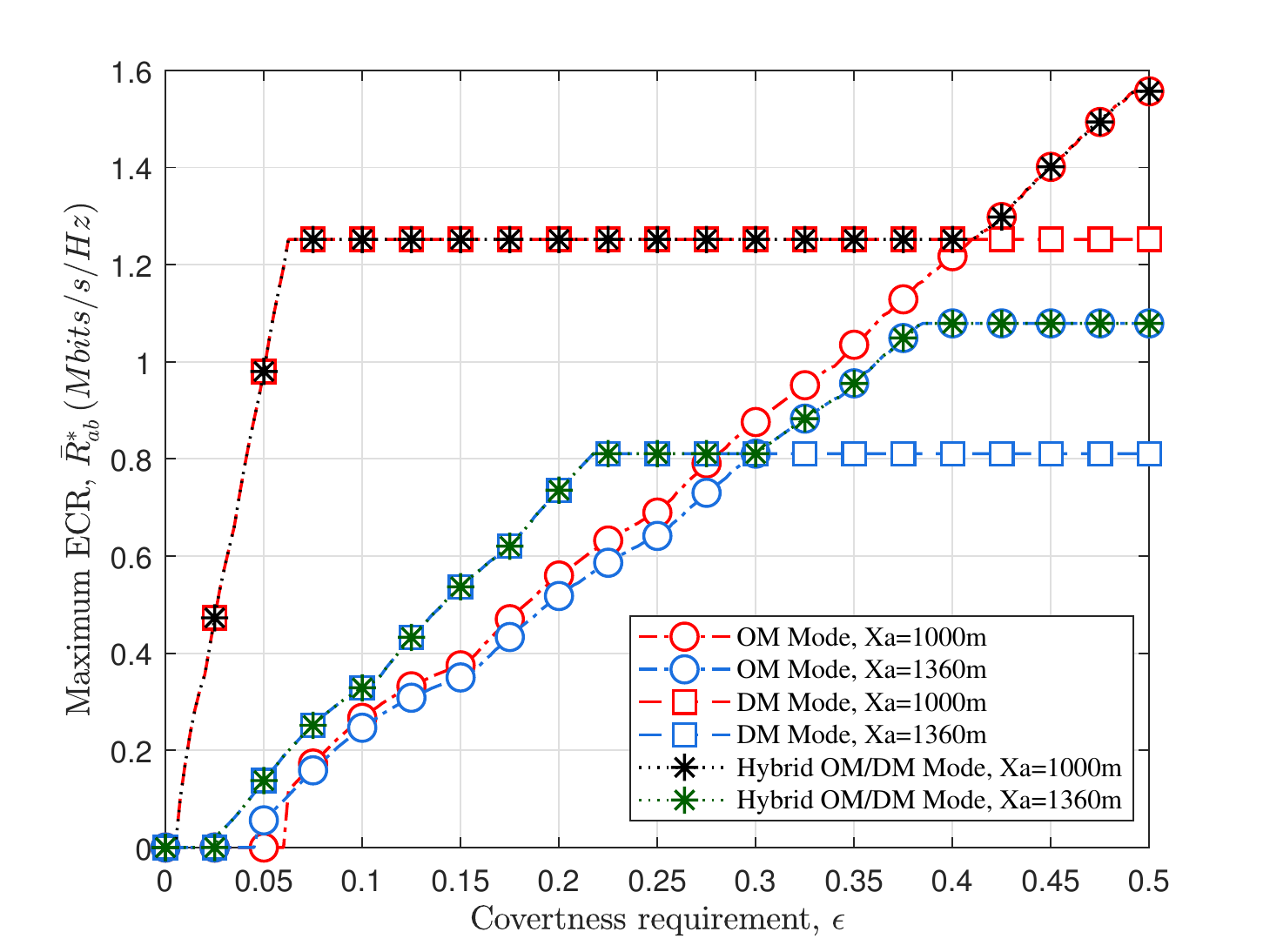}\label{fig:CR}
		}
		\subfigure[CSC $\bar{C}^{*}_{ab}$]{
		\includegraphics[width=0.225\textwidth]{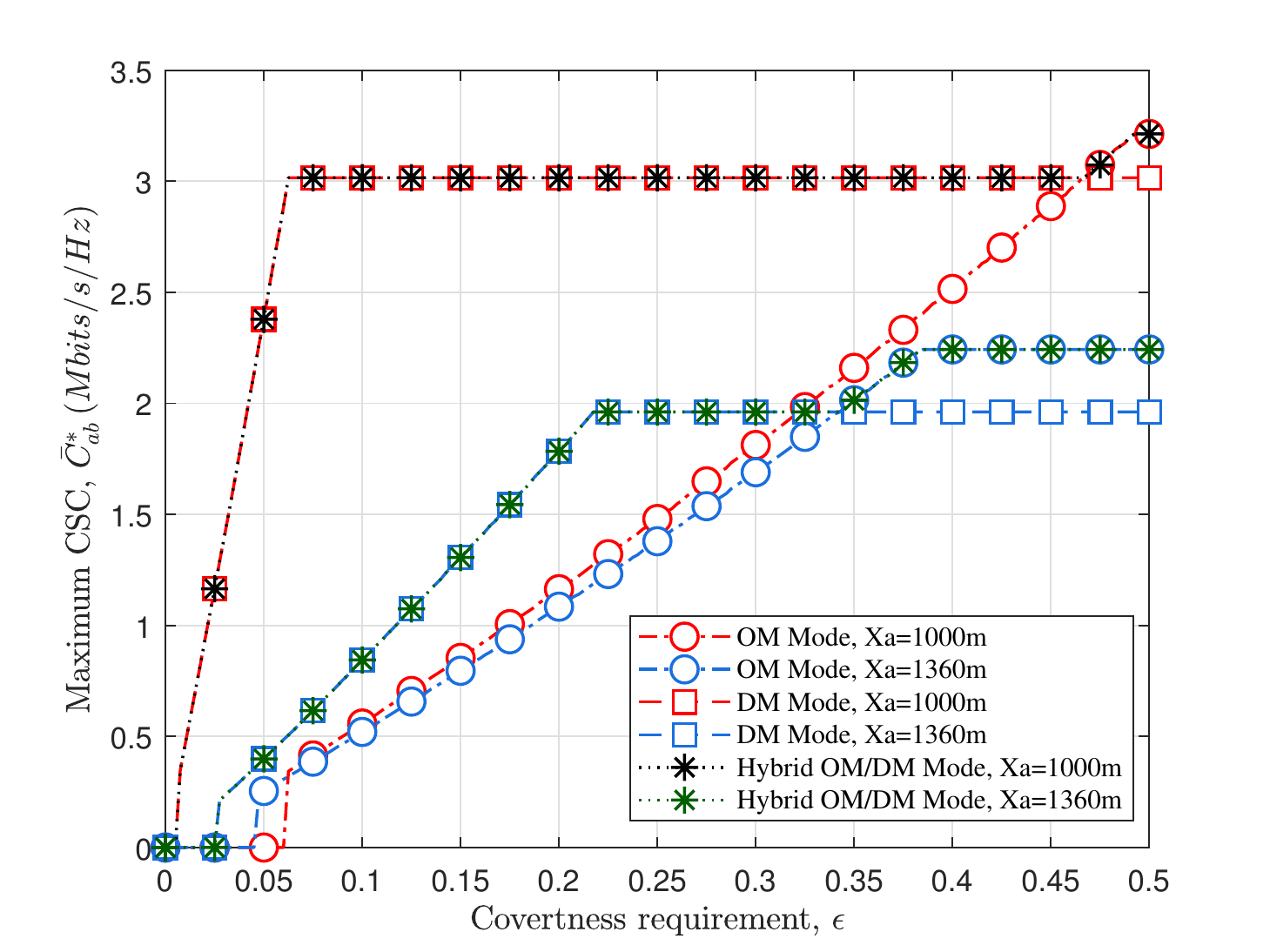}\label{fig:CC} 
		}
\vspace{-0.3cm}
    \caption{Covert performance vs. covertness constraint $\epsilon$ with $x_a \in\{1000m, 1360m\}$ and $P_{max}=20$dBm.}
    \label{fig:covert_epsilon}
    \vspace{-0.5cm}
\end{figure}

We then investigate the impact of the covertness constraint $\epsilon$ on the maximum effective covert rate $\bar{R}_{ab}^{*}$ and covert capacity $\bar{C}_{ab}^{*}$. The results are summarized in Fig.~\ref{fig:covert_epsilon}. We can observe from Fig.~\ref{fig:covert_epsilon} that as $\epsilon$ increases, both $\bar{R}^{*}_{ab}$ and $\bar{C}^{*}_{ab}$ first increase and then tends to remain unchanged. This can be explained as follows. As the covert constraints are gradually relaxed, Alice can adopt a large $P_a$ to transmit the covert information and thus both $\bar{R}_{ab}^{*}$ and $\bar{C}_{ab}^{*}$ increase. But when $P_a$ is large enough, the transmission power remains unchanged due to the maximum power constraint $P_{max}$. We further observe from Fig.~\ref{fig:CR} and Fig.~\ref{fig:CC} that under the hybrid mode, adopting the OM transmission mode can achieve better performance when the covertness restrictions are relaxed enough, while adopting the DM transmission mode is better when the covertness constraint is strict.

\begin{figure} [tb]
    \centering
    \includegraphics[width=0.3\textwidth]{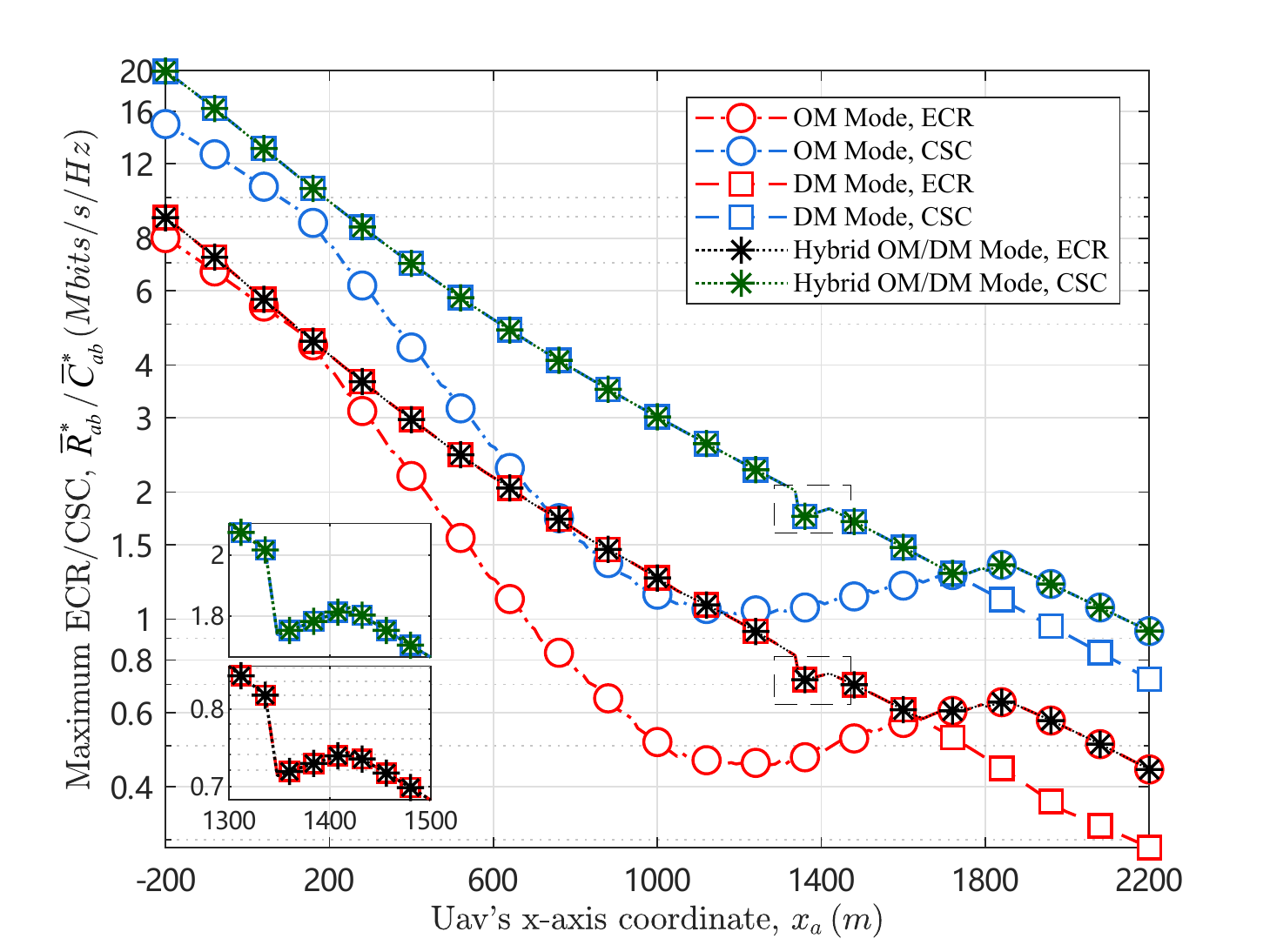}
    \vspace{-0.3cm}
    \caption{Covert performance vs. horizontal position of UAV $x_a$.} 
    \label{fig:covert_xa}
      \vspace{-0.5cm}
\end{figure}

Finally, we investigate the performance of different $x_a$ during the movement of Alice. We summarized in Fig.~\ref{fig:covert_xa} how the covert performance varies with $x_a$ for a setting of $\rho=2$dB and $P_{max}=20$dBm. We can see from Fig.~\ref{fig:covert_xa} that during Alice's movement from $(-200,0,500)$ to $(2200,0,500)$, the covert performance first decreases and then increases and finally decreases under both transmission modes. The interesting behavior can be explained as follows. When Alice moves away from Bob and closes to Willie, due to the negative effects brought by the path attenuation and covert constraint, both $R^{*}_{ab}$ and $\bar{C}^{*}_{ab}$ will decrease quickly. When Alice moves away from Willie, the power received by Willie decreases, and it can increase $P_a$ appropriately to improve the transmission efficiency. When Alice moves far away from Willie and Bob enough, Alice can adopt a larger transmission power satisfying the covertness constraint, however, it cannot offset the negative impact of large-scale attenuation on transmission performance. Thus, both $\bar{R}^{*}_{ab}$ and $\bar{C}^{*}_{ab}$ will decrease. Moreover, from Fig~\ref{fig:covert_xa}, we can see that when Alice is far away from Willie, the DM transmission mode outperforms the OM transmission mode in terms of both $\bar{R}^{*}_{ab}$ and $\bar{C}^{*}_{ab}$, but when Alice moves a certain distance away from Willie, the OM transmission mode is better. On the one hand, the power received by Bob is smaller due to the rapid attenuation of mmWave; on the other hand, Willie is in the main lobe radiation range such that Willie has good detection performance. By comparing the two modes, we can derive the optimal transmission mode as Fig.~\ref{fig:covert_xa} in the hybrid microwave/mmWave A2G Systems.


\section{Conclusion}\label{sec_conclusion}

This paper investigated the covert communication in a hybrid Microwave/mmWave A2G wireless communication system. Based on our theoretical performance analysis and covert performance optimization under both OM and DM transmission modes, we proposed a new hybrid OM/DM transmission mode for covert performance enhancement. The results in this paper revealed that the hybrid transmission mode can lead to a significant improvement of overt performance than the pure OM or DM mode in the concerned A2G system. It is expected that this work can provide meaningful insights into the covert communication scheme design in more general and more complicated UAV networks, e.g., UAV with multiple band antennas, UAV swarm, etc.

\begin{appendices}

\section{Proof of Lemma~\ref{lemma:calu2}} \label{nnap1}
Here, we provide the calculation steps for (\ref{Eq:calu2}) as follows
\begin{equation} \label{nneqa1}
\begin{aligned}
\int_{0}^{a}\!\! xf_{|h^{\!o,\mathbb{B}}_{ aw}|^{2}}(x)dx =xF_{|h^{o,\mathbb{B}}_{aw}|^{2}}(x)\textbf{\Big|}_{0}^{a}-\int_{0}^{a}\!\!F_{|h^{o,\mathbb{B}}_{aw}|^{2}}(x)dx.
\end{aligned}
\end{equation}
Similarly to (\ref{2eq81_NJ}), we can derive the first term on the right as
\begin{equation} \label{nnEq_ap1A}
\begin{aligned}
 xF_{|h^{ o,\mathbb{B}}_{aw}|^{2}}(x)\textbf{\Big|}_{0}^{a}=\!a\!\left(\!1\!\!-\!\exp{\!\left(\!-\left[2a (k_{\mathbb{B}}\!+\!1)\right]^\frac{\nu(\sqrt{2k_{\mathbb{B}}})}{2}e^{\mu(\sqrt{2k_{\mathbb{B}}})}\!\right)}\!\right).
\end{aligned}
\end{equation}
Then, we derive the second term on the right as
\begin{align} \label{nneqa2}
&\!\int_{0}^{a}\!\!\!\!F_{|h^{ o,\mathbb{B}}_{ aw}|^{2}}(x)dx\!=\!\!\int_{0}^{a}\!\!\!1\!-\!\exp{\left(\!-[2x(k_{\mathbb{B}}\!+\!1)]^\frac{\nu(\sqrt{2k_{\mathbb{B}}})}{2}e^{\mu(\sqrt{2k_{\mathbb{B}}})}\!\right)}dx \notag \\
&=a-\int_{0}^{a}\exp{\left(-[2x(k_{\mathbb{B}}+1)]^\frac{\nu(\sqrt{2k_{\mathbb{B}}})}{2}e^{\mu(\sqrt{2k_{\mathbb{B}}})}\right)}dx \notag \\
& \overset{(a)}{=}a-\frac{\gamma\left ( \frac{2}{\nu(\sqrt{2k_{\mathbb{B}}})}, [2a(k_{\mathbb{B}}+1)]^\frac{\nu(\sqrt{2k_{\mathbb{B}}})}{2} e^{\mu(\sqrt{2k_{\mathbb{B}}})}\right)}{(k_{\mathbb{B}}+1)\nu(\sqrt{2k_{\mathbb{B}}})e^{\frac{2\mu(\sqrt{2k_{\mathbb{B}}})}{\nu(\sqrt{2k_{\mathbb{B}}})}}},
\end{align}
where step (a) is according to $\int_{0}^{u}x^{m}e^{-bx^{n}}dx=\frac{\gamma(v,bu^{n})}{nb^{v}}, v=\frac{m+1}{n}$ as in \cite[Eq. (3.381.8)]{gradshteyn2014table}. Substituting (\ref{nnEq_ap1A}) and (\ref{nneqa2}) into (\ref{nneqa1}), we can obtain (\ref{Eq:calu2}).

\section{Proof of Lemma~\ref{lemma:calu}} \label{ap1}
Here, we provide the calculation steps for (\ref{Eq:calu}) as follows
\begin{equation} \label{eqa1}
\begin{aligned}
\int_{0}^{a}\!\!\!\! xf_{|h^{\! m,\mathbb{B}}_{\! aw}|^{2}}(x)dx \!=xF_{|h^{\! m,\mathbb{B}}_{\! aw}|^{2}}(x)\textbf{\Big|}_{0}^{a}\!-\!\int_{0}^{a}\!\!F_{|h^{\! m,\mathbb{B}}_{\! aw}|^{2}}\!(x)dx.
\end{aligned}
\end{equation}
According to (\ref{eq10}), we can derive the first term on the right as
\begin{equation} \label{Eq_ap1A}
\begin{aligned}
 xF_{|h^{ m,\mathbb{B}}_{aw}|^{2}}(x)\textbf{\Big|}_{0}^{a}=a\sum\nolimits_{r=0}^{S_{\mathbb{B} }}\binom{S_{\mathbb{B} }}{r}(-1)^{r}e^{-r\xi _{\mathbb{B}}a}.
\end{aligned}
\end{equation}
Similarly, the second term on the right can be derived as
\begin{align} \label{eqa2}
\int_{0}^{a}&F_{|h^{ m,\mathbb{B}}_{aw}|^{2}}(x)dx  =\int_{0}^{a}\sum_{r=0}^{S_{\mathbb{B} }}\binom{S_{\mathbb{B} }}{r}(-1)^{r}e^{-r\xi _{\mathbb{B}}x}dx \notag \\
& =\int_{0}^{a}\left[1+\sum\nolimits_{r=1}^{S_{\mathbb{B} }}\binom{S_{\mathbb{B} }}{r}(-1)^{r}e^{-r\xi _{\mathbb{B}}x}\right]dx\notag \\
&=a+\sum\nolimits_{r=1}^{S_{\mathbb{B} }}\binom{S_{\mathbb{B} }}{r}\frac{(-1)^{r}}{r\xi _{\mathbb{B}}}\left(1-e^{-r\xi _{\mathbb{B}}a}\right).
\end{align}
Then, substituting (\ref{Eq_ap1A}) and (\ref{eqa2}) into (\ref{eqa1}), we can obtain (\ref{Eq:calu}).

\end{appendices}

\bibliography{bibfile}
\bibliographystyle{IEEEtran}

\end{document}